\newtheorem{theorem}{Theorem}[section]
\newtheorem{definition}[theorem]{Definition}
\newtheorem{assumption}[theorem]{Assumption}
\newtheorem{lemma}[theorem]{Lemma}
\newtheorem{remark}[theorem]{Remark}
\newtheorem{example}[theorem]{Example}
\newtheorem{proposition}[theorem]{Proposition}
\newcommand{\naturals}{\mathbb{N}}
\newcommand{\real}{\mathbb{R}}
\newcommand{\cplx}{\mathbb{C}}
\newcommand{\range}{\mathcal{R}}
\newcommand{\Ac}{\mathcal{A}}
\newcommand{\Dc}{\mathcal{D}}
\newcommand{\Fc}{\mathcal{F}}
\newcommand{\Kc}{\mathcal{K}}
\newcommand{\Lc}{\mathcal{L}}
\newcommand{\Mc}{\mathcal{M}}
\newcommand{\Nc}{\mathcal{N}}
\newcommand{\Pc}{\mathcal{P}}
\newcommand{\Sc}{\mathcal{S}}
\newcommand{\until}[1]{\{1,\dots,#1\}}
\newcommand{\dx}{D(X)}
\newcommand{\dy}{D(Y)}
\newcommand{\dxs}{D(X_s)}
\newcommand{\dys}{D(Y_s)}
\newcommand{\dxn}[1]{D(X_{#1})}
\newcommand{\dyn}[1]{D(Y_{#1})}
\newcommand{\tdx}{\tilde{D}(X)}
\newcommand{\tdy}{\tilde{D}(Y)}
\newcommand{\Span}{\operatorname{span}}
\newcommand{\rows}{{\operatorname{rows}}}
\newcommand{\cols}{{\operatorname{cols}}}
\newcommand{\rown}{{\operatorname{\sharp rows}}}
\newcommand{\coln}{{\operatorname{\sharp cols}}}
\newcommand{\basis}{{\operatorname{basis}}}
\newcommand{\dist}{{\operatorname{dist}}}
\newcommand{\ssd}{{\operatorname{SSD}}}
\newcommand{\cssd}{C_{\ssd}}
\newcommand{\Kssd}{K_\ssd}
\newcommand{\inn}{\Nc_{\operatorname{in}}} 
\newcommand{\outn}{\Nc_{\operatorname{out}}}
\newcommand{\flag}{{\operatorname{flag}}}
\newcommand{\new}[1]{{\color{blue} #1}}
\renewcommand{\new}[1]{#1}
\newcommand{\longthmtitle}[1]{\mbox{}{\textit{(#1):}}}
\newcommand{\setdef}[2]{\{#1 \; | \; #2\}}
\newcommand{\oprocendsymbol}{\hbox{$\square$}}
\newcommand{\oprocend}{\relax\ifmmode\else\unskip\hfill\fi\oprocendsymbol}
\title{\LARGE \bf Parallel Learning of Koopman Eigenfunctions and
  Invariant Subspaces For Accurate Long-Term Prediction\thanks{This
    work was supported by ONR Award N00014-18-1-2828.}\thanks{A
    preliminary version appeared as~\cite{MH-JC:20-acc} at the
    American Control Conference.}}
\author{Masih Haseli and Jorge Cort\'es
  \thanks{Masih Haseli and Jorge Cort\'es are with Department of
    Mechanical and Aerospace Engineering, University of California,
    San Diego, CA 92093, USA, {\tt\small
      \{mhaseli,cortes\}@ucsd.edu}}
}
\begin{document}

\maketitle

\begin{abstract}
  We present a parallel data-driven strategy to identify
  finite-dimensional functional spaces invariant under the Koopman
  operator associated to an unknown dynamical system.  We build on the
  Symmetric Subspace Decomposition (SSD) algorithm, a centralized
  method that under mild conditions on data sampling provably
  finds the maximal Koopman-invariant subspace and all Koopman
  eigenfunctions in an arbitrary finite-dimensional functional space.
  A network of processors, each aware of a common dictionary of
  functions and equipped with a local set of data snapshots,
  repeatedly interact over a directed communication graph. Each
  processor receives its neighbors' estimates of the invariant
  dictionary and refines its estimate by applying SSD with its local
  data on the intersection of the subspaces spanned by its own
  dictionary and the neighbors' dictionaries.  We identify conditions
  on the network topology to ensure the algorithm identifies the
  maximal Koopman-invariant subspace in the span of the original
  dictionary, characterize its time, computational, and communication
  complexity, and establish its robustness against communication
  failures.
\end{abstract}

\section{Introduction}\label{sec:introduction}
Advances in processing and computation have led to numerous
opportunities to enhance our understanding of complex dynamic
behaviors. These opportunities, in turn, have challenged researchers
to explore alternative representations of dynamical systems that take
advantage of novel technologies.  The Koopman operator is one such
representation: rather than reasoning over trajectories, the operator
describes the effect of the dynamics on functions defined over the
state space.  The operator is appealing for data-driven identification
of dynamical phenomena since it is linear, independently of the
structure of the underlying dynamics. This explains the significant
insights into the physical laws governing the system provided by its
eigenfunctions.  Linearity is an advantage over other data-driven
learning methods such as neural networks and statistical methods,
which often lead to highly nonlinear models.  However, the
infinite-dimensional nature of the Koopman operator prevents the use
of efficient algorithms developed for digital computers. One can
circumvent this by finding finite-dimensional functional spaces that
are invariant under the Koopman operator.
The action of the operator on such subspaces is exact, 
  enabling the use of fast linear algebraic methods with key
  implications for model reduction, stability, identification of
  stable/unstable manifolds, and control design.

\textit{Literature Review:} The Koopman operator is a linear operator
that represents the effect of a potentially nonlinear dynamics on a
linear functional space~\cite{BOK:31,BOK-JVN:32}.  The
eigendecomposition of the operator is particularly well suited to
provide a comprehensive description of dynamical behavior since the
Koopman eigenfunctions have linear temporal
evolutions~\cite{IM:05,MB-RM-IM:12}. As an example,~\cite{AM-IM:16}
provides several criteria for global asymptotic stability of
attractors based on Koopman eigenfunctions and their corresponding
eigenvalues.  Approximations of the Koopman operator and its
eigendecomposition enable identification of nonlinear dynamical
systems~\cite{AM-JG:16} and model reduction for nonlinear
systems~\cite{SK-FN-PK-HW-IK-CS-FN,AA-JNK:17}, which leads to easier
analysis of complex systems such as fluid
flows~\cite{CWR-IM-SB-PS-DSH:09}. The
  works~\cite{MK-IM-automatica:18,SP-SK:17,BH-XM-UV:18,EK-JNK-SLB:17,DG-DAP:17,GM-MC-XT-TM:19,AN-JSK:20,SHS-AN-JSK:20}
  provide data-driven control schemes based on the identification of
  finite-dimensional linear representations for the Koopman operator
  associated with nonlinear systems. The
  works~\cite{AS-JE-AR-JPC-RMD:19,AM-JB-MM:20} rely on symmetry to
  improve the accuracy of the Koopman approximation and characterize
  its observability properties.

%
To circumvent the challenges posed by the infinite-dimensional nature
of the Koopman operator,
significant research efforts have been devoted to find
finite-dimensional approximations.
Dynamic Mode Decomposition (DMD)~\cite{PJS:10,JHT-CWR-DML-SLB-JNK:13}
uses linear algebraic methods to form a linear model based on the
observed data. Extended Dynamic Mode Decomposition (EDMD) is a
generalization of DMD that lifts the states of the systems to a space
spanned by a dictionary of functions and finds a linear model by
solving a data-driven least-squares
problem~\cite{MOW-IGK-CWR:15}. Unlike DMD, EDMD provides a way to
analyze the effect of the dynamics on an arbitrary functional space.
EDMD converges~\cite{MK-IM:18} to the projection of the Koopman
operator on the subspace spanned by the dictionary as the number of
sampled data goes to infinity. This indicates the importance of the
chosen dictionary in capturing important information about the
operator. If the dictionary spans a Koopman-invariant subspace, then
the EDMD approximation is exact; otherwise, EDMD loses information
about the system and might not suitable for long-term prediction due
to errors in the approximation. This observation motivates the search
for methods to identify Koopman-invariant subspaces.

%
Since Koopman eigenfunctions automatically generate invariant
subspaces, a number of works have focused the attention on finding
them~\cite{SLB-BWB-JLP-JNK:16,EK-JNK-SLB:17}, including the use of
multi-step trajectory predictions~\cite{MK-IM:19} and sparsity
promoting methods~\cite{SP-NAM-KD:20} that prune the EDMD
eigendecomposition.  The latter relies on the fact that EDMD captures
all the Koopman eigenfunctions in the span of the dictionary,
cf.~\cite{MH-JC:20-tac}, but not all identified eigenfunctions using
EDMD are actually Koopman eigenfunctions. The literature contains
several methods based on neural networks that can approximate
  Koopman-invariant
  subspaces~\cite{QL-FD-EMB-IGK:17,NT-YK-TY:17,EY-SK-NH:19,SEO-CWR:19}.
None of the aforementioned methods provide guarantees for the
identified subspaces to be Koopman invariant. Our recent
work~\cite{MH-JC:20-tac} provides a necessary and sufficient condition
for the identification of functions that evolve linearly according to
the dynamics based on the application of EDMD forward and backward in
time, and establishes conditions on the density of data sampling to
ensure that the identified functions are Koopman eigenfunctions almost
surely. The work also introduces SSD algorithms which, should all data
be centrally available and under mild conditions on data
  sampling, provably find the maximal Koopman-invariant subspace in
the span of an arbitrary finite-dimensional functional space. By
  contrast, here we profit from parallelization to speed up
  computation and enable fast processing for large datasets and
  real-time applications.

\textit{Statement of Contributions:} We present a parallel data-driven
method to identify Koopman-invariant subspaces and eigenfunctions
associated with a (potentially nonlinear) unknown discrete-time
dynamical system. The proposed algorithm is compatible with parallel
processing hardware. Our starting point is a group of processors
communicating through a directed graph, with each processor aware of a
common dictionary of functions and equipped with a local set of data
snapshots acquired from the dynamics.  We introduce the Parallel
Symmetric Subspace Decomposition (P-SSD) algorithm to find the maximal
Koopman-invariant subspace and all the Koopman eigenfunctions in the
finite-dimensional linear functional space spanned by the dictionary.
The proposed strategy has each processor refine its estimate of the
invariant dictionary by iteratively employing the information received
from its neighbors to prune it.
We show that the P-SSD algorithm reaches an equilibrium in a finite
number of time steps for any (possibly time-varying) network topology
and carefully characterize the properties of the agents' dictionary
iterates along its execution, particularly in what concerns
monotonicity of the associated subspaces. We also establish that the
globally reachable processors in the communication digraph find the
same solution as the SSD algorithm would if all data was centrally
available at a single processor.  This allows us to conclude that the
P-SSD algorithm finds the maximal Koopman-invariant subspace in the
space spanned by the original dictionary if the network topology is
strongly connected.  Finally, we conclude by characterizing the
algorithm's time, computational, and communication complexity,
demonstrating its computational advantage over SSD, and showing
its robustness against communication failures and packet drops.
Simulations illustrate the superior performance of the proposed
strategy over other methods in the literature\footnote{Throughout
    the paper, we use the following notation.  We use $\naturals$,
    $\naturals_0$, $\real$, and $\cplx$, to denote the sets of
    natural, nonnegative integer, real, and complex numbers,
    resp. Given integers $a,b$ we use $a \bmod b$ to represent
    the remainder of division of $a$ by~$b$. Given a matrix $A \in
    \cplx^{m \times n}$, we represent its set of rows, set of columns,
    number of rows, and number of columns by $\rows(A)$, $\cols(A)$,
    $\rown(A)$, and $\coln(A)$, resp. In addition, we denote
    its transpose, conjugate transpose, pseudo-inverse, and range
    space by $A^T$, $A^H$, $A^\dagger$, and $\range(A)$. If $A$ is
    square, we denote its inverse by $A^{-1}$. Given $A \in \cplx^{m
      \times n}$ and $B \in \cplx ^{m \times d}$, we use $[A,B] \in
    \cplx^{m \times (n+d)}$ to represent the matrix created by
    concatenating $A$ and $B$. Given $v \in \cplx^n$, we denote its
    2-norm by $\|v\|_2 := \sqrt{v^H v}$. Given vectors $v_1,\ldots,
    v_k \in \cplx^n$, we denote by $\Span\{v_1,\ldots, v_k \}$, the
    set comprised of all vectors in the form of $c_1v_1+\cdots+c_nv_n$
    with $c_1,\dots, c_n \in \cplx$. $\angle(v,w)$ represents the
    angle between $v,w \in \real^n$. For sets $S_1$ and $S_2$, $S_1
    \subseteq S_2$ means that $S_1$ is a subset of $S_2$. In addition,
    we denote the intersection and union of $S_1$ and $S_2$ by $S_1
    \cap S_2$ and $S_1 \cup S_2$, resp. Given functions $f:S_2
    \to S_1$ and $g:S_3 \to S_2$, we denote their composition by $f
    \circ g: S_3 \to S_1$.}.

\section{Preliminaries}\label{sec:preliminaries}
In this section, we present a brief account of Koopman operator theory
and basic definitions from graph theory.

\subsubsection*{Koopman Operator}
Our exposition here follows~\cite{MB-RM-IM:12}. Consider the
discrete-time dynamical system
\begin{align}\label{eq:dymamical-sys}
  x^+ = T(x),
\end{align}
where $T: \Mc \to \Mc$ is defined over the state space $\Mc \subseteq
\real^n$. 
The Koopman operator associated
with~\eqref{eq:dymamical-sys} characterizes the effect of the dynamics
on functions (also known as observables) in a linear functional space
$\Fc$ defined from $\Mc$ to $\cplx$. If $\Fc$ is closed under
composition with $T$ (i.e., $ f \circ T \in \Fc$, for all $f \in
\Fc$), one can define the Koopman operator $\Kc: \Fc \to \Fc$
associated with~\eqref{eq:dymamical-sys} as
\begin{align*}
  \Kc(f) = f \circ T.
\end{align*}
Since $\Fc$ is a linear space, the Koopman operator is \emph{spatially
  linear}, i.e.,
\begin{align}\label{eq:Koopman-spatial-linear}
  \Kc (c_1 f_1+ c_2 f_2) = c_1 \Kc(f_1) + c_2 \Kc(f_2),
\end{align}
for every $f_1, f_2 \in \Fc$ and $c_1,c_2 \in \cplx$.
Unlike~\eqref{eq:dymamical-sys}, which describes the effect of the
dynamics on points in the state space, the Koopman operator describes
the effect of the dynamics on functions in $\Fc$. If $\Fc$ contains
functions that can describe the states of the system, the Koopman
operator fully describes the evolution of the states in time. One way
to enforce this richness criteria is to include the state observables
$f_i(x) := x_i$ in the space $\Fc$ for all $i \in \until{n}$. This
condition, together with its closedness under $T$, might force~$\Fc$
to be infinite dimensional.

Having a linear operator enables us to define the eigendecomposition
of the Koopman operator. Formally, the function $\phi \in \Fc$ is an
eigenfunction of the Koopman operator associated with eigenvalue
$\lambda \in \cplx$ if
\begin{align}\label{eq:Koopman-eigendecomposition}
  \Kc(\phi) = \lambda \phi.
\end{align}
The eigenfunctions of the Koopman operator evolve linearly in time,
i.e., $ \phi(x^+) =(\phi \circ T)(x) = \Kc(\phi)(x) = \lambda
\phi(x)$.  This \emph{temporal linearity} in conjunction with the
\emph{spatial linearity} in~\eqref{eq:Koopman-spatial-linear} render
the spectral properties of the Koopman operator a powerful tool in
analyzing nonlinear dynamical systems. Formally, given a function
\begin{align}\label{eq:function-modes}
  f = \sum_{i=1}^{N_k} c_i \phi_i,
\end{align}
where $\{\phi_i\}_{i=1}^{N_k}$ are Koopman eigenfunctions with
eigenvalues $\{\lambda_i\}_{i=1}^{N_k}$ and $\{c_i\}_{i=1}^{N_k}
\subset \cplx$, one can use the spectral properties of the Koopman
operator and describe the temporal evolution of $f$ according to the
dynamics as
\begin{align}\label{eq:function-evolution-Koopman}
  f(x(k)) = \sum_{i=1}^{N_k} c_i \lambda_i^k \, \phi_i (x(0)), \quad
  \forall k \in \naturals.
\end{align}
It is important to note that due to the infinite-dimensional nature of
the Koopman operator, one might need infinitely many eigenfunctions to
describe an arbitrary function in a linear manner. One way to avoid
this infinite-dimensionality problem is to analyze the functions in
finite-dimensional subspaces that are invariant under the application
of the Koopman operator. A subspace $\Sc \subseteq \Fc$ is
\emph{Koopman invariant} if $\Kc(f) \in \Sc$ for all $f \in
\Sc$. Moreover, $\Sc \subseteq \Pc$ is the \emph{maximal
  Koopman-invariant} subspace of $\Pc$ if $\Sc$ is Koopman invariant
and, for every Koopman-invariant subspace $\Lc \subseteq \Pc$, we have
$\Lc \subseteq \Sc$.

\subsubsection*{Graph Theory}
Our exposition here mainly
follows~\cite{FB-JC-SM:08cor,JL-ASM-AN-TB:17}. Given a set $V$
comprised of $m$ \emph{nodes} and a set $E \subseteq V \times V$
comprised of ordered pairs of nodes called \emph{edges}, the pair
$G=(V,E)$ defines a \emph{directed graph (digraph)} on nodes $V$ and
edges $E$. We say node $i$ is an \emph{in-neighbor} of node $j$ and
node $j$ is an \emph{out-neighbor} of node $i$ if $(i,j) \in E$.  We
denote by $\inn(i)$ and $\outn(i)$ the sets of in- and out-neighbors
of~$i$, resp.  A \emph{directed path} with \emph{length} $l$
is an ordered sequence of $l+1$ nodes such that the ordered pair of
every two consecutive nodes is an edge of the digraph.  A path is
\emph{closed} if its first and last nodes are the same. A node is
called \emph{globally reachable} if there is a directed path from
every other node to it. A digraph is \emph{strongly connected} if all
of its nodes are globally reachable. Given two nodes $i,j$ in a
digraph, the \emph{distance} from $i$ to $j$, denoted $\dist(i,j)$, is
the length of the shortest directed path from $i$ to $j$. If there is
no such directed path, the distance is $\infty$.  Given two digraphs
$G_1 =(V,E_1)$ and $G_2 = (V,E_2)$, we define their \emph{composition}
as $G_2 \circ G_1 = (V,E_\circ)$ such that $E_\circ =
\setdef{(i,j)}{\exists k \in V \text{ with } (i,k)\in E_1 \land (k,j)
  \in E_2}$. A finite sequence of digraphs $\{G_i = (V,E_i)\}_{i=1}^k$
is \emph{jointly strongly connected} if $G_k \circ \cdots \circ G_1$
is strongly connected. An infinite sequence of digraphs (or a
time-varying digraph) $\{G_i = (V,E_i)\}_{i=1}^\infty$ is
\emph{repeatedly jointly strongly connected} if there exists $l, \tau
\in \naturals$ such that for every $k \in \naturals_0$, the sequence
of digraphs $\{ G_{\tau +kl +j}\}_{j=0}^{l-1}$ is jointly strongly
connected.

\section{Problem Setup}\label{sec:problem-statement}
Given an arbitrary finite-dimensional functional space $\Pc$ defined
on a state space $\Mc$, our goal is to design efficient data-driven
algorithms that are able to identify the maximal Koopman-invariant
subspace in $\Pc$ associated with a dynamical system whose explicit
model is unknown. To do so, we rely on data collected about the
dynamical behavior of the system and aim to ensure that the proposed
methods are compatible with parallel processing hardware.  

We start by defining the concepts of data snapshots and dictionary. We
represent by $X, Y \in \real^{N \times n}$ matrices comprised of $N$
data snapshots acquired from system~\eqref{eq:dymamical-sys} such that
$x_i^T$ and $y_i^T$, the $i$th rows of $X$ and $Y$, satisfy
\begin{align*}
  y_i = T(x_i), \quad \forall i \in \until{N}.
\end{align*}
Moreover, we define a dictionary $D: \Mc \to \real^{1 \times N_d}$
\begin{align*}
  D(x) = [d_1(x), \ldots, d_{N_d}(x) ],
\end{align*}	
comprised of $N_d$ real-valued functions $d_1, \ldots, d_{N_d}$
defined on $\Mc$ with $\Span \{d_1, \ldots, d_{N_d} \} = \Pc$. Note
that, one can completely characterize any dictionary $\tilde{D}$ with
elements in $\Span\{d_1, \ldots, d_{N_d}\}$ by a matrix $C$ with $N_d$
rows as $\tilde{D}(x) = D(x) C$.  The effect of the dictionary on data
matrices~is
\begin{align*}
  D(X) = [D(x_1)^T,\ldots,D(x_N)^T]^T \in \real^{N \times N_d}.
\end{align*}

Throughout the paper, we operate under the following standard
assumption on dictionary matrices.

\begin{assumption}\longthmtitle{Full Column Rank Dictionary
    Matrices}\label{a:full-rank}
  The matrices $\dx$ and $\dy$ have full column rank.  \oprocend
\end{assumption}

Assumption~\ref{a:full-rank} requires the dictionary functions to be
linearly independent and the data snapshots to be diverse enough to
capture the important features of the dynamics.

We consider a group of $M$ processors or agents that communicate
according to a digraph~$G$.  Every agent is aware of the
dictionary. We assume the data snapshots $X$, $Y$ are distributed
among the processors, forming the dictionary snapshots $\dxn{i},
\dyn{i}$ for each agent $i \in \until{M}$ such that
\begin{align*}
  \bigcup_{i=1}^M \rows([\dxn{i}, \dyn{i}]) &= \rows([\dx, \dy]).
\end{align*}
Local data snapshots might be available at the agent as a result of
distributed acquisition or because the global data has been
partitioned among the processors. The latter can be done in different
ways, including partitioning in a way that minimizes the maximum time
taken for each agent to finish one iteration of calculation. For
homogeneous processors, this can be done by uploading the signature
data sets to agents and evenly distributing the rest of the data among
them.

The processors share a small portion of dictionary snapshots, $\dxs$
and $\dys$ with full column rank, called the \emph{signature
  dictionary matrices} (note that one can build signature dictionary
snapshots with no more than $2N_d$ rows according to
Assumption~\ref{a:full-rank}).  Hence, we have
\begin{align*}
    \rows([\dxs, \dys]) &\subseteq \rows([\dxn{i}, \dyn{i}]),
\end{align*}
for $i \in \until{M}$. Interestingly, for our forthcoming
results, the processors do not need to be aware of which data points
correspond to the signature snapshots.

\emph{Problem Statement:} Given a group of $M$ processors
communicating according to a network, the dictionary
$D=[d_1,\dots,d_{N_d}]$, and the data snapshots $X,Y \in \real^{N
  \times n}$, we aim to design a parallel algorithm that is able to
identify the dictionary $\tilde{D}$ spanning the maximal
Koopman-invariant subspace in $\Span\{d_1,\dots,d_{N_d}\}$. The
processors must only rely on the their local data and communication
with their neighbors to achieve this goal.

In general, the action of Koopman operator on the functional space
$\Pc = \Span\{d_1,\dots,d_{N_d}\}$ is not invariant, meaning that one
needs to project back to $\Pc$ the image under $\Kc$ of functions in
$\Pc$, thereby introducing error. Instead, the image of functions
belonging to the identified invariant subspace under the Koopman
operator remains in the subspace, avoiding the need for the projection
and therefore eliminating the source of approximation error. We
  note that efficient solutions to this problem could be employed in
  the construction of better dictionaries. Although beyond the scope
  of this paper, one could envision, for instance, augmenting the
  dictionary to increase its linear span of functions and then pruning
  it to eliminate the approximation error.  Throughout the paper, we
use data-driven methods that are not specifically designed to work
with data corrupted with measurement noise. Hence, one might
need to pre-process the data before applying the proposed algorithms.

\section{Parallel Symmetric Subspace Decomposition}

This section presents a parallel algorithm to identify the dictionary
$\tilde{D}$ spanning the maximal Koopman-invariant subspace in
$\Span\{d_1,\dots,d_{N_d}\}$, as stated in
Section~\ref{sec:problem-statement}.  We start by noting that the
invariance of $\tilde{D}$ implies that $ \range(\tilde{D}(x^+)) =
\range(\tilde{D}(x))$, for all $ x \in \Mc$.  This gets reflected in
the data as $\range(\tilde{D}(Y)) = \range(\tilde{D}(X))$.
Using the fact that $\tilde{D}$ is fully characterized by a matrix $C$
with $\tilde{D}(x) = D(x) C$, this equation can be rewritten as 
\begin{align}\label{eq:range-old-dictionary-ssd}
  \range(D(Y)C) = \range(D(X)C).
\end{align}
Hence, under appropriate conditions on the data sampling, the
problem of finding the maximal Koopman-invariant subspace can be
formulated in algebraic terms by looking for the full column rank
matrix $C$ with maximum number of columns such
that~\eqref{eq:range-old-dictionary-ssd} holds.

\begin{algorithm}
  \caption{Symmetric Subspace Decomposition} \label{algo:ssd}
  \begin{algorithmic}[1] 
  	\Statex \textbf{Inputs:} $\dx, \dy \in \real^{N \times
            N_d }$ \qquad \textbf{Output:} $\cssd$
    \State \textbf{Initialization} 
    \State $i \gets 1$, $A_1 \gets \dx$, $B_1 \gets \dy$, $\cssd \gets I_{N_d}$
    \While{1}
    \smallskip
    \State $\begin{bmatrix} Z_i^A \\ Z_i^B \end{bmatrix} \gets
    \operatorname{null}([A_i,B_i])$ \Comment{Basis for the null
      space}\smallskip \label{algossd:null}
    \If{$\operatorname{null}([A_i,B_i])=\emptyset$}
    \State \textbf{return} $0$ \Comment{The basis does not exist}
    \State \textbf{break}
    \EndIf
    \If{$\rown(Z_i^A)  \leq \coln(Z_i^A)$} \label{algossd:size-check}
    \State \textbf{return} $\cssd$ \Comment{The procedure is complete}
    \State \textbf{break}
    \EndIf
    \State $\cssd \gets \cssd Z_i^A$ \Comment{Reduce the subspace}
    \State $A_{i+1} \gets A_i Z_i^A$, $B_{i+1} \gets B_i Z_i^A$, $i \gets i+1$
    \EndWhile
  \end{algorithmic}
\end{algorithm}

When all the data is available at a single processor, the Symmetric
Subspace Decomposition (SSD) algorithm proposed
in~\cite{MH-JC:20-tac}, cf. Algorithm~\ref{algo:ssd}, is a centralized
procedure that identifies $\tilde{D}$.
Note that~\eqref{eq:range-old-dictionary-ssd} implies
\begin{align*}
  \range(\dx C) = \range(\dy C) \subseteq \range(\dx) \cap \range(\dy) ,
\end{align*}	
which in turn leads to rank deficiency in the concatenated matrix
$[\dx,\dy]$. SSD prunes at each iteration the dictionary using this
rank deficiency until~\eqref{eq:range-old-dictionary-ssd} holds for
the final dictionary $\tilde{D}$.  For reference,
Appendix~\ref{app:ssd} summarizes the main properties of SSD. Here, we
just point out that, under appropriate conditions on the data
sampling, $\tilde{D}$ is a basis for the maximal Koopman-invariant
subspace in the span of the original dictionary almost surely.  The
computational cost of SSD grows as $N$ and $N_d$ grow, mainly due to
Step~\ref{algossd:null}, which requires calculation of the null space
of a $N \times (2N_d)$ matrix. 

Here, we build on the SSD algorithm to address the problem laid
  out in Section~\ref{sec:problem-statement}, i.e.,
  identifying~$\tilde{D}$ by means of a parallel strategy, either
  because the data is not centrally available at a single processor
  or, even if it is, because its implementation over multiple
  processors speeds up the identification significantly.  Our
algorithm, termed Parallel Symmetric Subspace
Decomposition\footnote{The function $\basis(\Ac)$ returns a matrix
  whose columns provide a basis for $\Ac$. If $\Ac =\{0\}$, then
  $\basis(\Ac)$ returns $0$. Moreover, $\coln(0)=0.$} (P-SSD),
cf. Algorithm~\ref{algo:pssd}, has each processor use SSD on its local
data and prune its subspace using the estimates of the invariant
dictionary communicated by its neighbors.  The following is an
informal description of the algorithm steps:
\begin{quote}
  \emph{Informal description of P-SSD:} Given dictionary snapshots
  $\dx$, $\dy$ distributed over $M$ processors communicating over a
  network,
  each agent iteratively: (i) receives its neighbors' estimates of the
  invariant dictionary, (ii) uses the SSD algorithm to identify the
  largest invariant subspace (according to the local data) in the
  intersection of its dictionary and its neighbors' dictionaries,
  (iii) chooses a basis for the identified subspace as its own
  dictionary, and (iv) transmits that dictionary to its neighbors.
\end{quote}
The proposed P-SSD algorithm builds on the observation that the
subspaces identified by SSD are monotone non-increasing with respect to
the addition of data (Lemma~\ref{l:ssd-monotone}), i.e., the dimension
of the identified subspace does not increase when we add more data.

%
\begin{algorithm}[htb]
  \caption{Parallel Symmetric Subspace Decomposition}\label{algo:pssd}
  \begin{algorithmic}[1]
    \Statex \textbf{Inputs:} $\dxn{i}, \dyn{i}$, $i \in \until{M}$
    \Statex \textbf{Each agent $\mathbf{i}$ executes:}
    \State $k \gets 0$, $C_0^i \gets I_{N_d}$, $\flag_0^i \gets 0$ \label{al:p-start}
    \While{1}
    \State $k \gets k+1$
    \State Receive $C_{k-1}^j, \, \forall j \in \inn^k(i)$
    \State $D_k^i \gets \basis \Big(\bigcap_{j \in \{i\} \cup
      \inn^k(i)} \range(C_{k-1}^j)\Big)$\smallskip \label{al:basis}
    \State $E_k^i \gets
    \ssd(\dxn{i}D_k^i,\dyn{i}D_k^i)$ \label{al:ssd}
    \If{$\coln(D_k^i E_k^i) < \coln(C_{k-1}^i)$} \label{al:if}
    \State $C_k^i \gets D_k^i E_k^i$ \Comment{The subspace gets
      smaller}  \label{al:newsub}
    \State $\flag_k^i \gets 0$   \label{al:flag0}
    \Else
    \State $C_k^i \gets C_{k-1}^i$ \Comment{The subspace does not
      change}  \label{al:oldsub}
    \State $\flag_k^i \gets 1$   \label{al:flag}
    \EndIf
    \State Transmit $C_k^i$ to out-neighbors
    \EndWhile \label{al:p-finish}
  \end{algorithmic}
\end{algorithm}

\subsection{Equilibria and Termination of P-SSD}

Here, we define the concept of equilibria of the P-SSD algorithm and
discuss how to detect whether the agents have attained one (we refer
to this as termination).  Viewing the algorithm as a discrete-time
dynamical system, we start by defining the concept of equilibrium.

\begin{definition}\longthmtitle{Equilibrium}
  The agent matrices $C_k^i$, $i \in \until{M}$ in
  Algorithm~\ref{algo:pssd} are an equilibrium of P-SSD at time $k \in
  \naturals$ if $ C_{p}^i = C_k^i$ for all $p>k$ and all $i \in \until{M}$.
\end{definition}
\smallskip

Note that, after reaching an equilibrium, the subspaces identified by
the agents do not change anymore.  The next result shows that the
P-SSD algorithm always reaches an equilibrium in a finite number of
iterations.

\begin{proposition}\longthmtitle{Reaching Equilibrium in a Finite
    Number of Iterations}\label{p:pssd-consensus}
  The P-SSD algorithm executed over any (possibly time-varying)
  digraph reaches an equilibrium after a finite number of iterations.
\end{proposition}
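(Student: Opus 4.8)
The plan is to track, for each agent $i$, the nonnegative integer $\coln(C_k^i)$, i.e., the number of columns of its matrix iterate, and to show this quantity can strictly decrease only finitely often. First I would exploit the fact that the update rule in Algorithm~\ref{algo:pssd} ties every change of an agent's matrix to a strict drop in this count. Indeed, a new matrix is adopted in Step~\ref{al:newsub} precisely when the condition in Step~\ref{al:if} holds, namely $\coln(D_k^i E_k^i) < \coln(C_{k-1}^i)$; in that case $C_k^i = D_k^i E_k^i$ and hence $\coln(C_k^i) < \coln(C_{k-1}^i)$. Otherwise the else-branch (Step~\ref{al:oldsub}) sets $C_k^i = C_{k-1}^i$, so the matrix is left unchanged and $\coln(C_k^i) = \coln(C_{k-1}^i)$. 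Consequently $\{\coln(C_k^i)\}_{k \ge 0}$ is non-increasing, and, crucially, $C_k^i \neq C_{k-1}^i$ occurs if and only if the count strictly decreases. Note this conclusion is built into the comparison of Step~\ref{al:if} and does not even require invoking properties of SSD.

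Next I would bound the number of such decreases. Since Step~\ref{al:p-start} initializes $C_0^i = I_{N_d}$, we have $\coln(C_0^i) = N_d$, and $\coln(\cdot) \ge 0$ throughout; therefore $\coln(C_k^i)$ can strictly decrease at most $N_d$ times over the entire (infinite) execution. Hence each agent changes its matrix at most $N_d$ times, and across the $M$ agents at most $MN_d$ changes occur in total. Because this total is finite, there is a last time $k^\star$ at which some agent changes its matrix (set $k^\star = 0$ if no agent ever changes after initialization). For every $p > k^\star$ and every $i \in \until{M}$, no agent updates, so a straightforward induction gives $C_p^i = C_{k^\star}^i$. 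This is exactly the equilibrium condition, attained after a finite number of iterations.

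Finally, I would observe that the argument is insensitive to the communication topology: the only ingredients are the initialization and the column-count bookkeeping enforced by Step~\ref{al:if}, neither of which refers to the (possibly time-varying) neighbor sets $\inn^k(i)$. Thus the claim holds verbatim for any time-varying digraph. The only point requiring care — and the closest thing to an obstacle, though a mild one — is confirming that $\coln(C_k^i)$ is indeed a well-defined non-increasing nonnegative integer and cannot be inflated by an update. This is guaranteed directly by the else-branch (which preserves the matrix whenever no strict decrease is detected) and is consistent with the fact that, when an update does occur, $\range(C_k^i) = \range(D_k^i E_k^i) \subseteq \range(D_k^i) \subseteq \range(C_{k-1}^i)$, since $D_k^i$ is a basis for an intersection containing $\range(C_{k-1}^i)$ (Step~\ref{al:basis}) and SSD does not enlarge the span of its input (cf.\ Appendix~\ref{app:ssd} and Lemma~\ref{l:ssd-monotone}). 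Beyond this bookkeeping, the result follows immediately from the design of the update rule.
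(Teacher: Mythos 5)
Your proposal is correct and follows essentially the same route as the paper: both arguments rest on the observation that each agent's column count $\coln(C_k^i)$ is non-increasing, strictly decreases exactly when the agent updates its matrix, and is a nonnegative integer starting at $N_d$, so at most $MN_d$ updates can occur in total regardless of the (time-varying) topology. The only cosmetic difference is that the paper packages this as a contradiction on the sum $\sum_i \coln(C_{k}^i)$, whereas you argue directly via a last update time; the substance is identical.
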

\begin{proof}
  We prove the result by contradiction. Suppose that the algorithm
  never reaches an equilibrium, i.e., there exists an increasing
  sequence $\{k_p\}_{p=1}^\infty \subseteq \naturals_0$ such that, at
  iteration $k_p$, at least an agent $j \in \until{M}$ executes
  Steps~\ref{al:if}-\ref{al:flag0} in Algorithm~\ref{algo:pssd}.
  Hence, $\coln(C_{k_p+1}^j)<\coln(C_{k_p}^j)$. Consequently, using
  the fact that the number of columns for the matrix of each agent is
  non-increasing in time (since they execute either
  Step~\ref{al:newsub} or Step~\ref{al:oldsub} of
  Algorithm~\ref{algo:pssd}) and the fact that $k_{p+1} \geq k_p+1$,
  one can write
  \begin{align*}
    \sum_{i=1}^{M} \coln(C_{k_{p+1}}^i) < \sum_{i=1}^{M}
    \coln(C_{k_p}^i), \, \forall p \in \naturals.
  \end{align*}
  Since $\sum_{i=1}^{M} \coln(C_{0}^i) = MN_d$, this equation implies
  that, for $p>MN_d+1$, $\sum_{i=1}^{M} \coln(C_{k_p}^i)<0$, a
  contradiction.
\end{proof}

Since the algorithm runs in parallel, we need a mechanism to detect if
the P-SSD algorithm has reached an equilibrium.  The flag variables
carry out this task.

\begin{proposition}\longthmtitle{Equilibria is Detected by Flag
    Variables in Time Invariant Digraphs}\label{p:pssd-flags}
  Given a time-invariant network and under the P-SSD algorithm,
  $\flag_l^i = 1$ for some $l \in \naturals$ and for every $i \in
  \until{M}$ if and only if $C_k^i = C_{l-1}^i$ and $\flag_k^i = 1$
  for every $i \in \until{M}$ and every $k \geq l$.
\end{proposition}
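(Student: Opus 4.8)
The plan is to prove the substantive (forward) implication by induction on $k \ge l$; the converse is immediate, since specializing the stated conclusion to $k=l$ already gives $\flag_l^i = 1$ for every $i$. Two structural facts will be used throughout. First, time-invariance makes every in-neighbor set $\inn(i)$ constant in $k$, so the only way the input to agent $i$'s update (Step~\ref{al:basis}) can change from one iteration to the next is through a change in some $C_{k-1}^j$, $j \in \{i\}\cup\inn(i)$. Second, every matrix returned by $\basis(\cdot)$ and $\ssd(\cdot)$ has full column rank, so ``number of columns'' equals ``dimension of range''; this lets me translate freely between the cardinality test of Step~\ref{al:if} and statements about subspaces.

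First I would extract from the lone hypothesis $\flag_l^i = 1$ two range identities by a squeeze argument. Since $i \in \{i\}\cup\inn(i)$, the definition of $D_l^i$ gives the chain $\range(D_l^i E_l^i) \subseteq \range(D_l^i) = \bigcap_{j \in \{i\}\cup\inn(i)} \range(C_{l-1}^j) \subseteq \range(C_{l-1}^i)$, so $\coln(D_l^i E_l^i) \le \coln(C_{l-1}^i)$. But $\flag_l^i = 1$ means the test in Step~\ref{al:if} failed, i.e.\ $\coln(D_l^i E_l^i) \ge \coln(C_{l-1}^i)$. Equality of the extreme dimensions collapses the chain:
\begin{align*}
  \range(D_l^i E_l^i) = \range(D_l^i) = \range(C_{l-1}^i) .
\end{align*}
The second equality says the intersection defining $D_l^i$ equals $\range(C_{l-1}^i)$ (agent $i$'s estimate is already contained in each neighbor's estimate); the first says that SSD on agent $i$'s local data, restricted to $\range(C_{l-1}^i)$, returns that whole subspace. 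Both are static facts about the fixed matrices $\{C_{l-1}^j\}$ and the fixed data $\dxn{i},\dyn{i}$.

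For the induction the hypothesis at $k \ge l$ would be that $C_k^j = C_{l-1}^j$ and $\flag_k^j = 1$ for every $j$; the base case $k=l$ holds because $\flag_l^i=1$ triggers the else-branch (Step~\ref{al:oldsub}), which sets $C_l^i = C_{l-1}^i$. At step $k+1$, no neighbor matrix has changed, so $\range(D_{k+1}^i) = \bigcap_j \range(C_k^j) = \bigcap_j \range(C_{l-1}^j) = \range(C_{l-1}^i)$ by the identities above. If I can show that SSD recovers this same subspace, $\range(D_{k+1}^i E_{k+1}^i) = \range(C_{l-1}^i)$, then $\coln(D_{k+1}^i E_{k+1}^i) = \coln(C_k^i)$, the test in Step~\ref{al:if} again fails, and the else-branch returns $C_{k+1}^i = C_k^i = C_{l-1}^i$ with $\flag_{k+1}^i = 1$, closing the induction for the (arbitrary) agent $i$ and hence for all agents.

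The main obstacle is exactly this last claim. Because $\basis(\cdot)$ may return any basis, $D_{k+1}^i$ and $D_l^i$ typically differ as matrices even though both span $\range(C_{l-1}^i)$, so I cannot simply equate $E_{k+1}^i$ with $E_l^i$. What I need is the basis-independence of SSD: for fixed data, the identified subspace $\range\big(D\,\ssd(\dxn{i}D,\dyn{i}D)\big)$ depends only on $\range(D)$. I would obtain this by writing $D_{k+1}^i = D_l^i R$ for an invertible $R$, observing that this right-multiplies both SSD inputs by $R$, and appealing to the characterization of the SSD output as the (basis-free) maximal subspace $\Wc \subseteq \range(D)$ on which the two data maps agree, $\dxn{i}\Wc = \dyn{i}\Wc$ --- a characterization I would draw from the properties of SSD summarized in Appendix~\ref{app:ssd}. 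Confirming that this maximality characterization is available in the precise form needed, and that it is genuinely invariant under the change of basis $R$, is the delicate part; the remainder is dimension bookkeeping along the containment chain above.
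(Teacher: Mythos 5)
Your proof is correct, and its skeleton (trivial converse by setting $k=l$; base case from $\flag_l^i=1$ forcing the else-branch; induction showing the column-count test keeps failing) matches the paper's. Where you diverge is in the key step. The paper's proof never needs your basis-invariance lemma: since the induction hypothesis gives $C_p^j = C_{p-1}^j = C_{l-1}^j$ as \emph{identical matrices} and the in-neighbor sets are constant, the inputs to the deterministic maps $\basis(\cdot)$ and $\ssd(\cdot)$ at iteration $p+1$ are literally the same as at iteration $p$, so $D_{p+1}^i = D_p^i$ and $E_{p+1}^i = E_p^i$ exactly, and the inequality $\coln(D_p^i E_p^i)\ge\coln(C_{p-1}^i)$ from $\flag_p^i=1$ transfers verbatim to iteration $p+1$. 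You instead work at the level of range spaces, first establishing the squeeze $\range(D_l^i E_l^i)=\range(D_l^i)=\range(C_{l-1}^i)$ (essentially re-deriving Theorem~\ref{t:pssd-agent-prop}(b) for this case) and then invoking basis-independence of the SSD-identified subspace under $D \mapsto DR$ with $R$ invertible. That lemma is true and your proposed derivation goes through: applying Theorem~\ref{t:SSD-convergence}(b) in both directions with the test matrices $RE'$ and $R^{-1}E$ gives $\range(DE)=\range(DRE')$. So your argument is sound but carries an extra lemma the paper avoids; what it buys is robustness — your version would survive a nondeterministic $\basis$ that returns a different basis for the same subspace on different calls, whereas the paper's one-line matrix-equality argument implicitly relies on $\basis$ and $\ssd$ being deterministic functions of their matrix inputs.
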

\begin{proof}
  The implication from right to left is straightforward (simply put
  $k=l$). To prove the other implication, we use strong induction. By
  hypothesis, $\flag_l^i = 1$ for every $i \in \until{M}$ and the
    P-SSD algorithm has executed Steps~\ref{al:oldsub}-\ref{al:flag}
    at iteration $l$. Hence, $C_l^i = C_{l-1}^i$ for all $i \in
    \until{M}$. Now suppose that for every $i \in \until{M}$
  \begin{align}\label{eq:flag-induction}
    \flag_k^i = 1 , \, C_k^i = C_{l-1}^i, \, \forall k \in \{l+1 , l+2, \dots, p\}.
  \end{align}
  We need to prove that for every $i \in \until{M}$
  \begin{align*}
    \flag_{p+1}^i = 1 , \, C_{p+1}^i = C_{l-1}^i.
  \end{align*}
  Noting that the network is time-invariant, based on
  Step~\ref{al:basis} of Algorithm~\ref{algo:pssd} at iteration $p+1$
  and the fact that $C_p^i = C_{p-1}^i = C_{l-1}^i$ for every $i \in
  \until{M}$, we have $D_{p+1}^i = D_p^i = D_{l-1}^i$ for every $i \in
  \until{M}$. Similarly, $E_{p+1}^i = E_p^i = E_{l-1}^i$ for every $i
  \in \until{M}$. As a result,
  \begin{align}\label{eq:cols-equal-1}
    \coln(D_{p+1}^i E_{p+1}^i) = \coln(D_p^i E_p^i),
  \end{align}
  for all $ i \in \until{M}$.  Based on the algorithm
  and~\eqref{eq:flag-induction},
  \begin{align}\label{eq:cols-equal-2}
    \coln(D_p^i E_p^i) \geq \coln(C_{p-1}^i) = \coln(C_{l-1}^i), 
  \end{align}
  for all $ i \in \until{M}$.
  Using~\eqref{eq:cols-equal-1},~\eqref{eq:cols-equal-2}, and the fact
  that $\coln(C_p^i) = \coln(C_{l-1}^i)$, we have
  \begin{align*}
    \coln(D_{p+1}^i E_{p+1}^i) \geq \coln(C_p^i) ,
  \end{align*}
  for all $i \in \until{M}$.  Consequently, the algorithm executes
  Steps~\ref{al:oldsub}-\ref{al:flag} for every agent and we have
  \begin{align*}
    \flag_{p+1}^i = 1 , \, C_{p+1}^i = C_{l-1}^i,
  \end{align*}
  for all $ i \in \until{M}$, which concludes the proof.
\end{proof}

Note that the flags detect an equilibrium with one time-step delay.
We say that the P-SSD algorithm has \emph{terminated at step $k$} if
$\flag_k^i = 1$ for all $i \in \until{M}$.

\begin{remark}\longthmtitle{Termination of the P-SSD
    Algorithm}\label{r:flags-termination}
  We point out that one may consider different notions of termination
  for P-SSD: (i) the algorithm reaches an equilibrium, i.e., the
  agents continue their calculations but do not change their outputs;
  (ii) based on Proposition~\ref{p:pssd-flags}, a user external to the
  parallel processing hardware terminates the algorithm when
  $\prod_{i=1}^{M} \flag_k^i =1$ for some $k \in \naturals$; (iii)
  agents use distributed halting algorithms~\cite{NAL:97,DP:00} to
  decide when to terminate based on their output and flags. Here, we
  only employ (i) and~(ii) as appropriate and do not implement~(iii)
  for space reasons.  \oprocend
\end{remark}


\subsection{Properties of Agents' Matrix Iterates along P-SSD}
Here we characterize important properties of the matrices computed by
the agents at each iteration of the P-SSD algorithm. The results
provided in this section hold for any (including time-varying)
communication network topology.  The next result characterizes basic
properties of the agents' matrix iterates.

\begin{theorem}\longthmtitle{Properties of Agents'
    Matrix Iterates}\label{t:pssd-agent-prop}
  Under the P-SSD algorithm and for any (possibly time-varying)
  digraph, for each $i \in \until{M}$,
  \begin{enumerate}
  \item for all $k \in \naturals_0$, $C_k^i$ is zero or has full column rank;
  \item for all $k \in \naturals$, $\range(D_k^i E_k^i) = \range(C_k^i)$;
  \item for all $k \in \naturals$, $\range(\dxn{i} C_k^i) = \range(\dyn{i} C_k^i)$.
  \end{enumerate}
\end{theorem}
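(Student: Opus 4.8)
The plan is to establish the three properties in order, using property (a) to bootstrap (b), and (b) in turn to obtain (c). Throughout, the two structural facts I rely on are that the basis and $\ssd$ operations preserve the ``zero or full column rank'' property and that the $\ssd$ output $E_k^i$ makes the ranges of $\dxn{i}D_k^i E_k^i$ and $\dyn{i}D_k^i E_k^i$ coincide (both among the properties of SSD summarized in Appendix~\ref{app:ssd}). Since each property is asserted pointwise in $k$ and $i$, the possibly time-varying nature of the digraph enters only through the index set $\{i\}\cup\inn^k(i)$ and poses no difficulty.

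First I would record a preliminary observation used repeatedly: at every iteration, $D_k^i$ is a basis of an intersection of ranges and hence is zero or has full column rank, and $E_k^i = \ssd(\dxn{i}D_k^i,\dyn{i}D_k^i)$ is zero or has full column rank by the SSD properties. Since the product of two full-column-rank matrices of compatible sizes is full column rank (the composition of two injective linear maps is injective) and a product with a zero factor is zero, it follows that $D_k^i E_k^i$ is zero or has full column rank. Property (a) then follows by induction on $k$: the base case holds since $C_0^i = I_{N_d}$; for the inductive step, at iteration $k$ the algorithm sets $C_k^i$ equal either to $D_k^i E_k^i$ (Step~\ref{al:newsub}), which is zero or full column rank by the observation, or to $C_{k-1}^i$ (Step~\ref{al:oldsub}), which is zero or full column rank by the induction hypothesis.

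For property (b) I would argue by cases on the branch taken at Step~\ref{al:if}. If $\coln(D_k^i E_k^i) < \coln(C_{k-1}^i)$, then $C_k^i = D_k^i E_k^i$ and the claim is immediate. Otherwise $C_k^i = C_{k-1}^i$ and $\coln(D_k^i E_k^i) \geq \coln(C_{k-1}^i)$. Here I would use the inclusions $\range(D_k^i E_k^i) \subseteq \range(D_k^i) \subseteq \range(C_{k-1}^i)$, the first because the columns of $D_k^i E_k^i$ are linear combinations of those of $D_k^i$, and the second because $i$ belongs to the index set defining the intersection in Step~\ref{al:basis}. By property (a) and the preliminary observation, both $C_{k-1}^i$ and $D_k^i E_k^i$ are zero or full column rank, so their column counts equal their range dimensions; combined with $\coln(D_k^i E_k^i) \geq \coln(C_{k-1}^i)$, this gives $\dim\range(D_k^i E_k^i) \geq \dim\range(C_{k-1}^i)$, and together with the inclusion forces $\range(D_k^i E_k^i) = \range(C_{k-1}^i) = \range(C_k^i)$.

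Finally, property (c) again splits into the same two cases. In the branch $C_k^i = D_k^i E_k^i$, the identity $\range(\dxn{i}C_k^i) = \range(\dyn{i}C_k^i)$ is exactly the defining property of the $\ssd$ output $E_k^i$ applied to the pair $(\dxn{i}D_k^i,\dyn{i}D_k^i)$. In the branch $C_k^i = C_{k-1}^i$, I would invoke property (b), which gives $\range(C_k^i) = \range(D_k^i E_k^i)$; since left multiplication by a fixed matrix sends equal ranges to equal ranges, this yields $\range(\dxn{i}C_k^i) = \range(\dxn{i}D_k^i E_k^i)$ and $\range(\dyn{i}C_k^i) = \range(\dyn{i}D_k^i E_k^i)$, and the $\ssd$ output property equates the two right-hand sides. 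The step deserving the most care, and the main obstacle, is Case~2 of property (b): it is essential that (a) is available first, so that the comparison of column counts can be transferred to a comparison of subspace dimensions, and that the containment $\range(D_k^i E_k^i) \subseteq \range(C_{k-1}^i)$ is in place, so that equal dimensions genuinely force equal subspaces.
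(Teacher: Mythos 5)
Your proof is correct and follows essentially the same route as the paper's: induction on $k$ for (a), a case split on the branch taken at Step~\ref{al:if} for (b) with the chain $\range(D_k^i E_k^i)\subseteq\range(D_k^i)\subseteq\range(C_{k-1}^i)$ combined with the column-count inequality to force equality of subspaces, and part (b) together with the SSD range identity for (c). The only differences are cosmetic streamlinings — you bypass the paper's intermediate step of showing $E_k^i$ is square and nonsingular in Case~2 of (b), and in (c) you observe that left multiplication preserves equality of ranges without invoking the full column rank of $\dxn{i},\dyn{i}$ — neither of which changes the substance of the argument.
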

\begin{proof}
  Consider an arbitrary $i \in \until{M}$.  We prove (a) by induction.
  For $k=0$, $C_0^i = I_{N_d}$. Now, suppose that $C_k^i$ is zero or
  has full column rank, and let us show the same fact for $k+1$.  Note
  that, if $D_{k+1}^i = 0$ or $E_{k+1}^i = 0$, then $C_{k+1}^i = 0$
  and the result follows. Now, suppose that $D_{k+1}^i \neq 0$ and
  $E_{k+1}^i \neq 0$. Based on definition of $\basis$ and
  Theorem~\ref{t:SSD-convergence}(a), $D_{k+1}^i$ and $E_{k+1}^i$ have
  full column rank. Note that the algorithm either executes
  Step~\ref{al:newsub} or executes Step~\ref{al:oldsub}. Hence,
  $C_{k+1}^i = D_{k+1}^i E_{k+1}^i $ or $C_{k+1}^i = C_{k}^i
  $. Consequently, one can deduce that $C_{k+1}^i$ has full column
  rank ($C_{k}^i \neq 0$ since $D_{k+1}^i \neq 0$, hence $C_{k}^i$ has
  full column rank), establishing (a).
	
  Next, we show (b). If the algorithm executes Step~\ref{al:newsub} at
  iteration $k$, then $C_k^i = D_k^i E_k^i$ and consequently
  $\range(D_k^i E_k^i) = \range(C_k^i)$.  Suppose instead that the
  algorithm executes Step~\ref{al:oldsub}, and consequently $C_k^i =
  C_{k-1}^i$. For the case $C_k^i = 0$, using
  Step~\ref{al:basis}, we deduce $D_k^i E_k^i = 0$ and the
  statement follows.  Assume then that $C_k^i \neq 0$ and hence it has
  full column rank as a result of~(a). Consequently, since the
    condition in Step~\ref{al:if} does not hold and using the
    definition of $\coln$, we have $D_k^i \neq 0$ and $E_k^i \neq 0$.
  Also, they have full column rank based on the definition of $\basis$
  and Theorem~\ref{t:SSD-convergence}(a). Moreover, $\range(D_k^i
  E_k^i) \subseteq \range(D_k^i) \subseteq \range(C_{k-1}^i)$ and
  since the matrices have full column rank, $\coln(D_k^i E_k^i) \leq
  \coln(D_k^i ) \leq \coln(C_{k-1}^i)$.  In addition, and based on
  Step~\ref{al:if}, $\coln(D_k^i E_k^i) \geq
  \coln(C_{k-1}^i)$.~Hence,
  \begin{align}\label{eq:coln-equality-1}
    \coln(D_k^i E_k^i) = \coln(D_k^i) = \coln(C_{k-1}^i).
  \end{align}
  We can use $\rown(E_k^i) = \coln(D_k^i)$ and $\coln(D_k^i E_k^i) =
  \coln(E_k^i)$ in conjunction with~\eqref{eq:coln-equality-1} to
  deduce that $E_k^i$ is square and nonsingular (since it has full
  column rank). Consequently, $\range(D_k^i) = \range(D_k^i E_k^i)$.
  This fact, in combination with $\range(D_k^i) \subseteq
  \range(C_{k-1}^i)$, yields
  \begin{align}\label{eq:range-equality-2}
    \range(D_k^i E_k^i) \subseteq \range(C_{k-1}^i).
  \end{align}
  Moreover, since $D_k^i$, $E_k^i$, and $C_{k-1}^i$ have full column
  rank, one can use~\eqref{eq:coln-equality-1}
  and~\eqref{eq:range-equality-2} to deduce that $\range(D_k^i E_k^i)
  = \range(C_{k-1}^i)$. Finally, since $C_k^i = C_{k-1}^i$, we have $
  \range(D_k^i E_k^i) = \range(C_k^i)$, which shows (b).
  	
  To show (c), from Step~\ref{al:ssd} of the algorithm and
  Theorem~\ref{t:SSD-convergence}(a), for each $i \in \until{M}$ and
  each $k \in \naturals$,
  \begin{align*}
    \range(\dxn{i} D_k^i E_k^i) = \range(\dyn{i} D_k^i E_k^i).
  \end{align*}
  Based on part~(b) and the fact that $\dxn{i},\dyn{i}$ have full
  column rank, one can deduce $ \range(\dxn{i} C_k^i) = \range(\dyn{i}
  C_k^i)$, concluding the proof.
\end{proof}

\begin{remark}\longthmtitle{Computation of the $\basis$ Function with
    Finite-Precision
    Machines}\label{r:basis-intersection-implementation}
  Based on Theorem~\ref{t:pssd-agent-prop}(a), we provide here a
  practical way of implementing the $\basis$ function. Since the
  matrices $C_k^i$ are either full rank or zero, instead of working
  with their range space, one can work directly with the matrices
  themselves. Here, we compute iteratively the output of the basis
  function for the input matrices. Given the full column-rank matrices
  $A_1,A_2$, we find $\basis(\range(A_1) \cap \range(A_2))$
  using~\cite[Lemma A.1]{MH-JC:20-tac}
  by finding the null space of $[A_1,A_2]$ (if one of the matrices is
  equal to zero, then $\basis(\range(A_1) \cap \range(A_2)) =0$). This
  procedure can be implemented iteratively for $i \geq
  2$ by noting
  \begin{align*}
    &\basis \Big( \bigcap_{j=1}^{i+1} \range(A_i) \Big)
    \\
    &= \basis \bigg(\range\Big( \basis \Big( \bigcap_{j=1}^{i}
    \range(A_i) \Big) \Big) \bigcap \range(A_{i+1})\bigg). 
  \end{align*}
  When implemented on digital computers, this might lead to small
  errors affecting the null space of $[A_1,A_2]$.  To avoid this
  problem, given the singular values $\sigma_1 \geq \cdots \geq
  \sigma_{l}$ of $[A_1,A_2]$ and a tuning parameter
  $\epsilon_{\cap} >0$, we compute $k$ as the minimum integer
  satisfying $\sum_{j=k}^{l} \sigma_j^2 \leq \epsilon_{\cap}
    ( \sum_{j=1}^{l} \sigma_j^2 )$
  %
  %
  and then set $\sigma_k = \cdots = \sigma_{l}=0$.  The parameter
  $\epsilon_{\cap}$ tunes the sensitivity of the implementation.  \oprocend 
\end{remark}

Next, we show that the range space identified by an agent at any time
step is contained into the range space identified by that agent and
its neighbors in the previous time step.

\begin{lemma}\longthmtitle{Subspace Monotonicity of Agents'
    Matrix Iterates}\label{l:pssd-sub-monotone}
  Under the P-SSD algorithm and for any (possibly time-varying)
  digraph, at each iteration $k \in \naturals$, it holds that
  $\range(C_k^i) \subseteq \range(C_{k-1}^j)$ for all $i \in
  \until{M}$ and all $j \in \{i\} \cup \inn^k(i)$.
\end{lemma}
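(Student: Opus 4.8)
The plan is to collapse the statement into a single chain of subspace inclusions, leaning on the fact that Theorem~\ref{t:pssd-agent-prop}(b) has already reconciled the two branches of the conditional in Steps~\ref{al:if}--\ref{al:flag}. The key observation is that, whether agent~$i$ shrinks its subspace via Step~\ref{al:newsub} ($C_k^i = D_k^i E_k^i$) or retains its previous matrix via Step~\ref{al:oldsub} ($C_k^i = C_{k-1}^i$), part~(b) guarantees $\range(C_k^i) = \range(D_k^i E_k^i)$ for every $k \in \naturals$. This lets me avoid any case split on the branch taken: it suffices to bound $\range(D_k^i E_k^i)$ from above.

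First I would invoke the elementary fact that the range of a product is contained in the range of its left factor, so $\range(D_k^i E_k^i) \subseteq \range(D_k^i)$; combined with part~(b) this gives $\range(C_k^i) \subseteq \range(D_k^i)$. Then I would identify $\range(D_k^i)$ explicitly: by the definition of $D_k^i$ in Step~\ref{al:basis} and the convention for $\basis$ (whose columns span the input subspace, returning $0$ — and hence range $\{0\}$ — when that subspace is trivial), we have $\range(D_k^i) = \bigcap_{j \in \{i\} \cup \inn^k(i)} \range(C_{k-1}^j)$. Since an intersection of subspaces sits inside each of them, $\range(D_k^i) \subseteq \range(C_{k-1}^j)$ for every $j \in \{i\} \cup \inn^k(i)$. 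Chaining the inclusions yields $\range(C_k^i) \subseteq \range(C_{k-1}^j)$, and since $i$ was arbitrary the claim holds for all agents.

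I do not anticipate a genuine obstacle: the substance of the lemma is carried almost entirely by Theorem~\ref{t:pssd-agent-prop}(b), which is where the real work of showing the retained-subspace branch is consistent with the intersection-and-prune step already took place. The only point needing a touch of care is the degenerate case in which the neighbor intersection is trivial, where $D_k^i = 0$ and therefore $\range(C_k^i) = \{0\}$; but then the desired inclusion holds vacuously because $\{0\}$ is contained in every subspace, so no separate argument is required.
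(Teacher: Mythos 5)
Your proposal is correct and follows essentially the same route as the paper's proof: both reduce the claim to the chain $\range(C_k^i) = \range(D_k^i E_k^i) \subseteq \range(D_k^i) \subseteq \range(C_{k-1}^j)$, with the first equality supplied by Theorem~\ref{t:pssd-agent-prop}(b) and the last inclusion by the definition of $D_k^i$ in Step~\ref{al:basis}. The only cosmetic difference is that the paper disposes of the degenerate case as ``$C_k^i = 0$ is trivial'' while you phrase it via $D_k^i = 0$; both are fine.
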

\begin{proof}
  Let $i \in \until{M}$. The case $C_k^i = 0$ is trivial, so consider
  instead the case $C_k^i \neq 0$. This implies that $D_k^i \neq 0$
  and $E_k^i \neq 0$, with both having full column rank based on
  definition of the $\basis$ function and
  Theorem~\ref{t:SSD-convergence}(a).
  Note that $\range(D_k^i E_k^i) \subseteq \range(D_k^i)$ and, by
  definition, $\range(D_k^i) \subseteq \range(C_{k-1}^j)$ for every $j
  \in \{i\} \cup \inn^k(i)$. The result now follows by noting that
  $\range(D_k^i E_k^i) = \range(C_k^i)$ (cf.
  Theorem~\ref{t:pssd-agent-prop}(b)).
\end{proof}

We conclude this section by showing that the range space of the
agents' matrix iterates contain the \emph{SSD subspace}, i.e., the
subspace identified by the SSD algorithm when all the data is
available at once at a single processor.

\begin{proposition}\longthmtitle{Inclusion of SSD Subspace by Agents'
    Matrix Iterates}\label{p:inclusion-ssd-subspace}
  Let $ \cssd = \ssd (\dx,\dy)$ be the output of the SSD algorithm
  applied on $\dx, \dy$.  Under the P-SSD algorithm and for any (possibly
  time-varying) digraph, at each iteration $k \in \naturals_0$, it
  holds that $\range(\cssd) \subseteq \range(C_k^i)$ for all $i \in
  \until{M}$.
\end{proposition}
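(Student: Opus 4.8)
The plan is to argue by induction on the iteration counter $k$. For the base case $k=0$, every agent has $C_0^i = I_{N_d}$, so $\range(C_0^i)=\real^{N_d}$ and the inclusion $\range(\cssd)\subseteq\range(C_0^i)$ is immediate. For the inductive step I would assume that $\range(\cssd)\subseteq\range(C_{k-1}^j)$ for all $j\in\until{M}$ and establish the inclusion at iteration $k$ for an arbitrary agent $i$. If $\cssd=0$ the claim is trivial, so I may assume $\cssd\neq 0$, which by the properties of SSD (Theorem~\ref{t:SSD-convergence}(a)) means $\cssd$ has full column rank. Since by the induction hypothesis $\range(\cssd)\subseteq\range(C_{k-1}^j)$ for every $j\in\{i\}\cup\inn^k(i)$, it lies in their intersection and hence in $\range(D_k^i)$ by the definition of $D_k^i$ in Step~\ref{al:basis}. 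In particular $D_k^i\neq 0$, so by Theorem~\ref{t:pssd-agent-prop}(a) it has full column rank, and I can write $\cssd=D_k^iW$ with $W:=(D_k^i)^\dagger\cssd$, which is itself full column rank.

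The crux of the argument is to show that the range equality certified by SSD on the full data is inherited by every subset of rows. From the properties of the centralized output (Theorem~\ref{t:SSD-convergence}) we have $\range(\dx\cssd)=\range(\dy\cssd)$, which furnishes matrices $R,S$ with $\dx\cssd=\dy\cssd R$ and $\dy\cssd=\dx\cssd S$. Because $\rows([\dxn{i},\dyn{i}])\subseteq\rows([\dx,\dy])$, there is a common row-selection matrix $P_i$ with $\dxn{i}=P_i\dx$ and $\dyn{i}=P_i\dy$; left-multiplying the two identities by $P_i$ yields $\dxn{i}\cssd=\dyn{i}\cssd R$ and $\dyn{i}\cssd=\dxn{i}\cssd S$, so that $\range(\dxn{i}\cssd)=\range(\dyn{i}\cssd)$. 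Substituting $\cssd=D_k^iW$, this reads $\range(\dxn{i}D_k^iW)=\range(\dyn{i}D_k^iW)$.

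Finally, I would invoke the maximality of the SSD output applied to the locally pre-conditioned data. Since $E_k^i=\ssd(\dxn{i}D_k^i,\dyn{i}D_k^i)$ (Step~\ref{al:ssd}) and $W$ is a full-column-rank matrix satisfying $\range(\dxn{i}D_k^iW)=\range(\dyn{i}D_k^iW)$, the maximality property of SSD gives $\range(W)\subseteq\range(E_k^i)$. Therefore $\range(\cssd)=\range(D_k^iW)\subseteq\range(D_k^iE_k^i)$, and the latter equals $\range(C_k^i)$ by Theorem~\ref{t:pssd-agent-prop}(b), completing the induction.

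I anticipate the main obstacle to be the second step: isolating and justifying that the range equality $\range(\dx\cssd)=\range(\dy\cssd)$ on the full data survives the restriction to the rows held by agent $i$. The key observation making this work is that the linear relations $R,S$ witnessing the two inclusions are \emph{row-agnostic}, so they persist after applying $P_i$; the subtlety to check is that the row selection is matched consistently across $X$ and $Y$ (i.e. the same $P_i$ applies to both), which holds because each agent stores snapshot \emph{pairs}. Once this row-restriction fact is in place, the rest follows from the maximality of SSD and the already-established structural properties in Theorem~\ref{t:pssd-agent-prop}. It is worth also verifying that the degenerate cases ($\cssd=0$ or $D_k^i=0$) do not break the chain of inclusions, and confirming that the maximality statement of SSD is available in the exact form needed, namely that any full-column-rank $C'$ with $\range(AC')=\range(BC')$ satisfies $\range(C')\subseteq\range(\ssd(A,B))$.
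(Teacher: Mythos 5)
Your proposal is correct and follows essentially the same route as the paper's proof: induction with the same factorization $\cssd = D_k^i W$, the same row-restriction argument to transfer $\range(\dx\cssd)=\range(\dy\cssd)$ to each agent's local data, and the same appeal to the maximality property of SSD to get $\range(W)\subseteq\range(E_k^i)$. The only nitpick is that the full column rank of $D_k^i$ follows from the definition of the $\basis$ function (its columns form a basis), not from Theorem~\ref{t:pssd-agent-prop}(a), which concerns $C_k^i$; this does not affect the validity of the argument.
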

\begin{proof}
  The case $\cssd = 0$ is trivial. Suppose that $\cssd \neq 0$. We
  prove the argument by induction. Since $C_0^i= I_{N_d}$ for all $i
  \in \until{M}$, we have
  \begin{align*}
    \range(\cssd) \subseteq \range(C_0^i) = \real^{N_d}, \, \forall i
    \in \until{M}.
  \end{align*}
  Now, suppose that 
  \begin{align}\label{eq:ssd-sub-intersection}
    \range(\cssd) \subseteq \range(C_k^i), \, \forall i \in \until{M} ,
  \end{align}
  and let us prove the same inclusion for $k+1$.  Based on
  Step~\ref{al:basis} in Algorithm~\ref{algo:pssd}, one can write
  \begin{align}\label{eq:basis-inclusion}
    \range(D_{k+1}^i )= \bigcap_{j \in \{i\} \cup \inn^{k+1}(i)}
    \range(C_k^j), \, \forall i \in \until{M}.
  \end{align}
  Using~\eqref{eq:ssd-sub-intersection}
  and~\eqref{eq:basis-inclusion}, we get $\range(\cssd) \subseteq
  \range(D_{k+1}^i )$ for every $i \in \until{M}$. Moreover, since
  both $D_{k+1}^i$ and $\cssd$ have full column rank (the first, based
  on its definition and the second based on
  Theorem~\ref{t:SSD-convergence}(a)), one can find matrices
  $F_{k+1}^i, i \in \until{M}$ with full column rank such that
  \begin{align}\label{eq:F-mat}
    \cssd = D_{k+1}^i F_{k+1}^i, \, \forall i \in \until{M}.
  \end{align}
  From Theorem~\ref{t:SSD-convergence}(a), we have 
  $\range(\dx\cssd) = \range(\dy \cssd)$. Hence, since
  $\rows([\dxn{i}, \dyn{i}]) \subseteq \rows([\dx, \dy])$ for every $i
  \in \until{M}$, we deduce
  \begin{align}\label{eq:SSD-satisfies-agent-data}
    \range(\dxn{i} \cssd) = \range(\dyn{i} \cssd).
  \end{align}
  Combining~\eqref{eq:F-mat} and~\eqref{eq:SSD-satisfies-agent-data},
  we obtain
  \begin{align}\label{eq:subspace-equal-new-dictionary}
    \range(\dxn{i} D_{k+1}^i F_{k+1}^i) = \range(\dyn{i} D_{k+1}^i
    F_{k+1}^i),
  \end{align}
  for every $i \in \until{M}$. Now, by Step~\ref{al:ssd} of
  Algorithm~\ref{algo:pssd}, for every $i \in \until{M}$,
  \begin{align*}
    E_{k+1}^i = \ssd (\dxn{i} D_{k+1}^i , \dyn{i} D_{k+1}^i ).
  \end{align*}
  Using Theorem~\ref{t:SSD-convergence}(b) with the dictionary
  $D(x)D_{k+1}^i$ and data $\dxn{i}$, $\dyn{i}$, we deduce
  \begin{align*}
    \range(F_{k+1}^i) \subseteq \range(E_{k+1}^i), \forall i \in \until{M}.
  \end{align*}
  Combining this with~\cite[Lemma A.2]{MH-JC:20-tac}
  and~\eqref{eq:F-mat}, we get
  \begin{align}\label{eq:cssd-subset}
    \range(\cssd)= \range(D_{k+1}^i F_{k+1}^i) \subseteq
    \range(D_{k+1}^i E_{k+1}^i),
  \end{align}
  for every $i \in \until{M}$. According to Algorithm~\ref{algo:pssd},
  either $C_{k+1}^i = C_k^i$ (Step~\ref{al:oldsub}) or $C_{k+1}^i
  =D_{k+1}^i E_{k+1}^i$
  (Step~\ref{al:newsub}). Using~\eqref{eq:ssd-sub-intersection} in the
  former case and~\eqref{eq:cssd-subset} in the latter, we conclude $
  \range(\cssd) \subseteq \range(C_{k+1}^i)$, for all $ i \in
  \until{M}$.
\end{proof}

\section{Equivalence of P-SSD and SSD}\label{sec:equivalence-pssd-ssd}
In this section, we study under what conditions on the digraph's
connectivity, the P-SSD algorithm is equivalent to the SSD algorithm,
i.e., it finds the maximal Koopman-invariant subspace contained in the
span of the original dictionary. 
We start by studying the relationship between the matrix iterates and
local data of the agents at the beginning and end of a directed path in
the digraph.

\begin{proposition}\longthmtitle{Relationship of Agents' Matrices
      and Local Data Along Directed 
      Paths}\label{p:symmetric-sub-directed-path} 
    Given an arbitrary constant digraph, let $P$ be a directed path of
    length $l$ from node~$j$ to node~$i$. Then, under the P-SSD
    algorithm, for each iteration $k \in \naturals$ and all $q \geq
    k+l$,
  \begin{subequations}
    \begin{align}
      \range(C_{q}^i) &\subseteq \range
      (C_k^{j}), \label{eq:subspace-inclusion-directed-path}
      \\
      \range(\dxn{j}C_q^i) &=
      \range(\dyn{j}C_q^i). \label{eq:symmetric-subspace-directed-path}
    \end{align}
  \end{subequations}
\end{proposition}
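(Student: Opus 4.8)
The plan is to prove both claims simultaneously by an induction that walks along the path. Write the path as $j = v_0 \to v_1 \to \cdots \to v_l = i$, so that, since the digraph is constant, $v_{t-1} \in \inn(v_t)$ for every $t \in \until{l}$. I will show, by induction on $t \in \{0,1,\dots,l\}$, that for all $q \geq k+t$ both $\range(C_q^{v_t}) \subseteq \range(C_k^j)$ and $\range(\dxn{j} C_q^{v_t}) = \range(\dyn{j} C_q^{v_t})$ hold; taking $t=l$ yields exactly~\eqref{eq:subspace-inclusion-directed-path}--\eqref{eq:symmetric-subspace-directed-path}. Whenever some iterate along the path is the zero matrix, temporal monotonicity (Lemma~\ref{l:pssd-sub-monotone} with a node taken as its own in-neighbor) forces every later iterate along the path to vanish and both statements become trivial, so I assume throughout that the matrices involved are nonzero and hence, by Theorem~\ref{t:pssd-agent-prop}(a), of full column rank.

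For the base case $t=0$ we have $v_0=j$: the inclusion $\range(C_q^j) \subseteq \range(C_k^j)$ for $q \geq k$ follows by iterating $\range(C_q^j) \subseteq \range(C_{q-1}^j)$ from Lemma~\ref{l:pssd-sub-monotone} (choosing $j$ itself from the index set $\{i\}\cup\inn^k(i)$), while the symmetry $\range(\dxn{j} C_q^j) = \range(\dyn{j} C_q^j)$ is precisely Theorem~\ref{t:pssd-agent-prop}(c). For the inductive step, assume the claim for $t$ and fix $q \geq k+(t+1)$, so that $q-1 \geq k+t$. Since $v_t \in \inn(v_{t+1})$, Lemma~\ref{l:pssd-sub-monotone} gives $\range(C_q^{v_{t+1}}) \subseteq \range(C_{q-1}^{v_t})$, which together with the inductive hypothesis $\range(C_{q-1}^{v_t}) \subseteq \range(C_k^j)$ establishes the inclusion at level $t+1$.

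The crux is transferring the symmetry. Set $P := C_{q-1}^{v_t}$ and $C := C_q^{v_{t+1}}$, so $\range(C)\subseteq\range(P)$ and I may write $C = PM$ with $M$ of full column rank. By the inductive hypothesis $P$ is symmetric with respect to agent $j$'s data, i.e.\ $\dyn{j}P = \dxn{j}P R_a$ for some $R_a$; by Theorem~\ref{t:pssd-agent-prop}(c) at node $v_{t+1}$, $C$ is symmetric with respect to agent $v_{t+1}$'s data, i.e.\ $\dyn{v_{t+1}}C = \dxn{v_{t+1}}C R_b$ with $R_b$ square and invertible (equal full-column-rank ranges). The key observation, and the reason the signature snapshots are introduced, is that restricting either relation to the shared signature rows yields $\dys P = \dxs P R_a$ and $\dys C = \dxs C R_b$, and that $\dxs$ has full column rank, so $\dxs P$ is injective. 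Using $\dxs C = \dxs P M$ and $\dys C = \dys P M = \dxs P R_a M$, the two signature relations force $\dxs P M R_b = \dxs P R_a M$, whence $R_a M = M R_b$ by injectivity of $\dxs P$. Substituting, $\dyn{j}C = \dyn{j}P M = \dxn{j}P R_a M = \dxn{j}P M R_b = \dxn{j}C R_b$, and since $R_b$ is invertible this gives $\range(\dyn{j}C) = \range(\dxn{j}C)$, the symmetry at level $t+1$.

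The main obstacle is exactly this last step: a generic subspace of a $j$-symmetric subspace need not itself be $j$-symmetric, so the monotonicity inclusion of part~(a) alone is insufficient. What rescues the argument is that the full-column-rank signature data, shared by every agent, pins down the linear relation between $\dxn{a}C$ and $\dyn{a}C$ independently of the agent $a$; this is what lets the symmetry produced by $v_{t+1}$'s local SSD computation propagate back to agent $j$'s data. In writing this up I would isolate the computation of the previous paragraph as a standalone lemma, valid for any two agents whose local data contain the signature rows, and dispatch the zero-matrix cases at the outset using Theorem~\ref{t:pssd-agent-prop}(a) and Lemma~\ref{l:pssd-sub-monotone}, before running the path induction.
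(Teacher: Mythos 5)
Your proof is correct and rests on the same key mechanism as the paper's: the shared full-column-rank signature rows pin down the linear relation between $\dxn{a}C$ and $\dyn{a}C$ independently of the agent $a$, which is exactly how the paper transfers the range symmetry (its computation $FK^* = KF$ is your $R_aM = MR_b$). The only difference is organizational — the paper first chains Lemma~\ref{l:pssd-sub-monotone} along the path to get the inclusion and then applies the signature-data transfer once between the endpoints $j$ and $i$, whereas you apply the same transfer edge by edge inside a joint induction — but this is the same argument in substance.
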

\begin{proof}
  To prove~\eqref{eq:subspace-inclusion-directed-path}, we label each
  node on the path $P$ by $p_1 $ through $p_{l+1}$, with $p_1$
  corresponding to node $j$ and $p_{l+1}$ corresponding to node $i$
  (note that a node will have more than one label if it appears more
  than once in the path, which does not affect the argument). Based on
  Lemma~\ref{l:pssd-sub-monotone}, for all $k \in \naturals$, one can
  write
  \begin{align*}
    \range(C_{k+m}^{p_{m+1}}) \subseteq \range(C_{k+m-1}^{p_m}), \,
    \forall m \in \until{l}.
  \end{align*}
  Using this equation $l$ times yields $ \range(C_{k+l}^i) \subseteq
  \range (C_k^{j})$.  Moreover, using Lemma~\ref{l:pssd-sub-monotone}
  once again for node $i$, we deduce $\range(C_q^i) \subseteq
  \range(C_{k+l}^i)$ for every $q \geq k+l$, and the proof follows.

  Regarding~\eqref{eq:symmetric-subspace-directed-path}, the case
  $C_q^i = 0$ is trivial. Suppose then that $C_q^i \neq 0$, with full
  column rank
  (cf. Theorem~\ref{t:pssd-agent-prop}(a)). Using~\eqref{eq:subspace-inclusion-directed-path}
  and Theorem~\ref{t:pssd-agent-prop}(a), we deduce that $C_k^{j}$
  also has full column rank. Therefore, there exists a full column
  rank matrix $F$ such that
  \begin{align}\label{eq:matrix-product-directed-path}
    C_q^i = C_k^{j} F.
  \end{align}
  From Theorem~\ref{t:pssd-agent-prop}(c), we have $ \range(\dxn{i}
  C_q^i) = \range(\dyn{i} C_q^i)$.  Looking at this equation in a
  row-wise manner and considering only the signature matrices, one can
  write 
  \begin{align}\label{eq:signature-range-equality1}
    \range(\dxs C_k^{j} F) = \range(\dys C_k^{j} F),
  \end{align}
  where we have used~\eqref{eq:matrix-product-directed-path}.  From
  Theorem~\ref{t:pssd-agent-prop}(c) for agent $j$,
  \begin{align*}
    \range(\dxn{j} C_k^{j}) = \range(\dyn{j} C_k^{j}).
  \end{align*}
  Using this equation and given the fact that $\dxn{j}$, $\dyn{j}$,
  and $C_k^{j}$ have full column rank, there must exist a square
  nonsingular matrix $K$ such that
  \begin{align}\label{eq:matrix-equality1}
    \dyn{j} C_k^{j} = \dxn{j} C_k^{j} K.
  \end{align}
  Looking at this equation in a row-wise manner and considering only
  the signature data matrices, 
  \begin{align}\label{eq:signature-equality1}
    \dys C_k^{j} = \dxs C_k^{j} K.
  \end{align}
  Using a similar argument for~\eqref{eq:signature-range-equality1},
  one can deduce that there exists a nonsingular square matrix $K^*$
  such that
  \begin{align}\label{eq:auxx}
    \dys C_k^{j} F= \dxs C_k^{j} F K^*.
  \end{align}
  Multiplying both sides of~\eqref{eq:signature-equality1} from the
  right by $F$ and subtracting from~\eqref{eq:auxx} results in
  \begin{align*}
    \dxs C_k^{j} (F K^* - K F) = 0.
  \end{align*}
  Since $\dxs C_k^{j}$ has full column rank, we deduce $ F K^* = K F$.
  Now, by multiplying both sides of~\eqref{eq:matrix-equality1} from
  the right by $F$ and replacing $KF$ by $F K^*$, we can write
  \begin{align*}
    \dyn{j} C_q^i = \dyn{j} C_k^{j} F= \dxn{j} C_k^{j} F K^* = \dxn{j}
    C_q^i K^* ,
  \end{align*}
  where we have used~\eqref{eq:matrix-product-directed-path} twice,
  which shows the result.
\end{proof}

Next, we show that globally reachable nodes in the digraph determine the
SSD subspace in a finite number of iterations.

\begin{theorem}\longthmtitle{Globally Reachable Nodes Find the SSD
    Subspace}\label{t:globally-reachable-ssd-sub}
  Given an arbitrary constant digraph, let $i$ be a globally reachable
  node and define $l = \max_{j \in \until{M}} \dist(j,i)$. Then, under
  the P-SSD algorithm, $\range(C_k^i) = \range(\cssd) $ for all $k
  \geq l+1$.
\end{theorem}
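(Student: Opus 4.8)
The plan is to prove the two inclusions separately. One direction, $\range(\cssd) \subseteq \range(C_k^i)$ for every $k$, is already furnished by Proposition~\ref{p:inclusion-ssd-subspace}, so all the work goes into the reverse inclusion $\range(C_k^i) \subseteq \range(\cssd)$ for $k \geq l+1$. Fixing such a $k$, the case $C_k^i = 0$ is trivial, so I would assume $C_k^i \neq 0$, whence $C_k^i$ has full column rank by Theorem~\ref{t:pssd-agent-prop}(a). The strategy is to establish that $C_k^i$ satisfies the \emph{global} symmetric subspace condition $\range(\dx C_k^i) = \range(\dy C_k^i)$ and then invoke the maximality of the SSD output, i.e., Theorem~\ref{t:SSD-convergence}(b) applied to the full data $\dx,\dy$, to conclude $\range(C_k^i) \subseteq \range(\cssd)$.

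First I would collect a \emph{local} symmetric subspace condition for every agent. Since $i$ is globally reachable, each node $j \in \until{M}$ admits a directed path to $i$ of length $\dist(j,i) \leq l$. Applying Proposition~\ref{p:symmetric-sub-directed-path} (with its iteration index set to $1$ and this path) gives $\range(\dxn{j} C_q^i) = \range(\dyn{j} C_q^i)$ for all $q \geq 1 + \dist(j,i)$; since $k \geq l+1 \geq 1 + \dist(j,i)$, this holds at $q=k$ for every $j$ (the case $j=i$ being immediate from Theorem~\ref{t:pssd-agent-prop}(c)). Thus each agent's data is ``balanced'' on $\range(C_k^i)$.

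The crux, and the step I expect to be the main obstacle, is \emph{gluing} these $M$ local conditions into the single global one. For each $j$, the range equality together with the full column rank of $\dxn{j}C_k^i$ and $\dyn{j}C_k^i$ yields a nonsingular square matrix $K_j$ with $\dyn{j} C_k^i = \dxn{j} C_k^i K_j$. Restricting this identity to the shared signature rows gives $\dys C_k^i = \dxs C_k^i K_j$, and since $\dxs C_k^i$ has full column rank ($\dxs$ and $C_k^i$ both do), $K_j$ is \emph{uniquely} pinned down as $K_j = (\dxs C_k^i)^\dagger \dys C_k^i$, independent of $j$. Denoting this common value by $K$, every agent therefore obeys $\dyn{j} C_k^i = \dxn{j} C_k^i K$ with the same nonsingular $K$. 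Because $\bigcup_{j} \rows([\dxn{j},\dyn{j}]) = \rows([\dx,\dy])$, stacking these identities row by row produces $\dy C_k^i = \dx C_k^i K$, and the nonsingularity of $K$ delivers $\range(\dx C_k^i) = \range(\dy C_k^i)$.

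Finally, I would apply Theorem~\ref{t:SSD-convergence}(b) to the full data $\dx,\dy$: since $C_k^i$ satisfies $\range(\dx C_k^i) = \range(\dy C_k^i)$, its range is contained in the maximal symmetric subspace identified by SSD, i.e., $\range(C_k^i) \subseteq \range(\cssd)$. Combined with the reverse inclusion from Proposition~\ref{p:inclusion-ssd-subspace}, this yields $\range(C_k^i) = \range(\cssd)$ for all $k \geq l+1$, as claimed. The delicate point throughout is that the shared signature rows force a single common $K$ across agents; without them the local conditions would not assemble into the global symmetric subspace condition.
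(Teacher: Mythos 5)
Your proposal is correct and follows essentially the same route as the paper's proof: local symmetric range conditions at each agent via Proposition~\ref{p:symmetric-sub-directed-path}, a common $K$ forced by the shared signature rows, stacking to the global condition, and maximality via Theorem~\ref{t:SSD-convergence}(b) plus Proposition~\ref{p:inclusion-ssd-subspace}. The only (immaterial) difference is that you argue directly at each $k \geq l+1$, whereas the paper establishes the claim at $k = l+1$ and extends to larger $k$ via Lemma~\ref{l:pssd-sub-monotone}.
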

\begin{proof}
  If $C_k^i = 0$ for some $k \in \naturals$, then based on
  Proposition~\ref{p:inclusion-ssd-subspace} and
  Theorem~\ref{t:SSD-convergence}(a), we have $\cssd = 0$ and
  consequently, $\range(C_k^i) = \range(\cssd)$. Now, suppose that for
  all $k \in \naturals$, $C_k^i \neq 0$ and has full column rank
  (cf. Theorem~\ref{t:pssd-agent-prop}(a)).  Since $i$ is a globally
  reachable node, for each node $j \in \until{M}\setminus \{i\}$,
  there exists a directed path with length $l_j = \dist(j,i) < \infty$
  to node $i$.  Using Proposition~\ref{p:symmetric-sub-directed-path}
  for $j \in \until{M}\setminus \{i\}$ and
  Theorem~\ref{t:pssd-agent-prop}(c) for agent $i$ , we can write
  \begin{align*}
    \range(\dxn{j} C_{l+1}^i) = \range (\dyn{j} C_{l+1}^i), \, \forall j \in \until{M}.
  \end{align*}
  Moreover, since for all $j \in \until{M}$, $\dxn{j}$, $\dyn{j}$, and
  $C_{l+1}^i$ have full column rank, there exist nonsingular square
  matrices $\{K_j\}_{j=1}^M$ such that
  \begin{align}\label{eq:signature-matrix-equality}
    \dyn{j} C_{l+1}^i = \dxn{j} C_{l+1}^i K_j, \, \forall j \in \until{M}.
  \end{align}
  Looking at~\eqref{eq:signature-matrix-equality} in a row-wise manner
  and only considering the signature data sets, one can write
  \begin{align*}
    \dys C_{l+1}^i = \dxs C_{l+1}^i K_j, \, \forall j \in \until{M}.
  \end{align*}
  For any $p \neq q \in \until{M}$, we
  subtract~\eqref{eq:signature-matrix-equality} evaluated at $j = p$
  and at $j = q$ to obtain $ \dxs C_{l+1}^i (K_q - K_p) = 0$. Since $\dxs$
  and $C_{l+1}^i$ have full column rank, this implies $K:= K_1 = \cdots =
  K_M $. Looking at~\eqref{eq:signature-matrix-equality} in a row-wise
  manner and considering the fact that
  \begin{align*}
    \bigcup_{j=1}^M \rows([\dxn{j}, \dyn{j}]) = \rows([\dx, \dy]),
  \end{align*}
  we deduce $\dx C_{l+1}^i K = \dy C_{l+1}^i$ and consequently
  \begin{align*}
    \range(\dx C_{l+1}^i) = \range(\dy C_{l+1}^i).
  \end{align*}
  This, together with Theorem~\ref{t:SSD-convergence}(b), implies
  $\range(C_{l+1}^i) \subseteq \range(\cssd)$. This inclusion along
  with Proposition~\ref{p:inclusion-ssd-subspace} yields
  $\range(C_{l+1}^i) = \range(\cssd)$. Finally, for every $k \geq
  l+1$, Lemma~\ref{l:pssd-sub-monotone} implies $\range(C_k^i)
  \subseteq \range(C_{l+1}^i) = \range(\cssd)$, which along with
  Proposition~\ref{p:inclusion-ssd-subspace} yields $\range(C_k^i) =
  \range(\cssd) $.
\end{proof}




Theorem~\ref{t:globally-reachable-ssd-sub} shows that the globally
reachable nodes identify the SSD subspace. Next, we use this
  fact to derive guarantees for agreement of the range space of all
  agents' matrices on the SSD subspace over strongly connected
  networks (we refer to this as P-SSD reaching consensus).

\begin{theorem}\longthmtitle{Consensus on SSD Subspace and Time
    Complexity of P-SSD over Strongly Connected
    Digraphs}\label{t:consensus-complexity}
  Given a strongly connected digraph with diameter $d$,
  \begin{enumerate}
  \item P-SSD reaches consensus in at most $d+1$
    iterations: specifically, for all $k \geq d+1$ and all $i
    \in \until{M}$,
    \begin{align*}
      \range(C_k^i) &= \range(\cssd), \\
      \range(\dx C_k^i) &= \range(\dy C_k^i);
    \end{align*}
    
  \item the P-SSD algorithm terminates after at most $d+2$ iterations,
    i.e, $\flag_{d+2}^i = 1$ for each $i \in \until{M}$.
  \end{enumerate}
\end{theorem}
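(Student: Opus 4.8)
The plan is to leverage Theorem~\ref{t:globally-reachable-ssd-sub} together with the connectivity of the network. In a strongly connected digraph every node is globally reachable, so for each $i \in \until{M}$ I can apply that result. The key observation is that the quantity $l_i := \max_{j \in \until{M}} \dist(j,i)$ appearing there is bounded above by the diameter $d = \max_{j,i} \dist(j,i)$. Hence Theorem~\ref{t:globally-reachable-ssd-sub} yields $\range(C_k^i) = \range(\cssd)$ for all $k \geq l_i + 1$, and since $l_i + 1 \le d+1$, this holds for all $k \geq d+1$ and all $i \in \until{M}$. This establishes the first identity in part (a).

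For the second identity in part (a), I would relate $C_k^i$ to $\cssd$ directly. When $\cssd = 0$ both ranges are trivially $\{0\}$. Otherwise, since $C_k^i$ and $\cssd$ are full column rank (by Theorem~\ref{t:pssd-agent-prop}(a) and Theorem~\ref{t:SSD-convergence}(a)) and share the same range, there is a square nonsingular matrix $G$ with $C_k^i = \cssd G$. Multiplying by $\dx$ and $\dy$ and using nonsingularity of $G$ gives $\range(\dx C_k^i) = \range(\dx\cssd)$ and $\range(\dy C_k^i) = \range(\dy\cssd)$; the conclusion then follows from the symmetry $\range(\dx\cssd) = \range(\dy\cssd)$ guaranteed by Theorem~\ref{t:SSD-convergence}(a).

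For part (b), I would show that at iteration $d+2$ every agent necessarily takes the else branch (Steps~\ref{al:oldsub}--\ref{al:flag}), so that $\flag_{d+2}^i = 1$. By part (a), $\range(C_{d+2}^i) = \range(C_{d+1}^i) = \range(\cssd)$ for each $i$. Theorem~\ref{t:pssd-agent-prop}(b) gives $\range(D_{d+2}^i E_{d+2}^i) = \range(C_{d+2}^i) = \range(C_{d+1}^i)$. Since $D_{d+2}^i E_{d+2}^i$ and $C_{d+1}^i$ are each zero or full column rank, equality of their ranges forces equality of their column counts, i.e.\ $\coln(D_{d+2}^i E_{d+2}^i) = \coln(C_{d+1}^i)$. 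Thus the strict inequality tested in Step~\ref{al:if} fails, Steps~\ref{al:oldsub}--\ref{al:flag} execute, and $\flag_{d+2}^i = 1$ for every $i \in \until{M}$.

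No genuinely hard step is anticipated, since the heavy lifting is done by Theorem~\ref{t:globally-reachable-ssd-sub}; the steps here amount to the diameter bound plus rank/dimension bookkeeping. The main things to be careful about are the degenerate case $\cssd = 0$ and the precise justification of ``equal range $\Rightarrow$ equal number of columns,'' which relies on the full-column-rank property from Theorem~\ref{t:pssd-agent-prop}(a) together with the convention $\coln(0)=0$. The only subtlety in part (b) is ensuring that the subspace agreement from part (a) is correctly translated into the column-count comparison that drives the branch tested in Step~\ref{al:if}.
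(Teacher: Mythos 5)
Your proof is correct. Part (a) follows the paper's argument exactly: in a strongly connected digraph every node is globally reachable with $\max_j \dist(j,i) \le d$, so Theorem~\ref{t:globally-reachable-ssd-sub} gives consensus on $\range(\cssd)$ by iteration $d+1$, and the second identity is the advertised corollary of Theorem~\ref{t:SSD-convergence}(a), which you simply spell out via the nonsingular change of basis $C_k^i = \cssd G$. Part (b) is where you diverge: the paper re-derives the column-count equality from scratch, showing $\range(D_{d+2}^i) = \range(\cssd)$, passing to the local data, and invoking Theorem~\ref{t:SSD-convergence}(b) with the identity matrix to conclude that $E_{d+2}^i$ is square and nonsingular. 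You instead invoke Theorem~\ref{t:pssd-agent-prop}(b) at $k = d+2$ to get $\range(D_{d+2}^i E_{d+2}^i) = \range(C_{d+2}^i) = \range(C_{d+1}^i)$ and then convert equal ranges to equal column counts via the zero-or-full-column-rank dichotomy. This is not circular, since Theorem~\ref{t:pssd-agent-prop}(b) is proved unconditionally by case analysis on the branch taken, and it yields a shorter argument by reusing a lemma the paper proves but does not deploy here; the paper's longer route has the minor virtue of exhibiting explicitly why $E_{d+2}^i$ is nonsingular. One small bookkeeping note: the fact that $D_{d+2}^i E_{d+2}^i$ is zero or full column rank comes from the definition of $\basis$ and Theorem~\ref{t:SSD-convergence}(a) applied to $E_{d+2}^i$, not from Theorem~\ref{t:pssd-agent-prop}(a), which only covers $C_k^i$; this does not affect the validity of your argument.
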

\begin{proof}
  Regarding (a), the proof of consensus on the SSD subspace readily
  follows from Theorem~\ref{t:globally-reachable-ssd-sub} by noting
  that any node is globally reachable from any other node through a
  path with at most $d$ edges. The rest of the statement is a
  corollary of this fact together with
  Theorem~\ref{t:SSD-convergence}(a).

  Regarding (b), one needs to establish that not only the range space
  of the agents' matrices remains the same but also that the P-SSD
  algorithm executes Steps~\ref{al:oldsub}-\ref{al:flag}, which means
  the matrices themselves also remain the same and flags become 1.
  Note that using~(a) we deduce that $ \range(C_{d+1}^i) =
  \range(C_{d+2}^i)$ for all $i \in \until{M}$. Hence, if $C_{d+1}^i =
  0$, then $C_{d+2}^i = 0$, and the P-SSD algorithm does not execute
  Steps~\ref{al:newsub}-\ref{al:flag0} but executes
  Steps~\ref{al:oldsub}-\ref{al:flag} instead. Consequently,
  $\flag_{d+2}^i = 1$ for every agent $i \in \until{M}$. Suppose then
  that $C_{d+1}^i \neq 0$, with full column rank based on
  Theorem~\ref{t:pssd-agent-prop}(a), and consequently $\cssd \neq 0$
  (also with full column rank, cf.
  Theorem~\ref{t:SSD-convergence}(a)). Using~(a), we then deduce that
  in Step~\ref{al:basis} of the P-SSD algorithm for agent $i$ at
  iteration $d+2$, we have $\range(D_{d+2}^i) = \range(\cssd)$.  Since
  $D_{d+2}^i$ has full column rank by definition,
  \begin{align*}
    \range(\dx D_{d+2}^i) = \range(\dy D_{d+2}^i).
  \end{align*}
  Looking at this equation in a row-wise manner and considering the
  fact that $ \rows([\dxn{i}, \dyn{i}]) \subseteq \rows ([\dx, \dy])$,
  \begin{align*}
    \range(\dxn{i} D_{d+2}^i I) = \range(\dyn{i} D_{d+2}^i I),
  \end{align*}
  where $I$ is the identity matrix with appropriate size. This fact,
  together with definition of $E_{d+2}^i$ in Step~\ref{al:ssd} of the
  P-SSD algorithm, implies that $\range(I) \subseteq
  \range(E_{d+2}^i)$
  (cf. Theorem~\ref{t:SSD-convergence}(b)). Consequently, $E_{d+2}^i$
  must be a nonsingular square matrix (cf.
  Theorem~\ref{t:SSD-convergence}(a)). Hence,
  \begin{align*}
    \coln(D_{d+2}^i E_{d+2}^i) = \coln(D_{d+2}^i) = \coln(\cssd) =
    \coln(C_{d+1}^i).
  \end{align*}
  Consequently, the P-SSD algorithm executes
  Steps~\ref{al:oldsub}-\ref{al:flag} and we have $C_{d+2}^i
  =C_{d+1}^i$ and $\flag_{d+2}^i = 1$ for every agent $i \in
  \until{M}$, concluding the proof.
\end{proof}

In general, the P-SSD algorithm might terminate faster than the
explicit upper bound given in Theorem~\ref{t:consensus-complexity}.
The main reason for this is that each agent depends on performing SSD
on its own data and \emph{usually} SSD can identify the maximal
Koopman-invariant subspace in the span of dictionary with a moderate
amount of data.  The next remark shows that the floating point
operation (FLOPs) complexity of the P-SSD algorithm is much lower for
each processor that the SSD algorithm and hence the P-SSD subspace
search runs much faster than SSD.

\begin{remark}\longthmtitle{Floating Point Operation (FLOP) Complexity of
    P-SSD}\label{r:flops-complexity}
  For the (standard) case that $N \gg N_d$, let $N_i$ be the number of
  data snapshots available to agent $i \in \until{M}$. Considering the
  fact that the most time consuming operation in
  Algorithm~\ref{algo:pssd} is Step~\ref{al:ssd}, one can
  use~\cite[Remark 5.2]{MH-JC:20-tac} and deduce that each iteration
  of the P-SSD algorithm takes $O(N_i N_d^3)$ FLOPs for agent $i$.
  Based on Theorem~\ref{t:consensus-complexity}(b), agent $i$ performs
  at most $O(N_i N_d^3 d)$ to find the SSD subspace. In case the data
  is uniformly distributed among the agents, for each $i \in
  \until{M}$,
  \begin{align*}
    N_i = O \Big(\rown(X_s) + \frac{N-\rown(X_s)}{M} \Big) = O(N/M),
  \end{align*}
  where in the last equality we have used $\rown(X_s) = O(N_d)$ (in
  fact one can find signature data with $\rown(X_s)\leq 2 N_d$ as a
  consequence of Assumption~\ref{a:full-rank}). Hence, the FLOPs
  complexity of each agent with data uniformly distributed across the
  agents is $O(dN N_d^3 /M)$. This gives a factor of $d/M$ reduction
  when compared to the FLOP complexity $O(N N_d^3)$ of SSD. If
    the processors are not homogeneous, then the uniform distribution
    of the data is not optimal. In general, the optimal distribution
    minimizes the maximum time taken for processors to complete one
    algorithm iteration, which means that faster processors should get
    more data. In practice, our simulations show that P-SSD runs
  drastically faster than the worst-case bound. \oprocend
\end{remark}


\begin{remark}\longthmtitle{Communication Complexity of P-SSD}
  Given a strongly connected digraph with $M$ nodes and diameter $d$,
  agent $i$ transmits $| \outn(i)|$ matrices with maximum size of $N_d
  \times N_d$. Thus, considering the real numbers as basic messages,
  and using the fact that the algorithm terminates after at most $d+1$
  iterations, the communication complexity of the P-SSD algorithm is
  \begin{align*}
    O(d N_d^2 \sum_{i=1}^M | \outn(i)|) = O(d N_d^2 E),
  \end{align*}
  where $E$ is the number of edges in the digraph.  In most conventional
  parallel computing units, the processors communicate through a
  shared bus. Hence, at iteration $k$ of the P-SSD algorithm, the
  $i$th agent only sends one message comprised of the matrix $C_k^i$
  to the bus. Therefore, over such networks, the communication
  complexity reduces to $O(d N_d^2 M)$.  \oprocend
\end{remark}

  \begin{remark}\longthmtitle{Approximated
      P-SSD}\label{r:approximated-PSSD}
    The original dictionary might not contain nontrivial
    Koopman-invariant subspaces. For such cases, one can look for
    \emph{approximate} informative invariant subspaces by replacing
    Step~\ref{al:ssd} of P-SSD with the Approximated-SSD
    algorithm~\cite[Section VII]{MH-JC:20-tac}, which approximates
    Koopman-invariant subspaces given a tuning parameter~$\epsilon$.
    This modification gives rise to the Approximated P-SSD algorithm
    for which a similar set of results to the ones stated here for the
    P-SSD algorithm regarding equilibria, finite termination and
    eventual agreement of the matrix iterates can be established (we
    omit them for space reasons).  The agents' identified subspaces
    satisfy the accuracy bounds provided in~\cite[Section
    VII]{MH-JC:20-tac} based on their local data.  \oprocend
\end{remark}

\section{Robustness Against Packet Drops and Time-Varying Networks} 
The parallel nature of P-SSD enables the use of parallel and
  distributed processing hardware, such as GPUs and clusters of
  processors.  The communication between processors required by such
  hardware is subject to packet drops or communication
  failures. Moreover, there are instances when some processors are
  needed for other tasks or go offline temporarily, resulting in
  time-varying networks. To address these
  issues, we investigate the robustness of the P-SSD algorithm
against packet drops in the communication network. To tackle this
problem, we model the network by a time-varying digraph where a
dropped packet from agent $i$ to agent $j$ at time $k \in \naturals$
corresponds to the absence of edge $(i,j)$ at time $k$ from the
digraph at that time. The next result shows that if the digraph
remains repeatedly jointly strongly connected, the agents executing
the P-SSD algorithm reach a consensus on the SSD subspace in finite
time.

\begin{theorem}\longthmtitle{Consensus on SSD Subspace and Finite-Time
    Convergence of P-SSD over Repeatedly Jointly Strongly Connected
    Digraphs}\label{t:consensus-jointly-strongly-connected}
  Given a time-varying repeatedly jointly strongly connected digraph
  $\{G_k=(\until{M},E_k)\}_{k=1}^\infty$, the P-SSD algorithm reaches
  a consensus equilibrium on the SSD subspace in finite time, i.e.,
  there exists $l \in \naturals$ such that for every iteration $k \geq
  l$, $C_k^i = C^i_{l}$ for all $i \in \until{M}$ with
  \begin{align*}
    \range(C_l^1)=\range(C_l^2) = \cdots = \range(C_l^M) =
    \range(\cssd).
  \end{align*}
\end{theorem}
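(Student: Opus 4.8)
The plan is to reduce the time-varying case to the time-invariant result of Theorem~\ref{t:globally-reachable-ssd-sub} by exploiting the repeatedly jointly strongly connected (RJSC) hypothesis, which guarantees that over each window of length $l$ the composed digraph $G_{\tau+kl+l-1} \circ \cdots \circ G_{\tau+kl}$ is strongly connected. The key mechanism is Lemma~\ref{l:pssd-sub-monotone}, which holds for any time-varying digraph: it says $\range(C_k^i) \subseteq \range(C_{k-1}^j)$ for $j \in \{i\} \cup \inn^k(i)$. The first step is to generalize Proposition~\ref{p:symmetric-sub-directed-path} to the time-varying setting. The notion of a directed path must be replaced by a \emph{time-respecting} path: a sequence of nodes $j=p_1, p_2, \ldots, p_{l+1}=i$ such that the edge $(p_m, p_{m+1})$ is present in the digraph $G_{k+m}$ at the appropriate time step. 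Such a path exists precisely because the edges composing the strongly connected composition $G_\circ$ correspond, by the definition of digraph composition, to concatenations of edges present in the successive factor digraphs. Chaining Lemma~\ref{l:pssd-sub-monotone} along this time-respecting path then yields the analogue of~\eqref{eq:subspace-inclusion-directed-path}, namely $\range(C_{k+l}^i) \subseteq \range(C_k^j)$, and the row-wise signature argument for the symmetric-subspace property~\eqref{eq:symmetric-subspace-directed-path} carries over verbatim since it only relies on each agent's local data containing the signature matrices.

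Next I would combine finite termination (Proposition~\ref{p:pssd-consensus}) with the RJSC structure. Since the algorithm reaches an equilibrium after finitely many iterations, there exists $l' \in \naturals$ such that $C_k^i = C_{l'}^i$ for all $k \geq l'$ and all $i$; beyond this point the agent matrices are frozen. The crucial observation is that, because the network is RJSC, within each subsequent window of length $l$ every ordered pair of nodes is connected by a time-respecting path. Applying the generalized version of Proposition~\ref{p:symmetric-sub-directed-path} after the equilibrium is reached, I would deduce $\range(\dxn{j} C_k^i) = \range(\dyn{j} C_k^i)$ for \emph{every} pair $i,j$, where $C_k^i$ is the frozen matrix. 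This globalizes each agent's local symmetry property (Theorem~\ref{t:pssd-agent-prop}(c)) to hold against all agents' data simultaneously.

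With global symmetry in hand, I would mimic the endgame of Theorem~\ref{t:globally-reachable-ssd-sub}. Using Assumption~\ref{a:full-rank} and the full column rank of $C_k^i$ (Theorem~\ref{t:pssd-agent-prop}(a)), the relations $\range(\dxn{j} C_k^i) = \range(\dyn{j} C_k^i)$ produce nonsingular matrices $K_j$ with $\dyn{j} C_k^i = \dxn{j} C_k^i K_j$; the row-wise signature subtraction argument forces $K_1 = \cdots = K_M =: K$, and since $\bigcup_j \rows([\dxn{j},\dyn{j}]) = \rows([\dx,\dy])$, this assembles into $\dy C_k^i = \dx C_k^i K$, hence $\range(\dx C_k^i) = \range(\dy C_k^i)$. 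Theorem~\ref{t:SSD-convergence}(b) then gives $\range(C_k^i) \subseteq \range(\cssd)$, and Proposition~\ref{p:inclusion-ssd-subspace} (which already holds for arbitrary time-varying digraphs) gives the reverse inclusion, yielding $\range(C_k^i) = \range(\cssd)$ for every $i$. Taking $l$ to be the maximum of the equilibrium time $l'$ and the first iteration past a full connectivity window establishes the claim.

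The main obstacle I anticipate is making the time-respecting path argument fully rigorous: one must track \emph{which} window of factor digraphs the path traverses and verify that the edges appear at exactly the time steps where Lemma~\ref{l:pssd-sub-monotone} is invoked. Specifically, the definition of composition $G_2 \circ G_1$ orders the factors so that an edge in the composition corresponds to an edge in $G_1$ followed by an edge in $G_2$; care is needed to align this ordering with the forward-in-time indexing of the algorithm iterates so that the subspace inclusions chain in the correct direction. Beyond this bookkeeping, the remaining steps are direct adaptations of the constant-digraph arguments already developed in the excerpt.
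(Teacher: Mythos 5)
Your proposal is correct, but it routes the middle of the argument differently from the paper. Both proofs start identically (finite equilibrium via Proposition~\ref{p:pssd-consensus}) and end identically (global symmetry for all data $\Rightarrow$ nonsingular $K_j$'s glued into a single $K$ by the signature-row subtraction $\Rightarrow$ $\range(\dx C) = \range(\dy C)$ $\Rightarrow$ Theorem~\ref{t:SSD-convergence}(b) plus Proposition~\ref{p:inclusion-ssd-subspace}). The difference is how each obtains the statement $\range(\dxn{j} C^i) = \range(\dyn{j} C^i)$ for \emph{all} pairs $i,j$. You generalize Proposition~\ref{p:symmetric-sub-directed-path} to time-respecting paths, which requires redoing its part~(b) — the $F$, $K$, $K^*$ signature argument — in the time-varying setting for every ordered pair. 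The paper instead exploits the fact that the iterates are already frozen: the RJSC hypothesis yields a \emph{closed} temporal path visiting every node after the equilibrium time, and chaining only Lemma~\ref{l:pssd-sub-monotone} around that closed cycle produces a cyclic chain of inclusions $\range(C_l^{p_1}) \subseteq \range(C_l^{p_L}) \subseteq \cdots \subseteq \range(C_l^{p_1})$, hence range consensus outright. Once all the $C_l^i$ share the same range and have full column rank, each $C_l^j$ equals $C_l^1$ times an invertible matrix, so Theorem~\ref{t:pssd-agent-prop}(c) for agent $i$ immediately transfers to the matrix $C_l^1$ against agent $i$'s data — no path-based propagation of the symmetry property is needed at all. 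Your version buys a reusable time-varying analogue of Proposition~\ref{p:symmetric-sub-directed-path} (and correctly flags the composition-ordering bookkeeping as the delicate point, which does work out with the paper's convention that the first factor of $G_k \circ \cdots \circ G_1$ acts first in time); the paper's version is shorter because the closed-path trick lets subspace monotonicity alone do the work that you assign to the signature-data argument.
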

\begin{proof}
  The fact that the P-SSD algorithm reaches an equilibrium is a direct
  consequence of Proposition~\ref{p:pssd-consensus}, which states that
  there exists $l \in \naturals$ such that
  \begin{align}\label{eq:sub-stays-the-same}
    C_k^i = C_l^i, \, \forall k \geq l, \, \forall i \in \until{M}.
  \end{align}
  Hence, we only need to show that the range spanned by these matrices
  corresponds to the SSD subspace.  Since the digraph is repeatedly
  jointly strongly connected, there exists a closed ``temporal'' path
  after time $l$ that goes through every node of the digraph. By this
  we mean that there exist $L$, time instants $l < k_1 < k_2 < \cdots
  < k_L$ and node labels $p_1, \dots , p_L$ covering all of $
  \until{M}$ (note that some nodes might correspond to more that one
  label) such that
  \begin{align*}
    (p_i,p_{(i \bmod L)+1}) \in E_{k_i}, \, \forall i \in \until{L}.
  \end{align*}
  Using~\eqref{eq:sub-stays-the-same}, Lemma~\ref{l:pssd-sub-monotone}
  at times $\{k_i\}_{i=1}^L$, and the fact that the path is closed,
  we deduce
  \begin{align*}
    \range(C_l^{p_1})  \subseteq \range(C_l^{p_L}) \subseteq \cdots \subseteq \range(C_l^{p_2})\subseteq \range(C_l^{p_1}).
  \end{align*}
  Hence, $ \range(C_l^{p_1}) = \cdots = \range(C_l^{p_L})$.  Since the
  path goes through every node and~\eqref{eq:sub-stays-the-same}
  again, we arrive at the consensus
  \begin{align}\label{eq:sub-equality}
    \range(C_k^1) = \cdots = \range(C_k^M), \, \forall k \geq l.
  \end{align}
  It remains to show that this consensus is achieved on the SSD
  subspace. The inclusion $ \range(\cssd) \subseteq \range(C_k^i) $
  for all $i \in \until{M}$ and $k \ge l$ follows from
  Proposition~\ref{p:inclusion-ssd-subspace}. To show the other
  inclusion, first note that if $C_k^i = 0$ for some $i \in \until{M}$
  and $k \geq l $, then $\cssd = 0$ and the proof follows.  Suppose
  then that $C_k^i$'s are nonzero with full column rank
  (cf. Theorem~\ref{t:pssd-agent-prop}(a)) for $k \geq l$. Based on
  Theorem~\ref{t:pssd-agent-prop}(c), for every $i \in \until{M}$ and
  $k \geq l$, we have $\range(\dxn{i} C_k^i) = \range(\dyn{i}
  C_k^i)$. Moreover, using~\eqref{eq:sub-equality} and the fact that
  all matrices $C_k^i$'s have full column rank, we have
  \begin{align*}
    \range(\dxn{i} C_k^j) = \range(\dyn{i} C_k^j), \, \forall
    i,j \in \until{M},
  \end{align*}
  for every $k \geq l$.  Using this equality for $j =1$ and $k = l$,
  \begin{align*} 
    \range(\dxn{i} C_l^1) = \range(\dyn{i} C_l^1), \, \forall i \in
    \until{M}.
  \end{align*}
  Hence, for every $i \in \until{M}$ there exists a nonsingular square
  matrix $K_i$ such that
  \begin{align}\label{eq:mat-equal-K}
    \dxn{i} C_l^1 K_i = \dyn{i} C_l^1 .
  \end{align}
  Now, looking at~\eqref{eq:mat-equal-K} in a row-wise manner, one
  obtains for the signature dictionary snapshots, $ \dxs C_l^1 K_i =
  \dys C_l^1$, for $ i \in \until{M}$, and hence
  \begin{align*}
    \dxs C_l^1 (K_i - K_j) = 0, \, \forall i,j \in \until{M}.
  \end{align*}
  Since $\dxs$ and $C_l^1$ have full column rank, we have $ K := K_1 =
  \cdots = K_M$.  Replacing $K_i$ by $K$ in~\eqref{eq:mat-equal-K}
  gives
  \begin{align*}
    \dxn{i} C_l^1 K = \dyn{i} C_l^1, \, \forall i \in \until{M}.
  \end{align*}
  Looking at this equation in a row-wise manner and since
  \begin{align*}
    \bigcup_{i=1}^M \rows([\dxn{i}, \dyn{i}]) = \rows([\dx, \dy]),
  \end{align*}
  one can write $\dx C_l^1 K = \dy C_l^1$ and consequently
  \begin{align*}
    \range(\dx C_l^1) = \range(\dy C_l^1).
  \end{align*}
  Using now Theorem~\ref{t:SSD-convergence}(b), we deduce
  $\range(C_l^1) \subseteq \range(\cssd)$.
  Using~\eqref{eq:sub-stays-the-same} and~\eqref{eq:sub-equality}, we
  obtain that $\range(C_k^i) = \range(C_k^1) = \range(C_l^1) \subseteq
  \range(\cssd)$, for all $i \in \until{M}$ and $k \ge l$, and this
  concludes the proof.
\end{proof}

An interesting interpretation of the result in
Theorem~\ref{t:consensus-jointly-strongly-connected} is that, for
time-invariant networks subject to failures in the communication
links, as long as the agents re-connect at least once within some
uniform time period, the P-SSD algorithm identifies the SSD subspace
in a finite number of iterations.

\begin{remark}\longthmtitle{Using the Agents' Matrix Iterates to Find
    Maximal Koopman-Invariant Subspaces and
    Eigenfunctions}\label{r:inv-subs-eigs-pssd}
  After the agents reach consensus on the SSD subspace, they can use
  their computed matrix $C$ instead of $\cssd$ to find the dictionary
  $\tilde{D}(x) = \tilde{D}(x) C$, for $x \in \Mc$. Any agent $i \in
  \until{M}$ can use its P-SSD matrix to find
  \begin{align*}
    K_{\operatorname{P-SSD}} = \tilde{D}(X_i)^\dagger \tilde{D}(Y_i) =
    \tilde{D}(X_s)^\dagger \tilde{D}(Y_s).
  \end{align*}
  Note that this is also equal to the square matrix $\Kssd$ found by
  SSD, cf.~\eqref{eq:ssd-newdictionary-data}.  Importantly, all the
  results for SSD subspaces in~\cite{MH-JC:20-tac} are also valid for
  the new dictionary $\tilde{D}$. 
  For space reasons, we omit those results here and only mention that
  under some mild conditions on data sampling, $\tilde{D}(x)$ spans
  the maximal Koopman-invariant subspace in $\Span(D(x))$ and
  $\tilde{D}(x)w$ is an eigenfunction of the Koopman operator almost
  surely, given $K_{\operatorname{P-SSD}} w = \lambda w$ with $\lambda
  \in \cplx$ and $w \in \cplx^{\coln(\Kssd)} \setminus \{0\}$
  (see~\cite[Theorems 5.7-5.8]{MH-JC:20-tac} for more information).
  \oprocend
\end{remark} 

\section{Simulation Results}\label{sec:simulations}
%
%
Here, we provide \new{four} examples
to demonstrate the properties and effectiveness of
the P-SSD algorithm
\footnote{We intentionally use
low-dimensional systems with sparse eigenfunctions to facilitate a
complete in-depth presentation of the results. However, we should
point out that the results are valid for high-dimensional systems with
no conditions on the sparsity of invariant subspaces.}.

\begin{example}\longthmtitle{Unstable Nonlinear
    System}\label{ex:nonlinear}
  Consider the discrete-time system with state $x = [x_1,x_2]$,
  \begin{subequations}\label{eq:ex-nonlinear}
    \begin{align}
      x_1^+ & = 1.2 \, x_1 
      \\
      x_2^+ & = \sqrt[3]{0.8 \, x_2^3 + 8 \, x_1^2 + 0.1} \, .
    \end{align}
  \end{subequations}
  The system can be transformed into a discrete-time
  polyflow~\cite{RMJ-PT:19}
  by the nonlinear transformation $[x_1,x_2] \mapsto [x_1,x_2^3]$. We
  aim to find the informative Koopman eigenfunctions and invariant
  subspaces.  
  we sample $N = 10^6$ data snapshots with initial conditions in the
  state space $\Mc = [-3,3] \times [-3,3]$ (dense sampling ensures
  that the properties identified here hold almost surely over the
  whole state space~\cite{MH-JC:20-tac}). We use the dictionary $D$
  with $N_d =15$ comprised of all distinct monomials up to degree 4 of
  the form $\prod_{i=1}^4 y_i$, with $y_i \in \{1,x_1,x_2\}$ for $i
  \in \until{4}$.
	
  To identify the maximal Koopman-invariant subspace in $\Span(D(x))$,
  we implement the SSD algorithm, and the P-SSD algorithm with $M \in
  \{5,20,100\}$ agents communicating according to a directed ring
  graph with diameter $d = M-1$. For the P-SSD strategy, we use the
  first 15 data snapshots in our database as signature data snapshots
  and distribute the rest of the data evenly among the agents.  Both
  strategies are implemented on a single computer using
  MATLAB\textsuperscript{\textregistered}.  We calculate the time
  elapsed for each iteration of P-SSD as the maximum time taken by
  agents to execute the algorithm in that iteration using the
  \texttt{tic} and \texttt{toc} commands.  Since we use finite
  precision, we apply the approximation provided in
  Remark~\ref{r:basis-intersection-implementation} with $\epsilon_\cap =
  10^{-12}$.

  For all $M \in \{5,20,100\}$, the P-SSD algorithm reaches the
  consensus equilibrium after 1 iteration and terminates (based on
  Remark~\ref{r:flags-termination}) after 2 iterations, which is
  significantly faster than the bounds provided in
  Theorem~\ref{t:consensus-complexity}. We have also observed in
  simulations that packet drops do not delay consensus.
  Both P-SSD and SSD algorithms correctly identify the 7-dimensional
  subspace spanned by $\{1, x_1, x_1^2, x_1^3, x_2^3, x_1^4, x_1
  x_2^3\}$ as the maximal Koopman-invariant subspace in $\Span(D(x))$.
  Table~\ref{table:elapsed-time} shows the time employed by
  the algorithms to find it. The P-SSD strategy with $M = 5$, $M =
  20$, and $M = 100$, is 80\%, 96\%, and 99\% faster than SSD,
  resp.

{ 
\renewcommand{\arraystretch}{1.5}
\begin{table}[htb]
\centering
\caption{\color{black}Time elapsed to identify the maximal Koopman-invariant subspace in $\Span(D)$ associated with the dynamics~\eqref{eq:ex-nonlinear}.} \label{table:elapsed-time}
{\color{black}
\begin{tabular}[htb]{ | c || c | c | c | c | c | }
\hline
\textbf{Method}	              & SSD  		& P-SSD (5)  & P-SSD (20)  & P-SSD (100)   \\ \hline %
\textbf{Time (ms)}           & 2175        & 439     		& 88  				& 17         \\ \hline %
\end{tabular}
}
\end{table}
} 


  Using the output matrix of any of the agents, cf.
  Remark~\ref{r:inv-subs-eigs-pssd}, we build the invariant dictionary
  $\tilde{D}$ and matrix
  $K_{\operatorname{P-SSD}}$. Table~\ref{table:eigenfunctions} shows
  the Koopman eigenfunctions and their corresponding eigenvalues
  calculated using the eigendecomposition of
  $K_{\operatorname{P-SSD}}$. One can verify analytically
  using~\eqref{eq:ex-nonlinear} that the eigenfunctions in
  Table~\ref{table:eigenfunctions} evolve linearly in time. Even
  though the function $x_2$ does not belong to the span of the
  Koopman-invariant subspace, the Koopman eigenfunctions fully capture
  the behavior of the system since the functions $x_1$ and $x_2^3$ do
  belong. Hence, one can use~\eqref{eq:function-evolution-Koopman} to
  predict the evolution of any function in form
  of~\eqref{eq:function-modes} or one simply can use
  \begin{align*}
    \tilde{D}(x^+) = \tilde{D}(x) K_{\operatorname{P-SSD}}, \, \forall x \in \Mc,
  \end{align*}
  %
%
  to describe the behavior
  of~\eqref{eq:ex-nonlinear} in a linear way.
  
  { 
    \renewcommand{\arraystretch}{1.3}
    \begin{table}[htb]
      \centering
      \caption{Identified eigenfunctions and eigenvalues of the
        Koopman operator associated with the dynamics~\eqref{eq:ex-nonlinear}.}\label{table:eigenfunctions} 
      \begin{tabular}[htb]{ | l | l | }
        \hline
        \textbf{Eigenfunction} & \textbf{Eigenvalue} \\ \hline 
        $\phi_1(x) = 1 $  & $\lambda_1 = 1$ \\ \hline 
        $\phi_2(x) = x_1 $  & $\lambda_2 = 1.2$ \\ \hline
        $\phi_3(x) = x_1^2$  & $\lambda_3= 1.44$ \\ \hline
        $\phi_4(x) = x_1^3$  & $\lambda_4= 1.728$ \\ \hline
        $\phi_5(x) = 2 \, x_2^3 - 25 \, x_1^2 - 1$  & $\lambda_5 = 0.8$ \\ \hline
        $\phi_6(x) = x_1^4 $  & $\lambda_6 = 2.0736$ \\ \hline
        $\phi_7(x) = 2 \, x_1 x_2^3 - 25 \, x_1^3 - x_1$  & $\lambda_7 = 0.96$ \\ \hline
      \end{tabular}
      \vspace*{-1.5ex}
    \end{table}
  } 

  \new{To demonstrate the effectiveness of our method in long-term
    predictions, following the Extended Dynamic Mode Decomposition
    (EDMD) method~\cite{MOW-IGK-CWR:15} and
    Remark~\ref{r:inv-subs-eigs-pssd}, and given an arbitrary
    dictionary $\Dc$, we define its linear prediction matrix as $K =
    \Dc(X)^\dagger \Dc(Y)$. We also define the following relative and
    angle error functions on a trajectory $\{x(k)\}_{k=0}^L$
\begin{align}\label{eq:relative-angle-error}
E_{\text{relative}}(k) &= \frac{\big\| \Dc(x(k)) - \Dc(x_0)
	K^k \big\|_2}{ \| \Dc(x(k)) \|_2} \times 100,
\nonumber \\
E_{\text{angle}}(k) &= \angle \big(\Dc(x(k)),\Dc(x_0) K^k
\big). 
\end{align} 
We compare the prediction accuracy on the original dictionary $D$ with the dictionary $\tilde{D}$ identified by P-SSD on 1000 trajectories with length $L=15$. In order to make sure the trajectories remain in the
space on which we have trained our methods, we sample the initial
conditions from $ [-0.1,0.1] \times [-3,3] $.
}
Figure~\ref{fig:prediction error} shows the \new{median} and range
between the first and third quartiles of the relative and angle
prediction errors for $D$ and $\tilde{D}$. The P-SSD method has zero
prediction error since it identifies and predicts the evolutions on
Koopman-invariant subspaces. In contrast, the prediction errors
\new{on the original dictionary} are significantly large, even for
short time steps. \oprocend
\end{example}

%
%

\begin{figure}[htb]
  \centering 
  {\includegraphics[width=.45
    \linewidth]{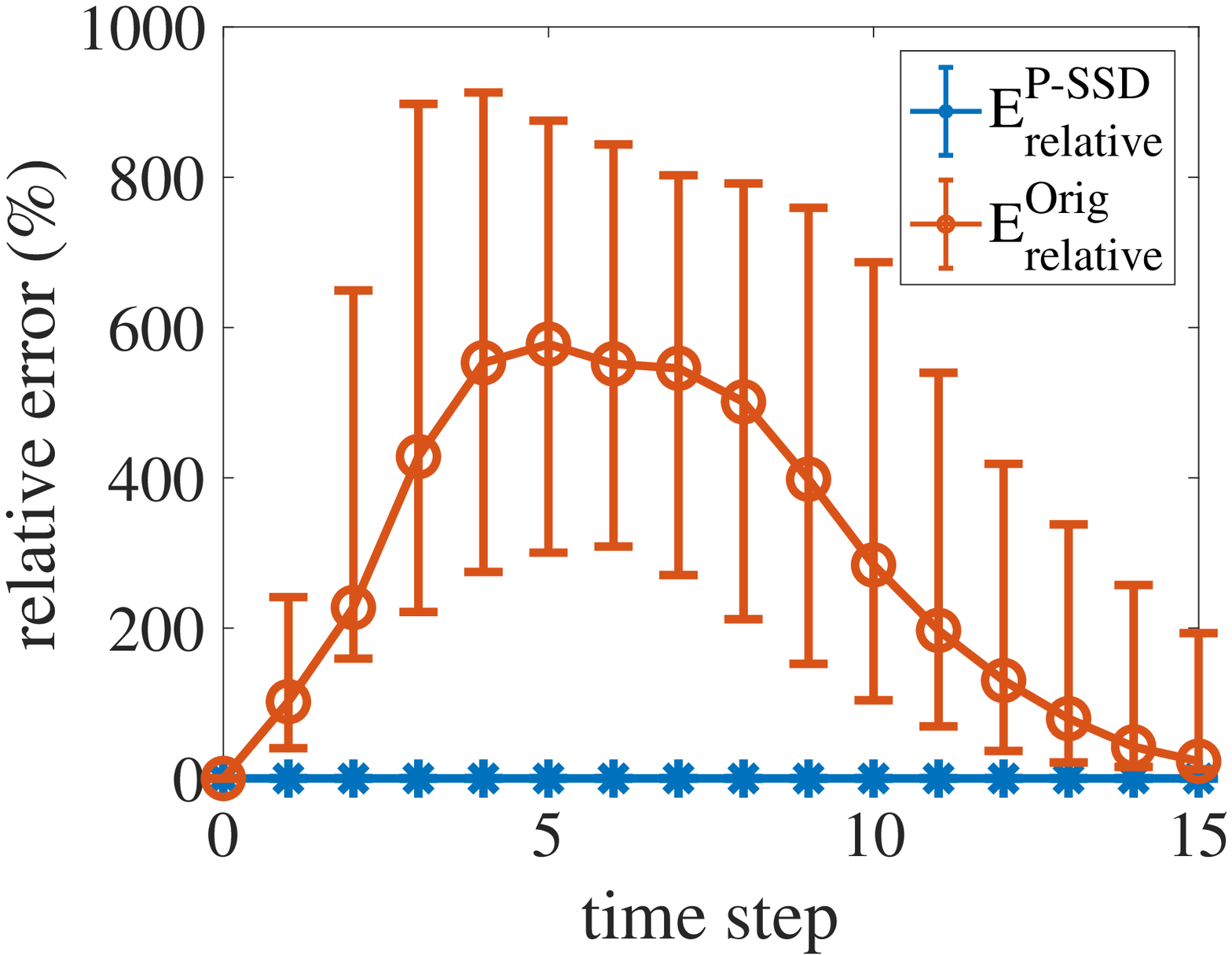}}
  {\includegraphics[width=.45
    \linewidth]{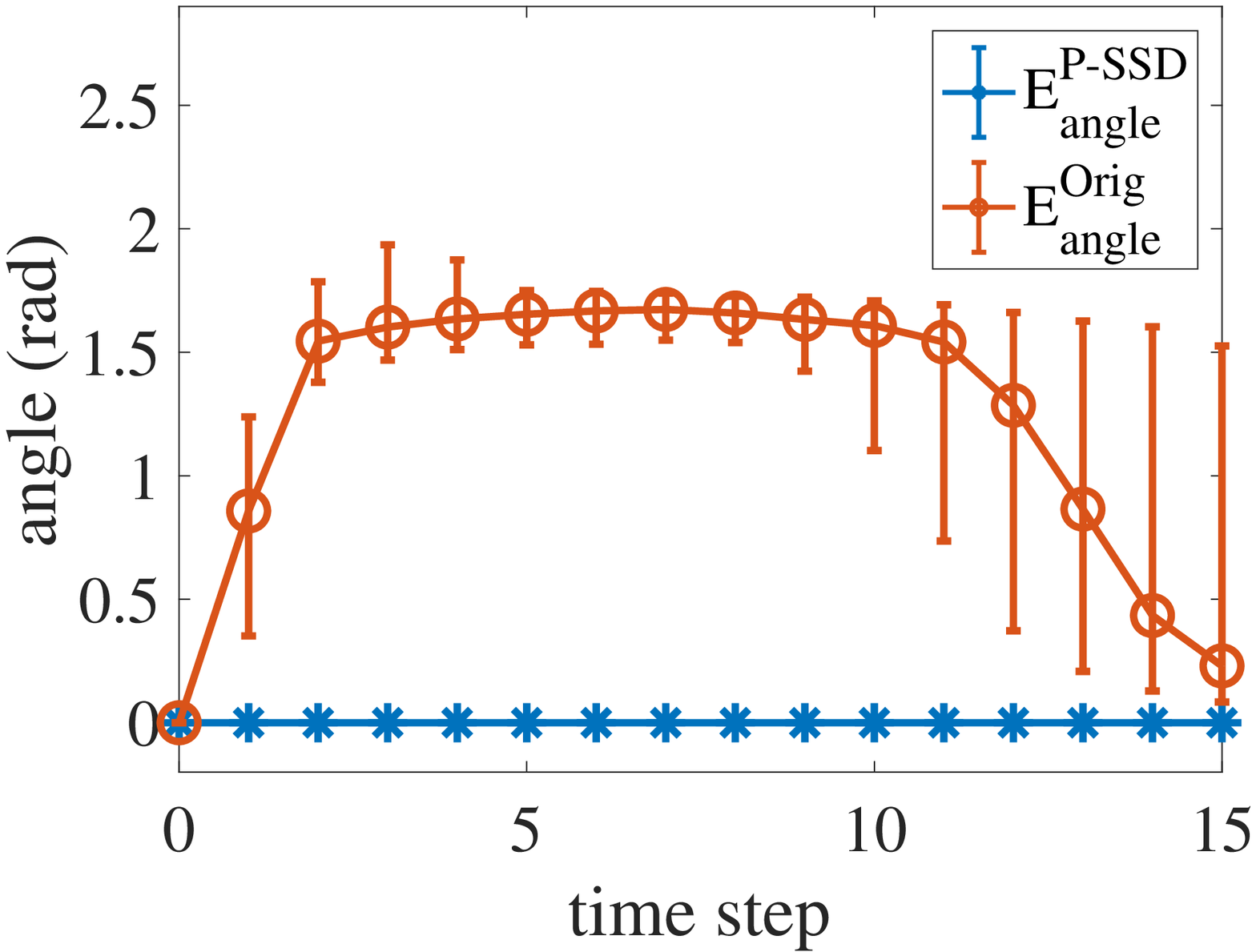}}
  \caption{\color{black} Median and range between first and third quartiles of relative
    (left) and angle (right) prediction errors for the original dictionary
    and the dictionary identified by P-SSD for system~\eqref{eq:ex-nonlinear} on 1000
    trajectories of length $L=15$ with initial conditions
    randomly selected from $ [-0.1,0.1] \times [-3,3]
    $. }\label{fig:prediction error}
  \vspace*{-1.5ex}
\end{figure}


As Example~\ref{ex:nonlinear} illustrates, the P-SSD algorithm usually
reaches an equilibrium fast and consequently packet drops do not have
the opportunity to significantly affect the time to convergence. Also,
P-SSD may find the maximal Koopman-invariant subspace even if the
agents do not share the signature snapshots.  The next example is
selected with three goals in mind: (i) showing the importance of
signature snapshots, (ii) confirming the tightness of the bound for
time complexity in Theorem~\ref{t:consensus-complexity}, and (iii)
illustrating the robustness of the P-SSD algorithm against packet
drops.

\begin{example}\longthmtitle{Piecewise Linear System with Packet
    Drops}\label{ex:piecewise-linear}
  Given the state space $\Mc = [-1,1]^n$, define the sets
  $\{\Sc_{k}\}_{k=1}^n$ by
  \begin{align*}
    \Sc_{k} = \setdef{x \in \Mc}{(0 < x_k \leq 1) \land (-1 \leq x_j
      \leq 0, j \neq k)},
  \end{align*}
  where $x_i$ represents the $i$th element of $x$.  Consider the
  $n$-dimensional system $ x^+ = T(x)$, where
  \begin{align*}
    T(x) = \bigg \lbrace %
    \begin{array}{cl}%
      \left(I_n + (k^{-1} -1) e_k e_k^T \right)x & \text{if} \; x \in
      \Sc_{k}, \, k \in \until{n} ,
      \\ %
      x  & \text{if} \; x \in \Mc \setminus \bigcup_{k=1}^n \Sc_{k} .
    \end{array}  %
  \end{align*}
  Here, $e_k$ is the $k$th column of the identity matrix $I_n$.  We
  take $n =10$ and consider $M = 10$ agents connected according to a
  directed ring graph with diameter $d = 9$. We use the dictionary $D$
  comprised of all $N_d = 66$ monomials of the form $\prod_{i=1}^2
  y_i$, where $y_i \in \{x_1, \ldots, x_{10} \} \cup \{1\}$ for $i \in
  \{1,2\}$.
  
  To perform the simulations, we sample 100 data snapshots with
  initial conditions in $ \Mc \setminus \bigcup_{k=2}^n \Sc_{k}$ as
  our signature snapshots. In addition, we upload additional 1000 data
  snapshots sampled from $\Sc_{k}$ to the $k$th agent, for each $k \in
  \until{10}$. To check robustness against packet drops, we randomly
  drop each packet with $P$ percent chance, where we employ $P \in
  \{0,10,\ldots,90\}$. In all cases, the P-SSD algorithm correctly
  identifies the maximal Koopman-invariant subspace spanned by
  $\{1,x_1,x_1^2\}$.  Figure~\ref{fig:iterations-packet-drops} shows
  the average number of iterations (over 20 simulations) taken by
  P-SSD to achieve consensus. The plot shows how P-SSD achieves
  consensus relatively fast even in the presence of $90\%$ packet
  drops in a directed ring network.  The first column of
  Figure~\ref{fig:iterations-packet-drops} indicates that when there
  is no packet drops, P-SSD reaches consensus in 10 iterations, which
  is in agreement with the bound provided in
  Theorem~\ref{t:consensus-complexity}(a).

  \begin{figure}[htb]
    \centering 
    \includegraphics[width=.7 \linewidth]{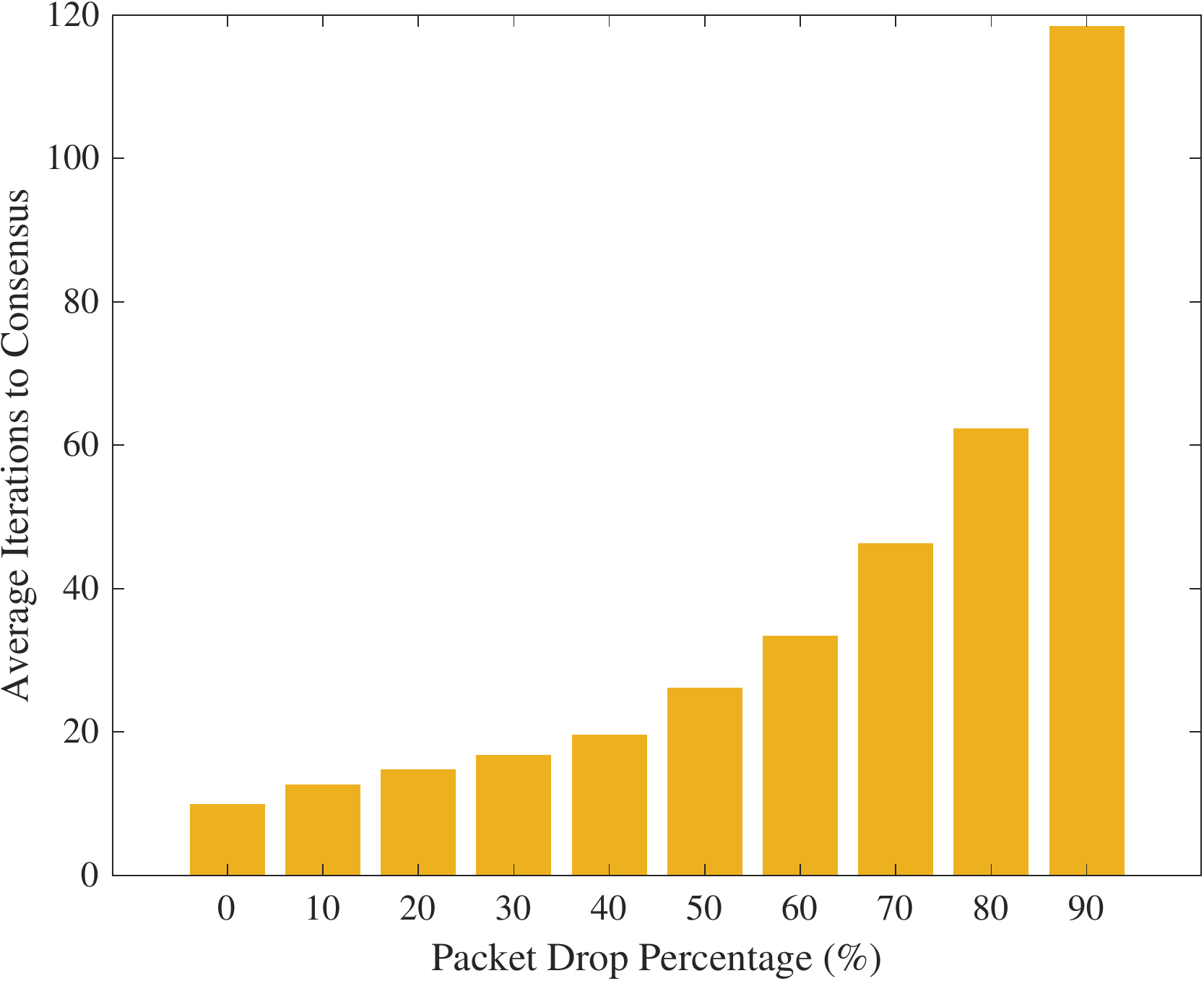}%
    \caption{Average iterations taken for P-SSD to achieve consensus
      versus packet drop percentage for
      Example~\ref{ex:piecewise-linear}.}\label{fig:iterations-packet-drops}
    \vspace*{-1.5ex}
  \end{figure}
  
  To illustrate the importance of the signature data snapshots, we
  also perform the simulations without them. In this case, the P-SSD
  algorithm fails, as it incorrectly identifies the whole space
  spanned by dictionary $D$ as a Koopman-invariant subspace. This
  happens because $\Span(D)$ is a Koopman-invariant subspace for the
  system with state space restricted to any of the subsets $\Sc_{k}$,
  $k \in \until{10}$. Moreover, these different Koopman operators have
  the same eigenfunctions. However, some of the corresponding
  eigenvalues are different for different $k$'s. Hence, those
  eigenfunctions for restricted systems are not eigenfunctions of the
  Koopman operator for the system defined on the whole state
  space~$\Mc$. The signature snapshots enable the agents to detect
  such inconsistencies and prevent the failure of the P-SSD algorithm.
  \oprocend
\end{example}

\begin{example}\longthmtitle{Van der Pol Oscillator}\label{ex:vanderpol}
  Here, we provide an example of the approximation of Koopman
  eigenfunctions and invariant subspaces when the original dictionary
  does not contain exact informative eigenfunctions.  Consider the Van
  der Pol oscillator with state $x = [x_1, x_2]^T$,
\begin{align}\label{eq:vanderpol}
\dot{x}_1 &= x_2
\nonumber \\
\dot{x}_2 &= -x_1 +(1-x_1^2)x_2.
\end{align}
To gather data, we run $10^3$ simulations for $5s$ with initial
conditions uniformly selected from $\Mc = [-4,4] \times [-4,4]$. We
sample the trajectory data with time step $\Delta t = 0.05s$ resulting
in $N = 10^5$ data snapshots. Moreover, we use the dictionary $D$ with
$N_d =45$ comprised of all distinct monomials up to degree 8 of the
form $\prod_{i=1}^8 y_i$, with $y_i \in \{1,x_1,x_2\}$ for $i \in
\until{8}$. To avoid numerical problems caused by finite-precision
errors, we form the matrix $[\dx^T,\dy^T]^T$ and scale each dictionary
function such that the norm of columns of the aforementioned matrix
become equal. Note that this scaling does not change the functional
space spanned by the dictionary since each function in the dictionary
is scaled by a nonzero number in $\real$. The communication network is
modeled by a complete digraph with $M = 20$ processors resembling a
GPU. In addition, we use 1000 snapshots randomly selected from our
database as the signature data snapshots and distribute the rest of
the data evenly among the processors.

Note that the dictionary $D$ only contains the trivial Koopman
eigenfunction $\phi(x) \equiv 1$ with eigenvalue $\lambda =1$.
Consequently, the P-SSD algorithm results in a one-dimensional
Koopman-invariant subspace providing exact prediction but no
information about the dynamics. To address this issue,
we use instead the Approximated P-SSD algorithm presented in
Remark~\ref{r:approximated-PSSD} with $\epsilon = 0.005$. Moreover, we
set $\epsilon_{\cap}=0.005$ following
Remark~\ref{r:basis-intersection-implementation}. This algorithm
reaches a consensus equilibria after 4 iterations identifying a
4-dimensional subspace. It is worth mentioning that we could not
implement the Approximated SSD algorithm from~\cite{MH-JC:20-tac} on
the same dataset since it performs SVD on the whole dataset and
requires a large memory that is beyond our computational resources
(also, performing SVD on such large datasets may result in large
round-off errors and inaccurate results).

Using the output matrix of any of the agents,
cf. Remark~\ref{r:inv-subs-eigs-pssd}, we find the dictionary
$\tilde{D}$ with $\tilde{N}_d = 4$ functions and create the
approximated P-SSD matrix as
$K_{\operatorname{P-SSD}}^{\operatorname{apprx}} = \tdx^\dagger \tdy$.
Moreover, we find the Koopman eigenfunctions approximated by the
eigendecomposition of
$K_{\operatorname{P-SSD}}^{\operatorname{apprx}}$. Approximated P-SSD
finds the only exact Koopman eigenfunction ($\phi(x) \equiv 1$ with
eigenvalue $\lambda = 1$) in the span of the original dictionary
correctly.  Figure~\ref{fig:vanderpol-eigenfunction} shows the leading
nontrivial approximated eigenfunction corresponding to the largest
eigenvalue, which captures the behavior of the Van der Pol oscillator.

\begin{figure}[htb]
  \centering 
  {\includegraphics[width=.46 \linewidth]{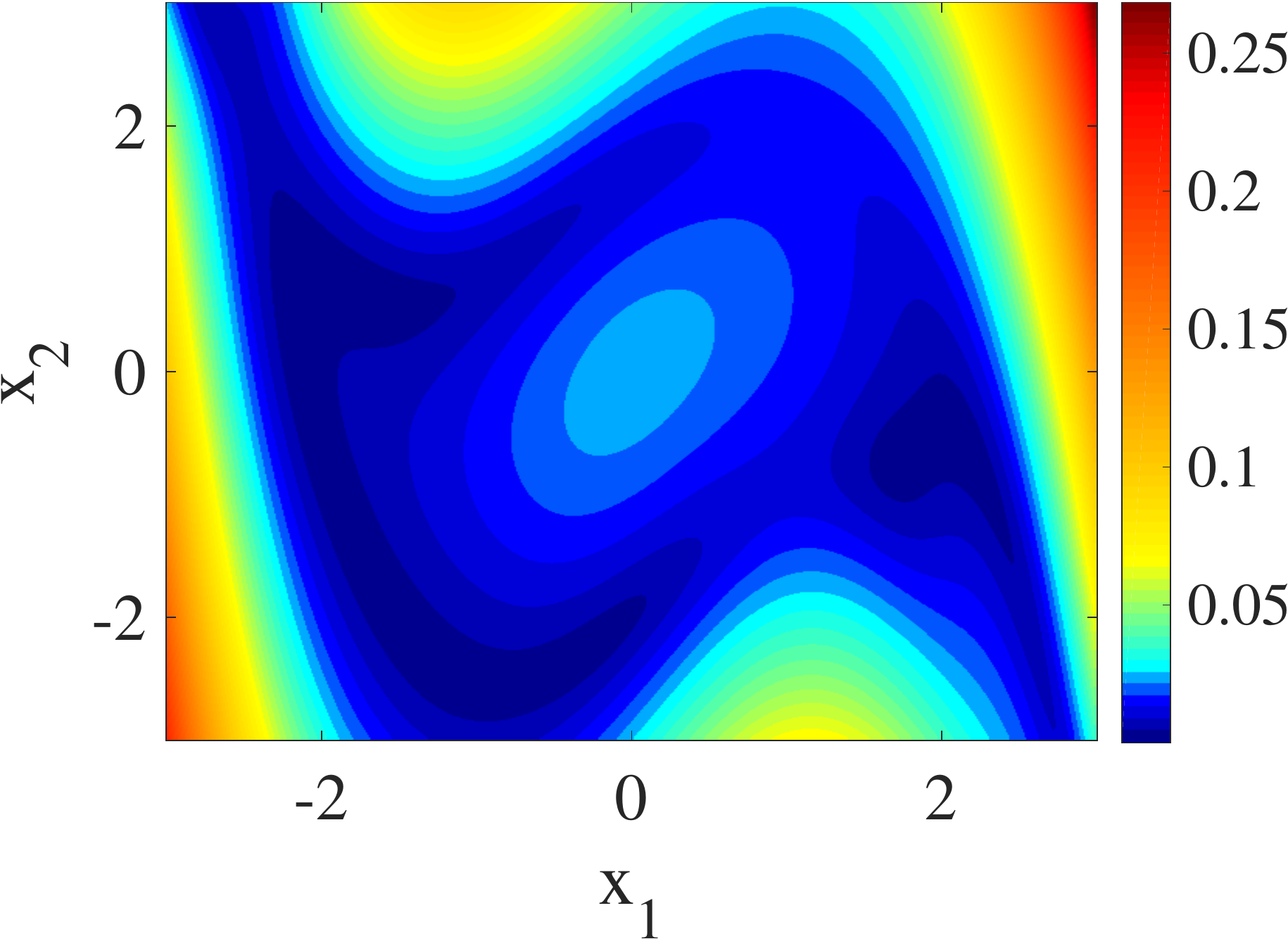}}
  {\includegraphics[width=.44	\linewidth]{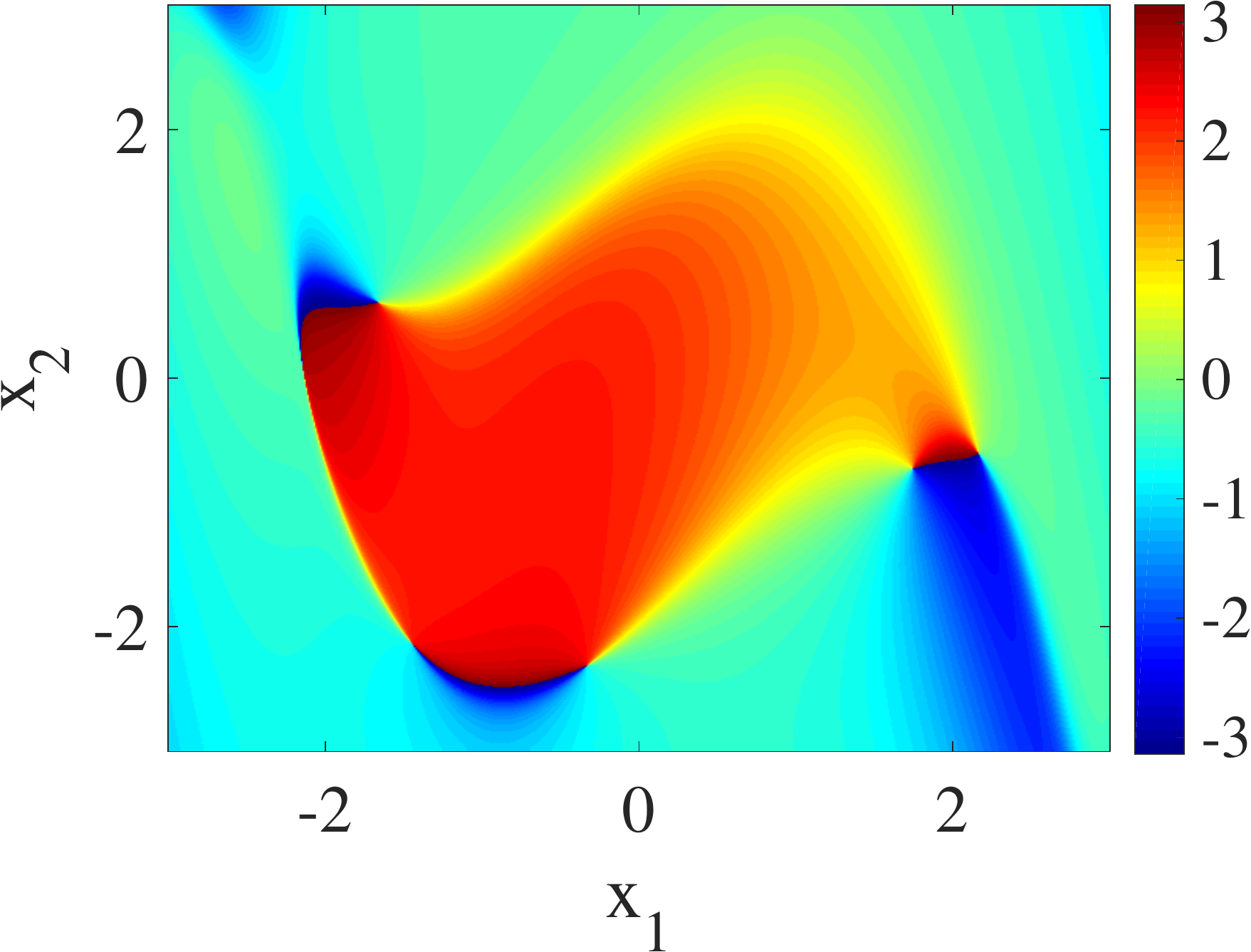}}
  \caption{{\color{black}Absolute value (left) and angle (right) of the
      approximated eigenfunction corresponding to eigenvalue $0.9647 +
      0.018j$.}}\label{fig:vanderpol-eigenfunction}
  \vspace*{-1.5ex}
\end{figure}

\new{To show the advantage of our method in long-term
prediction,} we use the error functions presented
in~\eqref{eq:relative-angle-error} on 1000 trajectories with length $L
= 20$ time steps and initial conditions uniformly taken from $[-4,4]
\times [-4,4]$.
\begin{figure}[htb]
  \centering 
  {\includegraphics[width=.45 \linewidth]{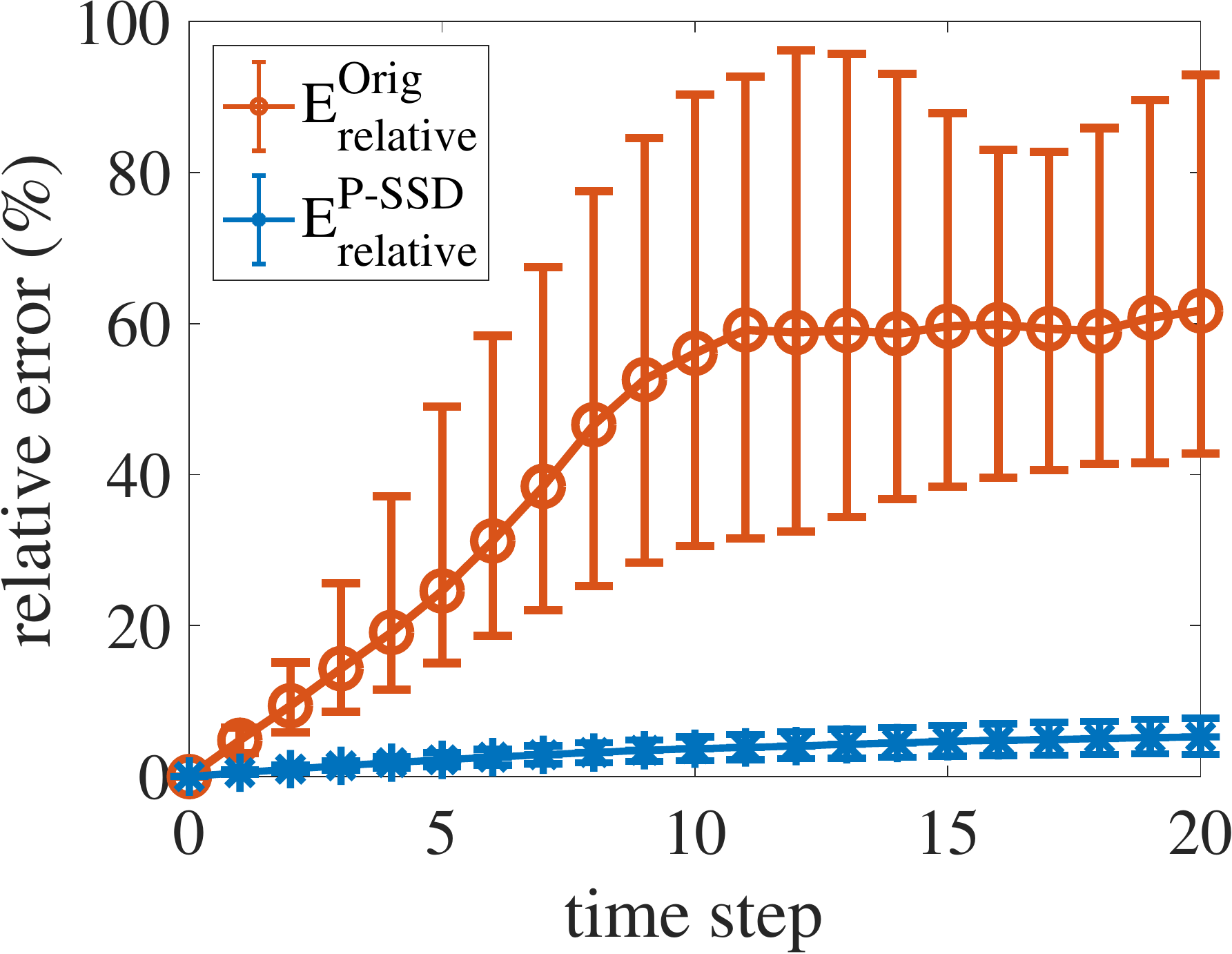}}
  {\includegraphics[width=.45	\linewidth]{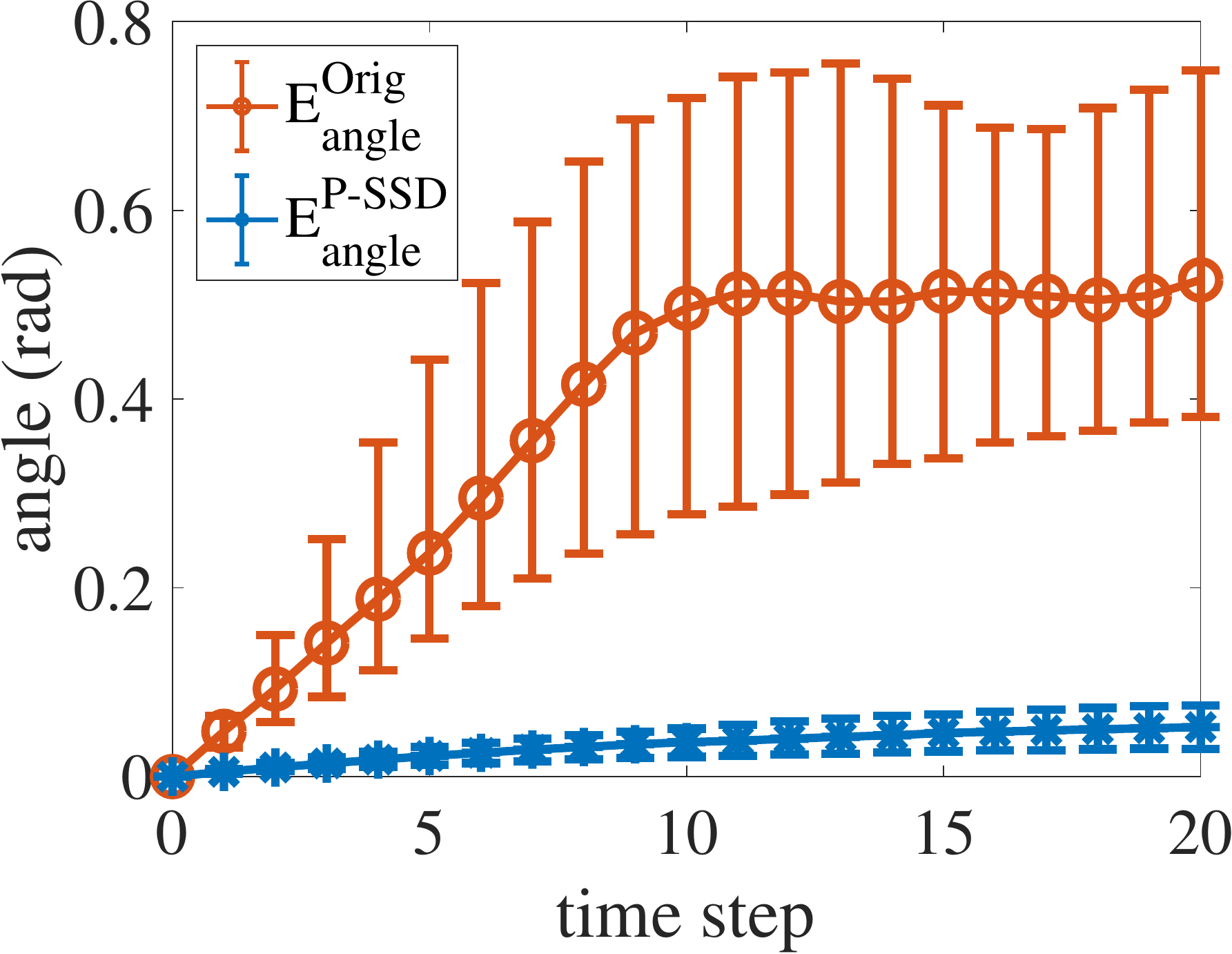}}
  \caption{\color{black}{Median and range between first and third quartiles of relative
      (left) and angle (right) prediction errors for the original dictionary 
      and the dictionary identified with Approximated P-SSD for system~\eqref{eq:vanderpol} on 1000 trajectories of length $L=20$ with initial conditions randomly selected from $\Mc$.}}\label{fig:vanderpol-longterm-error}
  \vspace*{-1.5ex}
\end{figure}
Figure~\ref{fig:vanderpol-longterm-error} shows the median and the
range between the first and third quartiles of the predictions errors
for \new{the dictionary $\tilde{D}$ identified by }Approximated P-SSD
and \new{the original dictionary $D$} over the aforementioned
trajectories. According to Figure~\ref{fig:vanderpol-longterm-error},
Approximated P-SSD has a clear advantage
in long-term prediction. After 20 time steps the median of the
\new{relative prediction errors for the original dictionary} is $60\%$
while the same error for \new{the dictionary identified with
  Approximated P-SSD} is $5\%$. Moreover, the median of the angle
errors after 20 time steps are $0.5$ and $0.05$ radians for \new{$D$}
and $\tilde{D}$, resp.  \oprocend
\end{example}

\new{
  \begin{example}\longthmtitle{Chaotic Lorenz System}\label{ex:lorenz}
    Consider the chaotic Lorenz system
    \begin{align}\label{eq:lorenz}
      \dot{x} &= 10(y-x)
      \nonumber \\
      \dot{y} &= x(28-z) - y
      \nonumber \\
      \dot{z} &= xy - (8/3) z,
    \end{align}
    with state $s = [x, y, z]^T$ belonging to state space $\Mc =
    [-20,20] \times [-30,30] \times [0,50]$.  Our strategy for
    sampling from $\Mc$, the number of samples and signature data, and
    the processor network are similar to
    Example~\ref{ex:vanderpol}. We use the dictionary $D$ with $N_d
    =84$ comprised of all distinct monomials up to degree 6 of the
    form $\prod_{i=1}^6 y_i$, with $y_i \in \{1,x,y,z\}$ for $i \in
    \until{6}$. To avoid numerical problems caused by round-off
    errors, we scale the dictionary elements as in
    Example~\ref{ex:vanderpol}.  Note that the dictionary $D$ only
    contains the trivial Koopman eigenfunction $\phi(x) \equiv 1$ with
    eigenvalue $\lambda =1$.  Consequently, the P-SSD algorithm
    results in a one-dimensional Koopman-invariant subspace providing
    exact prediction but no information about the dynamics. To address
    this issue, we use instead the Approximated P-SSD algorithm
    presented in Remark~\ref{r:approximated-PSSD} with $\epsilon =
    0.001$. Moreover, we set $\epsilon_{\cap}=0.001$ following
    Remark~\ref{r:basis-intersection-implementation}. This algorithm
    reaches a consensus equilibria after 3 iterations identifying a
    2-dimensional subspace. It is worth mentioning that we could not
    implement the Approximated SSD algorithm from~\cite{MH-JC:20-tac}
    on the same dataset because of its large computational
    requirements..

    Using the output matrix of any of the agents,
    cf. Remark~\ref{r:inv-subs-eigs-pssd}, we find the dictionary
    $\tilde{D}$ with $\tilde{N}_d = 2$ and two approximated
    eigenfunctions in its span.
    Approximated P-SSD finds the only exact Koopman eigenfunction
    ($\phi(x) \equiv 1$ with eigenvalue $\lambda = 1$) in the span of
    the original dictionary correctly. It also approximates another
    real-valued eigenfunction with eigenvalue $\lambda \approx 0.46$ which
    predicts that the trajectories converge to an \emph{approximated}
    invariant set since $|\lambda|<1$.

    To show the advantage of our method in long-term prediction, we
    use the error functions presented
    in~\eqref{eq:relative-angle-error} on 1000 trajectories with
    length $L = 20$ time steps and initial conditions uniformly taken
    from $\Mc = [-20,20] \times [-3,30] \times [0,50]$ and compare the
    prediction accuracy of the dictionary $\tilde{D}$ identified by
    Approximated P-SSD versus the original dictionary $D$.
    Figure~\ref{fig:lorenz-longterm-error} shows the median and the
    range between the first and third quartiles of the aforementioned
    predictions errors, indicating the advantage of P-SSD in long-term
    prediction.  \oprocend
\end{example}

\begin{figure}[htb]
	\centering 
	{\includegraphics[width=.45 \linewidth]{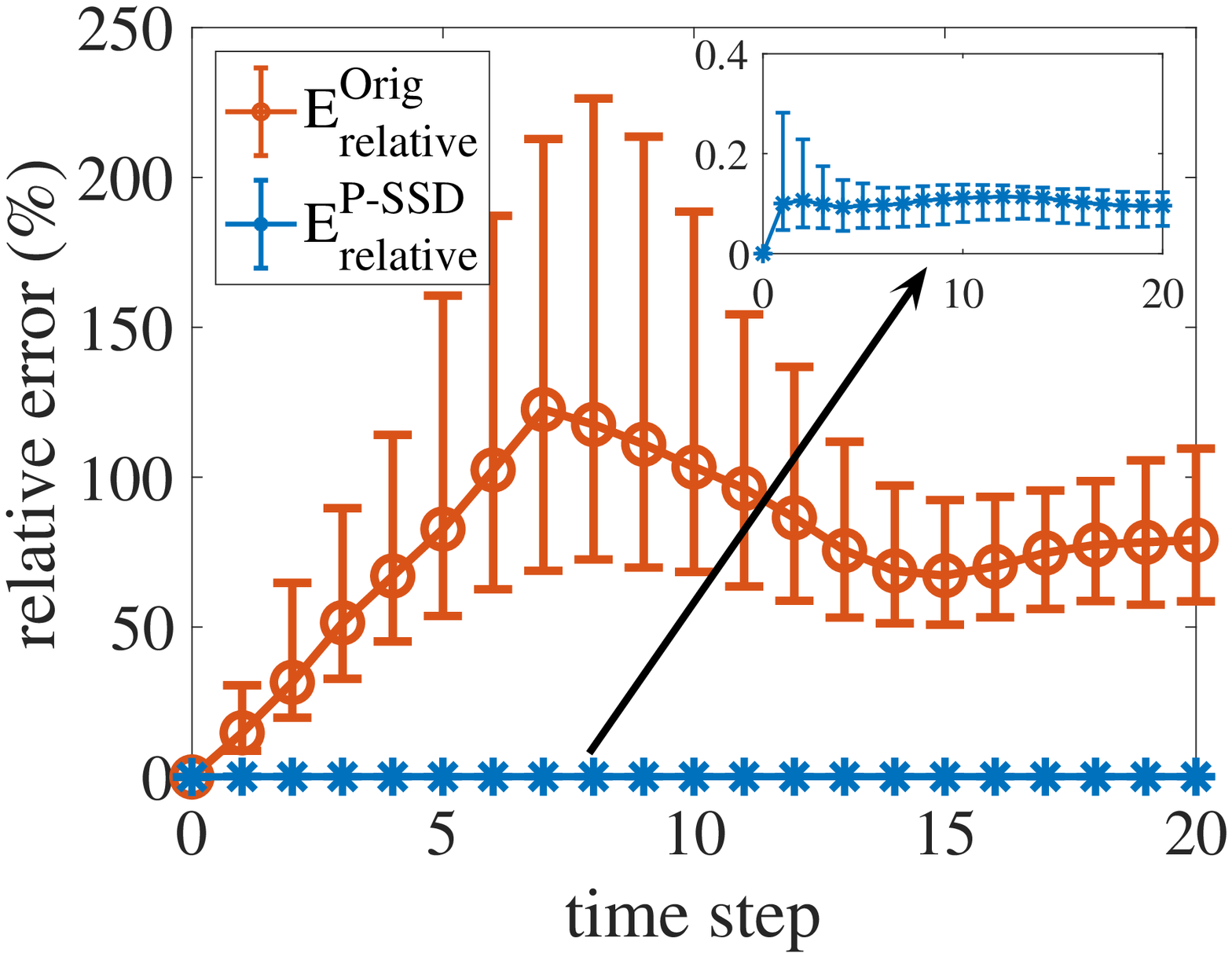}}
	{\includegraphics[width=.45	\linewidth]{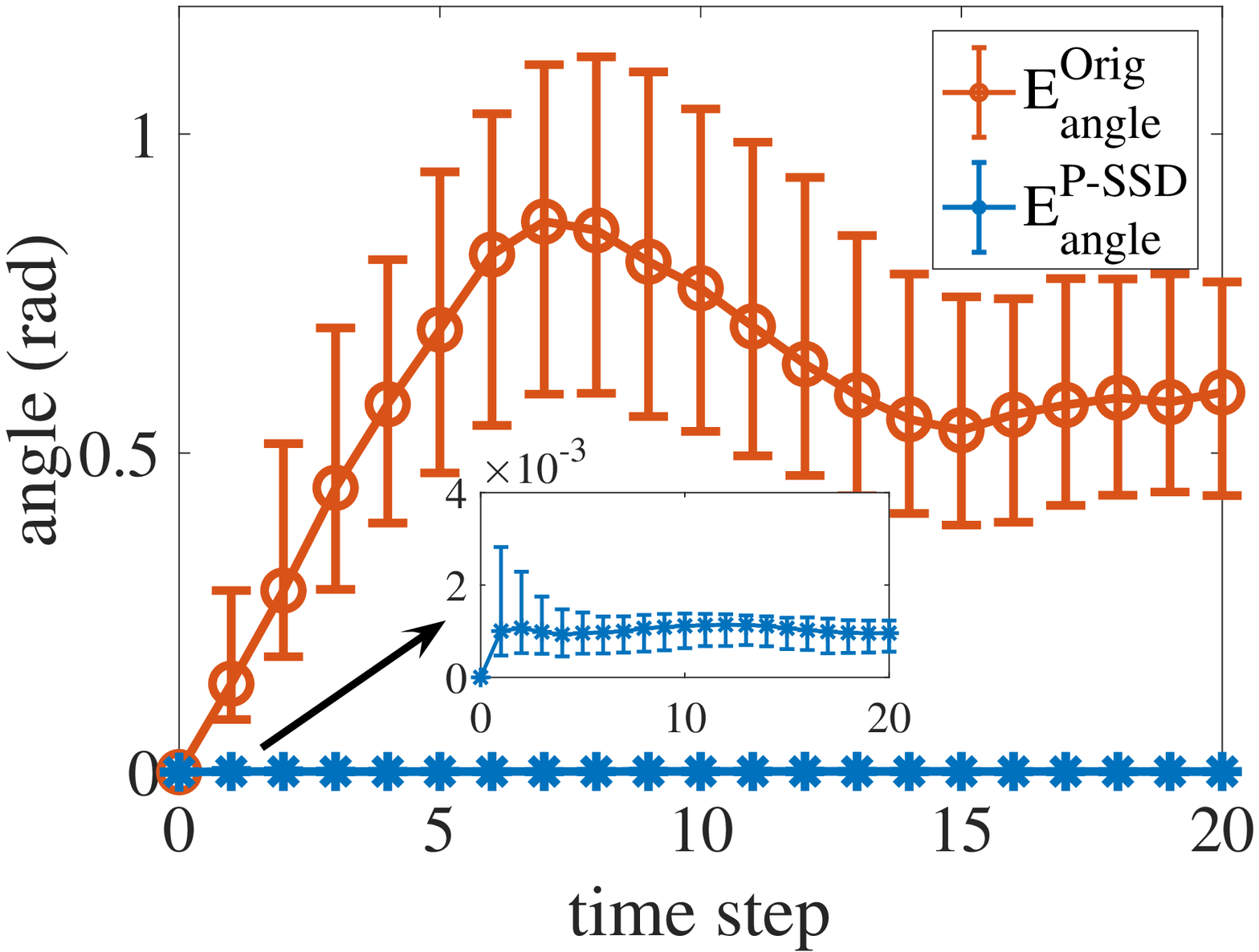}}
	\caption{\color{black}{Median and range between first and third quartiles of relative (left) and angle (right) prediction errors for dictionary $\tilde{D}$ identified by  Approximated P-SSD  and the original dictionary $D$ for system~\eqref{eq:lorenz} on 1000 trajectories of length $L=20$ with initial conditions randomly selected from $\Mc$.}}\label{fig:lorenz-longterm-error}
	\vspace*{-1.5ex}
\end{figure}
}

\section{Conclusions}
We have proposed a data-driven strategy, termed P-SSD algorithm, to
identify Koopman eigenfunctions and invariant subspaces corresponding
to an unknown dynamics. The strategy runs in parallel over a network
of processors that communicate over a digraph.  We have thoroughly
characterized the algorithm convergence and complexity properties. In
particular, we have identified conditions on the network connectivity
that ensure that the P-SSD's output coincide with that of the SSD
algorithm (a centralized method that under mild conditions on
  data sampling provably identifies the maximal Koopman-invariant
subspace in an arbitrary finite-dimensional functional space) when run
over the whole set of data snapshots.  The parallel nature of P-SSD
makes it run significantly faster than its centralized counterpart. We
have also established the robustness of P-SSD against communication
failures and packet drops.  Future work will explore to what extent
the data available to the individual agents and its density can be
employed in lieu of the common signature data, the development of
noise-resilient counterparts of the proposed algorithms, the synthesis
of distributed strategies that can work with streaming data sets,
  and the construction of dictionaries containing informative Koopman
  eigenfunctions leading to the approximation of the dynamics with
  accuracy  guarantees.

\bibliographystyle{IEEEtran}
\bibliography{alias,JC,Main,Main-add}

\begin{thebibliography}{10}
\providecommand{\url}[1]{#1}
\csname url@samestyle\endcsname
\providecommand{\newblock}{\relax}
\providecommand{\bibinfo}[2]{#2}
\providecommand{\BIBentrySTDinterwordspacing}{\spaceskip=0pt\relax}
\providecommand{\BIBentryALTinterwordstretchfactor}{4}
\providecommand{\BIBentryALTinterwordspacing}{\spaceskip=\fontdimen2\font plus
\BIBentryALTinterwordstretchfactor\fontdimen3\font minus
  \fontdimen4\font\relax}
\providecommand{\BIBforeignlanguage}[2]{{%
\expandafter\ifx\csname l@#1\endcsname\relax
\typeout{** WARNING: IEEEtran.bst: No hyphenation pattern has been}%
\typeout{** loaded for the language `#1'. Using the pattern for}%
\typeout{** the default language instead.}%
\else
\language=\csname l@#1\endcsname
\fi
#2}}
\providecommand{\BIBdecl}{\relax}
\BIBdecl

\bibitem{MH-JC:20-acc}
M.~Haseli and J.~Cort\'es, ``Fast identification of {K}oopman-invariant
  subspaces: parallel symmetric subspace decomposition,'' in \emph{{A}merican
  {C}ontrol {C}onference}, Denver, CO, Jul. 2020, pp. 4545--4550.

\bibitem{BOK:31}
B.~O. Koopman, ``Hamiltonian systems and transformation in {H}ilbert space,''
  \emph{Proceedings of the National Academy of Sciences}, vol.~17, no.~5, pp.
  315--318, 1931.

\bibitem{BOK-JVN:32}
B.~O. Koopman and J.~V. Neumann, ``Dynamical systems of continuous spectra,''
  \emph{Proceedings of the National Academy of Sciences}, vol.~18, no.~3, pp.
  255--263, 1932.

\bibitem{IM:05}
I.~Mezi{\'c}, ``Spectral properties of dynamical systems, model reduction and
  decompositions,'' \emph{Nonlinear Dynamics}, vol.~41, no. 1-3, pp. 309--325,
  2005.

\bibitem{MB-RM-IM:12}
M.~Budi{\v{s}}i{\'c}, R.~Mohr, and I.~Mezi{\'c}, ``Applied {K}oopmanism,''
  \emph{Chaos}, vol.~22, no.~4, p. 047510, 2012.

\bibitem{AM-IM:16}
A.~Mauroy and I.~Mezi{\'c}, ``Global stability analysis using the
  eigenfunctions of the {K}oopman operator,'' \emph{IEEE Transactions on
  Automatic Control}, vol.~61, no.~11, pp. 3356--3369, 2016.

\bibitem{AM-JG:16}
A.~Mauroy and J.~Goncalves, ``Linear identification of nonlinear systems: A
  lifting technique based on the {K}oopman operator,'' in \emph{{IEEE} Conf.\
  on Decision and Control}, Las Vegas, NV, Dec. 2016, pp. 6500--6505.

\bibitem{SK-FN-PK-HW-IK-CS-FN}
S.~Klus, F.~N{\"u}ske, P.~Koltai, H.~Wu, I.~Kevrekidis, C.~Sch{\"u}tte, and
  F.~No{\'e}, ``Data-driven model reduction and transfer operator
  approximation,'' \emph{Journal of Nonlinear Science}, vol.~28, no.~3, pp.
  985--1010, 2018.

\bibitem{AA-JNK:17}
A.~Alla and J.~N. Kutz, ``Nonlinear model order reduction via dynamic mode
  decomposition,'' \emph{SIAM Journal on Scientific Computing}, vol.~39, no.~5,
  pp. B778--B796, 2017.

\bibitem{CWR-IM-SB-PS-DSH:09}
C.~W. Rowley, I.~Mezi{\'c}, S.~Bagheri, P.~Schlatter, and D.~S. Henningson,
  ``Spectral analysis of nonlinear flows,'' \emph{Journal of Fluid Mechanics},
  vol. 641, pp. 115--127, 2009.

\bibitem{MK-IM-automatica:18}
M.~Korda and I.~Mezi{\'c}, ``Linear predictors for nonlinear dynamical systems:
  {K}oopman operator meets model predictive control,'' \emph{Automatica},
  vol.~93, pp. 149--160, 2018.

\bibitem{SP-SK:17}
S.~Peitz and S.~Klus, ``Koopman operator-based model reduction for
  switched-system control of {PDE}s,'' \emph{Automatica}, vol. 106, pp.
  184--191, 2019.

\bibitem{BH-XM-UV:18}
B.~Huang, X.~Ma, and U.~Vaidya, ``Feedback stabilization using {K}oopman
  operator,'' in \emph{{IEEE} Conf.\ on Decision and Control}, Miami Beach, FL,
  Dec. 2018, pp. 6434--6439.

\bibitem{EK-JNK-SLB:17}
E.~Kaiser, J.~N. Kutz, and S.~L. Brunton, ``Data-driven discovery of {K}oopman
  eigenfunctions for control,'' \emph{arXiv preprint arXiv:1707.01146}, 2017.

\bibitem{DG-DAP:17}
D.~Goswami and D.~A. Paley, ``Global bilinearization and controllability of
  control-affine nonlinear systems: a {K}oopman spectral approach,'' in
  \emph{{IEEE} Conf.\ on Decision and Control}.\hskip 1em plus 0.5em minus
  0.4em\relax IEEE, 2017, pp. 6107--6112.

\bibitem{GM-MC-XT-TM:19}
G.~Mamakoukas, M.~Castano, X.~Tan, and T.~Murphey, ``Local {K}oopman operators
  for data-driven control of robotic systems,'' in \emph{Robotics: Science and
  Systems}, Freiburg, Germany, Jun. 2019.

\bibitem{AN-JSK:20}
A.~Narasingam and J.~S. Kwon, ``Data-driven feedback stabilization of nonlinear
  systems: {K}oopman-based model predictive control,'' \emph{arXiv preprint
  arXiv:2005.09741}, 2020.

\bibitem{SHS-AN-JSK:20}
S.~H. Son, A.~Narasingam, and J.~S. Kwon, ``Handling plant-model mismatch in
  {K}oopman {L}yapunov-based model predictive control via offset-free control
  framework,'' \emph{arXiv preprint arXiv:2010.07239}, 2020.

\bibitem{AS-JE-AR-JPC-RMD:19}
A.~Salova, J.~Emenheiser, A.~Rupe, J.~P. Crutchfield, and R.~M. D’Souza,
  ``Koopman operator and its approximations for systems with symmetries,''
  \emph{Chaos}, vol.~29, no.~9, p. 093128, 2019.

\bibitem{AM-JB-MM:20}
A.~Mesbahi, J.~Bu, and M.~Mesbahi, ``Nonlinear observability via {K}oopman
  analysis: Characterizing the role of symmetry,'' \emph{Automatica}, p.
  109353, 2020.

\bibitem{PJS:10}
P.~J. Schmid, ``Dynamic mode decomposition of numerical and experimental
  data,'' \emph{Journal of Fluid Mechanics}, vol. 656, pp. 5--28, 2010.

\bibitem{JHT-CWR-DML-SLB-JNK:13}
J.~H. Tu, C.~W. Rowley, D.~M. Luchtenburg, S.~L. Brunton, and J.~N. Kutz, ``On
  dynamic mode decomposition: theory and applications,'' \emph{Journal of
  Computational Dynamics}, vol.~1, no.~2, pp. 391--421, 2014.

\bibitem{MOW-IGK-CWR:15}
M.~O. Williams, I.~G. Kevrekidis, and C.~W. Rowley, ``A data-driven
  approximation of the {K}oopman operator: Extending dynamic mode
  decomposition,'' \emph{Journal of Nonlinear Science}, vol.~25, no.~6, pp.
  1307--1346, 2015.

\bibitem{MK-IM:18}
M.~Korda and I.~Mezi{\'c}, ``On convergence of extended dynamic mode
  decomposition to the {K}oopman operator,'' \emph{Journal of Nonlinear
  Science}, vol.~28, no.~2, pp. 687--710, 2018.

\bibitem{SLB-BWB-JLP-JNK:16}
S.~L. Brunton, B.~W. Brunton, J.~L. Proctor, and J.~N. Kutz, ``{K}oopman
  invariant subspaces and finite linear representations of nonlinear dynamical
  systems for control,'' \emph{PLOS One}, vol.~11, no.~2, pp. 1--19, 2016.

\bibitem{MK-IM:19}
\BIBentryALTinterwordspacing
M.~Korda and I.~Mezi{\'c}, ``Optimal construction of {K}oopman eigenfunctions
  for prediction and control,'' 2019. [Online]. Available:
  \url{https://hal.archives-ouvertes.fr/hal-02278835}
\BIBentrySTDinterwordspacing

\bibitem{SP-NAM-KD:20}
S.~Pan, N.~Arnold-Medabalimi, and K.~Duraisamy, ``Sparsity-promoting algorithms
  for the discovery of informative {K}oopman invariant subspaces,'' \emph{arXiv
  preprint arXiv:2002.10637}, 2020.

\bibitem{MH-JC:20-tac}
M.~Haseli and J.~Cort\'es, ``Learning {K}oopman eigenfunctions and invariant
  subspaces from data: {S}ymmetric {S}ubspace {D}ecomposition,''
  \emph{\url{https://arxiv.org/abs/1909.01419}}, 2020.

\bibitem{QL-FD-EMB-IGK:17}
Q.~Li, F.~Dietrich, E.~M. Bollt, and I.~G. Kevrekidis, ``Extended dynamic mode
  decomposition with dictionary learning: A data-driven adaptive spectral
  decomposition of the {K}oopman operator,'' \emph{Chaos}, vol.~27, no.~10, p.
  103111, 2017.

\bibitem{NT-YK-TY:17}
N.~Takeishi, Y.~Kawahara, and T.~Yairi, ``Learning {K}oopman invariant
  subspaces for dynamic mode decomposition,'' in \emph{{C}onference on {N}eural
  {I}nformation {P}rocessing {S}ystems}, 2017, pp. 1130--1140.

\bibitem{EY-SK-NH:19}
E.~Yeung, S.~Kundu, and N.~Hodas, ``Learning deep neural network
  representations for {K}oopman operators of nonlinear dynamical systems,'' in
  \emph{{A}merican {C}ontrol {C}onference}, Philadelphia, PA, Jul. 2019, pp.
  4832--4839.

\bibitem{SEO-CWR:19}
S.~E. Otto and C.~W. Rowley, ``Linearly recurrent autoencoder networks for
  learning dynamics,'' \emph{SIAM Journal on Applied Dynamical Systems},
  vol.~18, no.~1, pp. 558--593, 2019.

\bibitem{FB-JC-SM:08cor}
F.~Bullo, J.~Cort{\'e}s, and S.~Martinez, \emph{Distributed Control of Robotic
  Networks}, ser. Applied Mathematics Series.\hskip 1em plus 0.5em minus
  0.4em\relax Princeton University Press, 2009.

\bibitem{JL-ASM-AN-TB:17}
J.~Liu, A.~S. Morse, A.~Nedi{\'c}, and T.~Ba{\c{s}}ar, ``Exponential
  convergence of a distributed algorithm for solving linear algebraic
  equations,'' \emph{Automatica}, vol.~83, pp. 37--46, 2017.

\bibitem{NAL:97}
N.~A. Lynch, \emph{Distributed Algorithms}.\hskip 1em plus 0.5em minus
  0.4em\relax Morgan Kaufmann, 1997.

\bibitem{DP:00}
D.~Peleg, \emph{Distributed Computing. A Locality-Sensitive Approach}, ser.
  Monographs on Discrete Mathematics and Applications.\hskip 1em plus 0.5em
  minus 0.4em\relax SIAM, 2000.

\bibitem{RMJ-PT:19}
R.~M. Jungers and P.~Tabuada, ``Non-local linearization of nonlinear
  differential equations via polyflows,'' in \emph{{A}merican {C}ontrol
  {C}onference}, Philadelphia, PA, 2019, pp. 1906--1911.

\bibitem{MH-JC:19-cdc}
M.~Haseli and J.~Cort\'es, ``Efficient identification of linear evolutions in
  nonlinear vector fields: {K}oopman invariant subspaces,'' in \emph{{IEEE}
  Conf.\ on Decision and Control}, Nice, France, Dec. 2019, pp. 1746--1751.

\end{thebibliography}

\appendices
\setcounter{equation}{0}
\renewcommand{\theequation}{\thesection.\arabic{equation}}
\section{Properties of the SSD Algorithm}\label{app:ssd}

Here, we gather some important results
from~\cite{MH-JC:19-cdc,MH-JC:20-tac} regarding the SSD algorithm and
its output.

	
\begin{theorem}\longthmtitle{Properties of the SSD's Output~\cite{MH-JC:20-tac}}\label{t:SSD-convergence}
  Let $ \cssd = \ssd (\dx,\dy)$ be the output of the SSD algorithm
  applied on $\dx, \dy$.  Given Assumption~\ref{a:full-rank},
  \begin{enumerate}
  \item $\cssd$ is either $0$ or has full column rank and satisfies
    $\range(\dx \cssd) = \range(\dy \cssd)$;
  \item the subspace $\range(\dx \cssd)$ is maximal, in the sense
    that,
    for any matrix $E$ with $\range(\dx E) = \range(\dy E)$, we have
    $\range(\dx E) \subseteq \range(\dx \cssd)$ and $\range(E)
    \subseteq \range(\cssd)$.
  \end{enumerate}
\end{theorem}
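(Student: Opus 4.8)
The plan is to argue by induction on the iterations of Algorithm~\ref{algo:ssd}, maintaining two invariants: a rank invariant that yields~(a), and a subspace-containment invariant that yields~(b). Let $G_i$ denote the value of the variable $\cssd$ at the start of the $i$-th pass through the loop, so that $G_1 = I_{N_d}$, $G_{i+1} = G_i Z_i^A$, and by construction $A_i = \dx G_i$, $B_i = \dy G_i$.

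First I would establish the rank invariant underlying~(a): at every iteration, $A_i$, $B_i$, $G_i$, and the top block $Z_i^A$ all have full column rank. The base case holds since $A_1 = \dx$ and $B_1 = \dy$ have full column rank by Assumption~\ref{a:full-rank} and $G_1 = I_{N_d}$. The crucial step is that $Z_i^A$ inherits full column rank from $B_i$: if $Z_i^A v = 0$, then $\begin{bmatrix} Z_i^A \\ Z_i^B \end{bmatrix} v = \begin{bmatrix} 0 \\ Z_i^B v \end{bmatrix}$ lies in $\operatorname{null}([A_i,B_i])$, forcing $B_i Z_i^B v = 0$; full column rank of $B_i$ gives $Z_i^B v = 0$, and independence of the basis columns then gives $v=0$. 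Since products of full-column-rank matrices remain full column rank, $A_{i+1}=A_iZ_i^A$, $B_{i+1}=B_iZ_i^A$, and $G_{i+1}=G_iZ_i^A$ all have full column rank, and the returned matrix is either $0$ or some $G_i$, proving the first half of~(a). The range equality comes from decoding the termination test $\rown(Z_i^A)\le\coln(Z_i^A)$, which rewrites as $\rank{[A_i,B_i]}\le\coln(A_i)$; together with the full column rank of $A_i$, $B_i$ this forces $\range(A_i)=\range([A_i,B_i])=\range(B_i)$, and substituting $A_i=\dx\cssd$, $B_i=\dy\cssd$ gives $\range(\dx\cssd)=\range(\dy\cssd)$. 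A separate easy observation (the column count strictly decreases on any non-terminating pass) guarantees the loop halts, so $\cssd$ is well defined.

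For~(b) the heart of the matter is one lemma: if $\range(A_iW)=\range(B_iW)$ for some $W$, then $\range(W)\subseteq\range(Z_i^A)$. I would prove it by writing the inclusion $\range(A_iW)\subseteq\range(B_iW)$ as $A_iW=B_iWP$ for some $P$, i.e. $[A_i,B_i]\begin{bmatrix} W \\ -WP \end{bmatrix}=0$, so the columns of $\begin{bmatrix} W \\ -WP \end{bmatrix}$ lie in $\operatorname{null}([A_i,B_i])=\range\!\left(\begin{bmatrix} Z_i^A \\ Z_i^B \end{bmatrix}\right)$; hence $\begin{bmatrix} W \\ -WP \end{bmatrix}=\begin{bmatrix} Z_i^A \\ Z_i^B \end{bmatrix}R$ for some $R$, and reading off the top block gives $W=Z_i^A R$, i.e. $\range(W)\subseteq\range(Z_i^A)$. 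With this lemma, I would show by induction that any $E$ with $\range(\dx E)=\range(\dy E)$ satisfies $\range(E)\subseteq\range(G_i)$ at every iteration. The base case $\range(E)\subseteq\range(G_1)=\real^{N_d}$ is immediate; for the step, full column rank of $G_i$ lets me write $E=G_iW$, so $\dx E=A_iW$ and $\dy E=B_iW$, whence the lemma gives $\range(W)\subseteq\range(Z_i^A)$ and therefore $\range(E)=\range(G_iW)\subseteq\range(G_iZ_i^A)=\range(G_{i+1})$. Taking $i$ to be the terminal iteration yields $\range(E)\subseteq\range(\cssd)$ (in the ``return $0$'' branch the trivial null space forces $W=0$, hence $E=0$, consistent with $\cssd=0$). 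The remaining inclusion is then immediate: $\range(E)\subseteq\range(\cssd)$ means $E=\cssd W'$, so $\range(\dx E)=\range(\dx\cssd W')\subseteq\range(\dx\cssd)$.

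The main obstacle I anticipate is the lemma at the core of~(b): correctly translating the symmetry condition $\range(A_iW)=\range(B_iW)$ into membership in $\operatorname{null}([A_i,B_i])$ and then isolating the top block. The subtlety is that only the \emph{one-sided} factorization $A_iW=B_iWP$ is available, but this already places the stacked vector in the kernel, which is exactly what the argument needs; care is also required so the reasoning degrades gracefully when the null space is trivial. A secondary technical point is the joint rank induction in~(a), where full column rank of $Z_i^A$ must be extracted from that of $B_i$ rather than assumed outright.
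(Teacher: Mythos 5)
Your proof is correct, but note that this paper does not actually prove Theorem~\ref{t:SSD-convergence}: it is recalled in the appendix as an imported result from~\cite{MH-JC:20-tac}, so there is no in-paper argument to compare yours against. Judged on its own terms, your reconstruction from Algorithm~\ref{algo:ssd} is sound. The rank induction for~(a) correctly extracts full column rank of $Z_i^A$ from that of $B_i$ via the null-space membership of $[0;\,Z_i^Bv]$, and the decoding of the termination test $\rown(Z_i^A)\le\coln(Z_i^A)$ as $\rank{[A_i,B_i]}\le\coln(A_i)$, hence $\range(A_i)=\range(B_i)$, is exactly the right way to obtain $\range(\dx\cssd)=\range(\dy\cssd)$. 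The core lemma for~(b) — that $\range(A_iW)=\range(B_iW)$ forces the stacked matrix $[W;\,-WP]$ into $\operatorname{null}([A_i,B_i])$ and hence $\range(W)\subseteq\range(Z_i^A)$ — is also correct, and your induction $\range(E)\subseteq\range(G_i)$ propagates it cleanly to the terminal iterate, with the degenerate (trivial null space, $\cssd=0$) branch handled consistently. This is essentially the invariance/maximality argument one would expect the cited reference to use; the one point worth making explicit if you write this up formally is that the existence of $P$ with $A_iW=B_iWP$ uses only the one-sided inclusion $\range(A_iW)\subseteq\range(B_iW)$, which is all the lemma needs, as you observe.
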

\vspace*{1pt}

Theorem~\ref{t:SSD-convergence} provides the symmetric range equality
needed for identification of Koopman-invariant subspaces. Moreover, it
ensures the maximality of such symmetric subspaces. 

\begin{lemma}\longthmtitle{Monotonicity of SSD Output with Respect to
    Data Addition~\cite{MH-JC:20-tac}}\label{l:ssd-monotone}
  Let $D(X), D(Y)$ and $D(\hat{X}),D(\hat{Y})$ be two pairs of data
  snapshots such that
  \begin{align}\label{eq:data-monotone}
    \rows \big( [D(X), D(Y)] \big) \subseteq \rows
    \big([D(\hat{X}),D(\hat{Y})] \big),
  \end{align}
  and for which Assumption~\ref{a:full-rank} holds. Then
  \begin{align*}
    \range( \ssd([D(\hat{X}),D(\hat{Y})])) \subseteq \range(\ssd(D(X),
    D(Y))) .
  \end{align*}
\end{lemma}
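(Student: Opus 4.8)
The plan is to leverage the maximality property of the SSD output established in Theorem~\ref{t:SSD-convergence}(b). Write $C := \ssd(D(X), D(Y))$ and $\hat{C} := \ssd(D(\hat{X}), D(\hat{Y}))$ for the two outputs; the goal is to show $\range(\hat{C}) \subseteq \range(C)$. The key idea is that $\hat{C}$, by virtue of being the SSD output on the larger dataset, satisfies the symmetric range equality there, and that this equality \emph{descends} to the smaller dataset. Once we know $\hat{C}$ satisfies the symmetric range equality with respect to $(D(X), D(Y))$, the maximality of $C$ among all such matrices forces $\range(\hat{C}) \subseteq \range(C)$. The case $\hat{C} = 0$ is immediate since $\range(0) = \{0\}$, so I would assume $\hat{C} \neq 0$, whence it has full column rank by Theorem~\ref{t:SSD-convergence}(a).

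Next I would establish the descent step. By Theorem~\ref{t:SSD-convergence}(a) applied to the larger dataset, $\range(D(\hat{X})\hat{C}) = \range(D(\hat{Y})\hat{C})$. Since $D(\hat{X})$ and $D(\hat{Y})$ have full column rank by Assumption~\ref{a:full-rank} and $\hat{C}$ has full column rank, both $D(\hat{X})\hat{C}$ and $D(\hat{Y})\hat{C}$ have full column rank with identical range, so there exists a nonsingular square matrix $K$ with
\begin{align*}
  D(\hat{Y})\hat{C} = D(\hat{X})\hat{C} K .
\end{align*}
Reading this identity row by row and retaining only the rows corresponding to $[D(X), D(Y)]$---which is legitimate because $\rows([D(X),D(Y)]) \subseteq \rows([D(\hat{X}),D(\hat{Y})])$ and each row of $[D(X),D(Y)]$ carries its $X$- and $Y$-entries as a matched pair---yields $D(Y)\hat{C} = D(X)\hat{C} K$. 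As $K$ is nonsingular, this gives $\range(D(X)\hat{C}) = \range(D(Y)\hat{C})$, i.e., $\hat{C}$ satisfies the symmetric range equality for the smaller dataset.

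Finally, I would invoke Theorem~\ref{t:SSD-convergence}(b) for the pair $(D(X), D(Y))$ with the choice $E = \hat{C}$: since $\range(D(X)\hat{C}) = \range(D(Y)\hat{C})$, maximality yields $\range(\hat{C}) \subseteq \range(C)$, which is precisely the claim. I expect the main obstacle to be making the row-wise restriction fully rigorous, namely justifying that an equality of the form $D(\hat{Y})\hat{C} = D(\hat{X})\hat{C}K$ holding over all rows of the larger dataset remains valid when we keep only the subset of rows shared with the smaller dataset. This mirrors the standard row-wise slicing argument used repeatedly elsewhere in the paper (e.g., in Proposition~\ref{p:symmetric-sub-directed-path}), and the only care needed is to track that the row inclusion is stated on the concatenated matrix $[D(X),D(Y)]$, so that the $X$- and $Y$-parts are restricted consistently and the descended equation is the symmetric one required by Theorem~\ref{t:SSD-convergence}(b).
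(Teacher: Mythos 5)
Your proof is correct. Note that the paper itself states Lemma~\ref{l:ssd-monotone} without proof, citing~\cite{MH-JC:20-tac}, so there is no in-paper argument to compare against; but your route---use Theorem~\ref{t:SSD-convergence}(a) to get the symmetric range equality on the larger dataset, realize it as a matched linear relation $D(\hat{Y})\hat{C} = D(\hat{X})\hat{C}K$ with $K$ nonsingular, restrict row-wise to the smaller dataset via~\eqref{eq:data-monotone}, and conclude by the maximality in Theorem~\ref{t:SSD-convergence}(b) with $E=\hat{C}$---is exactly the row-slicing-plus-maximality technique this paper deploys in Proposition~\ref{p:inclusion-ssd-subspace} and Theorem~\ref{t:globally-reachable-ssd-sub}, and every step (including the explicit introduction of $K$ to make the row restriction legitimate, which the paper itself sometimes elides) is sound.
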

\vspace*{1pt}

For $\cssd \neq 0$, we define the reduced dictionary
\begin{align}\label{eq:ssd-newdictionary}
  \tilde{D}(x) = D(x) \cssd, \quad \forall x \in \Mc.
\end{align}
Under some mild conditions on data sampling, the identified subspace
$\Span(\tilde{D}(x))$ converges to the \emph{maximal Koopman-invariant
  subspace} in $\Span(D(x))$ almost surely~\cite[Theorem
5.8]{MH-JC:20-tac} as the number of samples goes to infinity.

Based on Theorem~\ref{t:SSD-convergence}(a) and
Assumption~\ref{a:full-rank}, we deduce there exists a nonsingular
square matrix $\Kssd$ such that
\begin{align}\label{eq:ssd-newdictionary-data}
  \tilde{D}(Y) = \tilde{D}(X)\Kssd.
\end{align}
The eigendecomposition of $\Kssd$ specifies the functions in
$\Span(\tilde{D})$ that evolve linearly in time, i.e., given $\Kssd w
= \lambda w$ with $\lambda \in \cplx$ and $w \in \cplx^{\coln(\Kssd)}
\setminus \{0\}$, we have
\begin{align}\label{eq:linear-evolutions-in-new-dictionary}
  \tilde{D}(Y)w = \lambda \tilde{D}(X)w.
\end{align}
The next result 
shows that the eigendecomposition of $\Kssd$ captures all the
functions in the span of the original dictionary that evolve linearly
in time according to the available data.

\begin{theorem}\longthmtitle{Identification of Linear Evolutions using
    the SSD Algorithm\cite{MH-JC:19-cdc}}\label{t:SSD-linear-evolutions}
  Under Assumption~\ref{a:full-rank}, $\Kssd w = \lambda w$ for some
  $\lambda \in \cplx$ and $w \in \cplx ^{\coln(\Kssd) \setminus
    \{0\}}$ iff there exists $v \in \cplx ^{N_d}$ such that $\dy v =
  \lambda \dx v$. In addition $v=\cssd w$.
\end{theorem}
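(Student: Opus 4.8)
The plan is to prove both implications by leaning on the defining relation of $\Kssd$, namely $\dy \cssd = \dx \cssd \Kssd$ (which is just~\eqref{eq:ssd-newdictionary-data} rewritten via $\tilde D(X)=\dx\cssd$ and $\tilde D(Y)=\dy\cssd$), together with two full-rank facts: $\cssd$ has full column rank and, consequently, so does $\dx\cssd$ (since $\dx$ has full column rank by Assumption~\ref{a:full-rank} and $\cssd$ does by Theorem~\ref{t:SSD-convergence}(a)). Throughout I treat the nontrivial case $v\neq 0$, and I would first record the useful observation that $v\neq 0$ forces $\lambda\neq 0$: if $\lambda=0$ then $\dy v = 0$, and the full column rank of $\dy$ would give $v=0$, a contradiction.

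For the forward implication, suppose $\Kssd w=\lambda w$ with $w\neq 0$. Setting $v:=\cssd w$, I would compute directly
\[
\dy v = \dy \cssd w = \dx \cssd \Kssd w = \lambda \, \dx \cssd w = \lambda \, \dx v ,
\]
and note $v\neq 0$ because $\cssd$ has full column rank and $w\neq 0$. This settles the forward direction and already exhibits the claimed identity $v=\cssd w$.

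The backward implication is where the real work lies. Given a nonzero $v\in\cplx^{N_d}$ with $\dy v=\lambda\dx v$, the difficulty is that $v,\lambda$ are complex whereas $\cssd$ is a \emph{real} matrix, so the maximality statement of Theorem~\ref{t:SSD-convergence}(b)---phrased for real matrices $E$ satisfying the symmetric range equality $\range(\dx E)=\range(\dy E)$---does not apply to $v$ directly. The trick I would use is to split $v=v_r+i v_i$ and $\lambda=\alpha+i\beta$ into real and imaginary parts and assemble the real matrix $E=[v_r,v_i]\in\real^{N_d\times 2}$. Equating real and imaginary parts of $\dy v=\lambda\dx v$ yields $\dy v_r=\alpha\,\dx v_r-\beta\,\dx v_i$ and $\dy v_i=\beta\,\dx v_r+\alpha\,\dx v_i$, which repackage into the single identity
\[
\dy E = \dx E R, \qquad R=\begin{bmatrix}\alpha & \beta \\ -\beta & \alpha\end{bmatrix}.
\]
Since $\det R=\alpha^2+\beta^2=|\lambda|^2\neq 0$ (using $\lambda\neq 0$), $R$ is invertible, so $\range(\dy E)=\range(\dx E R)=\range(\dx E)$; that is, $E$ satisfies the symmetric range equality. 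Now Theorem~\ref{t:SSD-convergence}(b) applies and gives $\range(E)\subseteq\range(\cssd)$, hence $v_r,v_i$ and therefore $v=v_r+i v_i$ lie in the complex span of the columns of $\cssd$. By full column rank of $\cssd$ there is a unique $w\in\cplx^{\coln(\Kssd)}$ with $v=\cssd w$, and $w\neq 0$ since $v\neq 0$. Substituting into $\dy v=\lambda\dx v$ and using $\dy\cssd=\dx\cssd\Kssd$ gives $\dx\cssd\Kssd w=\lambda\,\dx\cssd w$, i.e., $\dx\cssd(\Kssd w-\lambda w)=0$; the full column rank of $\dx\cssd$ then forces $\Kssd w=\lambda w$, establishing both the eigenrelation and the identity $v=\cssd w$.

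The main obstacle is precisely the real-versus-complex mismatch addressed above: one must convert the single complex eigen-relation into a real two-column problem and verify the symmetric range equality, so that the maximality property of Theorem~\ref{t:SSD-convergence}(b) becomes the right lever. Everything else is routine linear algebra built on the full-column-rank guarantees of Assumption~\ref{a:full-rank} and Theorem~\ref{t:SSD-convergence}(a), with the degenerate case $\lambda=0$ (equivalently $v=0$) excluded from the start.
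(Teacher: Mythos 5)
Your proof is correct. Note that the paper itself gives no proof of this theorem---it is imported verbatim from the cited reference \cite{MH-JC:19-cdc}---so there is no in-paper argument to compare against; what you have written is a valid, self-contained derivation from Assumption~\ref{a:full-rank}, Theorem~\ref{t:SSD-convergence}, and the defining relation $\dy\cssd=\dx\cssd\Kssd$. The forward direction is the routine computation, and you correctly identify the only nontrivial obstacle in the converse: Theorem~\ref{t:SSD-convergence}(b) is a statement about matrices satisfying a symmetric range equality, so the complex eigenvector must be realified. Your choice $E=[v_r,v_i]$ with $\dy E=\dx E R$ and $R$ the invertible rotation-scaling matrix (invertible precisely because $\lambda\neq 0$, which you rightly deduce from $v\neq 0$ and the full column rank of $\dy$) establishes $\range(\dx E)=\range(\dy E)$ without needing $E$ to have full column rank, so the maximality conclusion $\range(E)\subseteq\range(\cssd)$ applies even when $v_r$ and $v_i$ are parallel; the final step via the full column rank of $\dx\cssd$ is also sound. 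The only degenerate case left implicit is $\cssd=0$, for which $\Kssd$ is not defined by the paper in the first place, so nothing is lost.
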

\vspace*{1pt}

Theorem~\ref{t:SSD-linear-evolutions} shows that every function $\phi$
in the span of $D$ that satisfies $\phi(x^+) = \phi(T(x)) =\lambda
\phi(x)$ for every $x \in \rows(X)$ is in turn in the span of
$\tilde{D}$ and corresponds to an eigenvector of $\Kssd$.  This does
necessarily mean that $\phi$ is an eigenfunction of the Koopman
operator since its temporal linear evolution might not hold for all $x
\in \Mc$. Under reasonable assumptions on data sampling, the
identified functions are Koopman eigenfunctions almost
surely~\cite[Theorem 5.7]{MH-JC:20-tac}.




\vspace*{-3ex}

\begin{IEEEbiography}[{\includegraphics[width=1in,height=1.25in,clip,keepaspectratio]{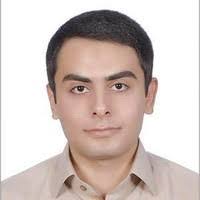}}]{Masih
    Haseli} was born in Kermanshah, Iran in 1991. He received the
  B.Sc. degree, in 2013, and M.Sc. degree, in 2015, both in Electrical
  Engineering from Amirkabir University of Technology (Tehran
  Polytechnic), Tehran, Iran. In 2017, he joined the University of
  California, San Diego to pursue the Ph.D. degree in Mechanical and
  Aerospace Engineering.  His research interests include system
  identification, nonlinear systems, network systems, data-driven
  modeling and control, and distributed computing.  Mr. Haseli was the
  recipient of the bronze medal in Iran's national mathematics
  competition in 2014.
\end{IEEEbiography}

\vspace*{-4ex}

\begin{IEEEbiography}[{\includegraphics[width=1in,height=1.25in,clip,keepaspectratio]{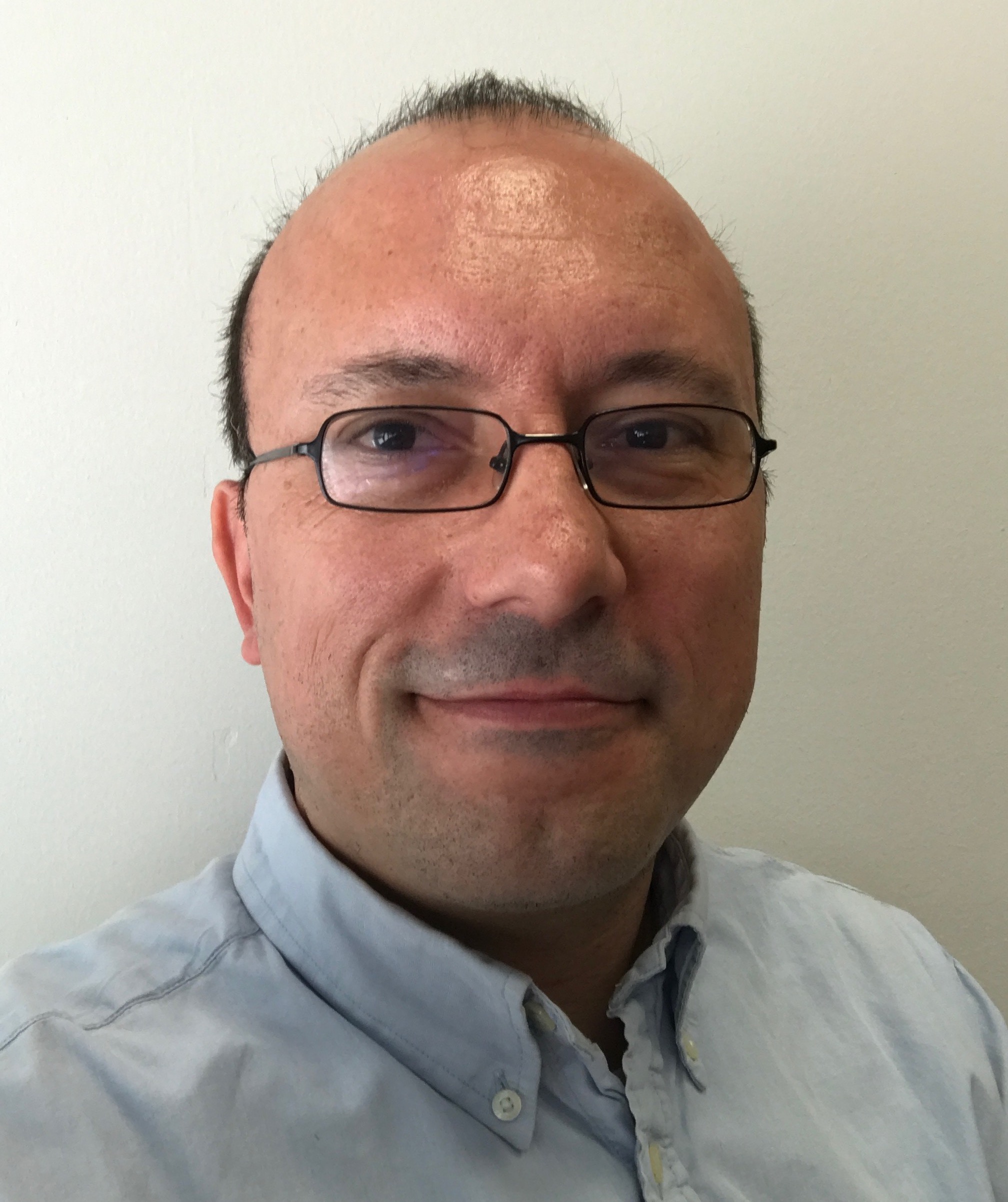}}]{Jorge
    Cort\'{e}s}
  (M'02, SM'06, F'14) received the Licenciatura degree in mathematics
  from Universidad de Zaragoza, Zaragoza, Spain, in 1997, and the
  Ph.D. degree in engineering mathematics from Universidad Carlos III
  de Madrid, Madrid, Spain, in 2001. He held postdoctoral positions
  with the University of Twente, Twente, The Netherlands, and the
  University of Illinois at Urbana-Champaign, Urbana, IL, USA. He was
  an Assistant Professor with the Department of Applied Mathematics
  and Statistics, University of California, Santa Cruz, CA, USA, from
  2004 to 2007. He is currently a Professor in the Department of
  Mechanical and Aerospace Engineering, University of California, San
  Diego, CA, USA.
  He is a Fellow of IEEE and SIAM. 
  His current research interests include distributed control and
  optimization, network science, nonsmooth analysis, reasoning and
  decision making under uncertainty, network neuroscience, and
  multi-agent coordination in robotic, power, and transportation
  networks.
\end{IEEEbiography}

\end{document}